\documentclass[11pt]{article}
\usepackage[colorlinks,hypertexnames=false]{hyperref}
\usepackage{amsmath} % AMS Math Package
\usepackage{amsthm} % Theorem Formatting
\usepackage{thmtools}
\usepackage{amssymb}	% Math symbols such as \mathbb
\usepackage{graphicx} % Allows for eps images
\usepackage{multicol} % Allows for multiple columns
\usepackage{multirow}
\usepackage{color}
\usepackage{tikz}
\usetikzlibrary{decorations.pathreplacing,arrows.meta,positioning,calc,fit}

\usepackage[dvips,letterpaper,margin=1in,bottom=1in]{geometry}
\usepackage[capitalize]{cleveref}

\usepackage[utf8]{inputenc}
\usepackage[english]{babel}
\usepackage{mathtools}

\newcommand{\E}{\mathbb{E}}

\usepackage{interval}
\intervalconfig{soft open fences}

\newtheorem{theorem}{Theorem}[section]

\newtheorem{lemma}[theorem]{Lemma}
\newtheorem{corollary}[theorem]{Corollary}
\newtheorem{proposition}[theorem]{Proposition}

\newtheorem{claim}[theorem]{Claim}

\usepackage{algorithm}
\usepackage{algpseudocode}

\usepackage{tabularx}
\usepackage{booktabs}
\usepackage{threeparttable}

\usepackage{adjustbox}

\usepackage{footnotehyper} 
\makesavenoteenv{table}

\usepackage{tikz}

\usepackage{enumitem}
\allowdisplaybreaks
\hypersetup{
 pdftitle={The Exact Approximation Ratio of the Optimal Fixed-Price Mechanism in Bilateral Trade},
 pdfauthor={Tao Jiang, Minbo Gao, Shaowei Cai},
 pdfsubject={Exact worst-case welfare ratio of deterministic fixed-price bilateral trade},
 pdfkeywords={bilateral trade, fixed-price mechanisms, welfare approximation, convex variational problems}
}

\begin{document}

% Problem-specific macros.
\newcommand{\Pp}{\mathbb{P}}
\newcommand{\R}{\mathbb{R}}
\newcommand{\one}{\mathbf{1}}
\newcommand{\OPT}{\mathrm{OPT}}
\newcommand{\FP}{\mathrm{FP}}
\newcommand{\SW}{\mathrm{SW}}
\newcommand{\dd}{\,\mathrm{d}}
\newcommand{\K}{\mathcal{K}}
\newcommand{\Acal}{\mathcal{A}}
\newcommand{\Qcal}{\mathcal{Q}}
\newcommand{\Ucal}{\mathcal{U}}
\newcommand{\Ycal}{\mathcal{Y}}

\title{The Exact Approximation Ratio of the Optimal Fixed-Price Mechanism in Bilateral Trade}
\author{%
Tao Jiang \quad Minbo Gao \quad Shaowei Cai\\[-0.15em]
{\normalsize Key Laboratory of System Software (Chinese Academy of Sciences)}\\[-0.15em]
{\normalsize State Key Laboratory of Computer Science}\\[-0.15em]
{\normalsize Institute of Software, Chinese Academy of Sciences}\\[-0.15em]
{\normalsize School of Computer Science and Technology, University of Chinese Academy of Sciences}\\[-0.15em]
{\normalsize Beijing, China}\\[-0.15em]
{\normalsize \{jiangt,gaomb,caisw\}@ios.ac.cn}%
}
\date{}

\maketitle

\begin{abstract}
\normalsize
Prior work placed the worst-case welfare ratio of the optimal fixed-price mechanism for bilateral trade in the interval $[0.7292,0.73805]$. We determine the ratio exactly as
\[
\alpha_{\mathrm{FP}}=0.7380243357\ldots,
\]
characterized by the unique root of an explicit one-dimensional equation.

The proof first saturates a Wronskian constraint on the seller put and buyer call transforms. In inverse-call coordinates, the resulting extremal problem becomes a control problem whose logarithmic formulation is strictly convex. Its optimizer has one interior arc followed by the boundary $q=1$, and the trajectory can be integrated explicitly. We then realize this optimizer by a bounded seller and buyer body together with a vanishing buyer mass at an escaping value, obtaining a matching limiting family. Every fixed instance admits an optimal price, but the worst-case distributional infimum is not attained. The proof applies to arbitrary Borel distributions with finite first moment, including atomic and unbounded distributions.
\end{abstract}

\thispagestyle{empty}
\clearpage
\tableofcontents
\clearpage

\section{Introduction}\label{sec:intro}

Bilateral trade is the canonical two-agent market: one seller initially owns an indivisible item, and one buyer may receive it. The seller's value for retaining the item is $S$, the buyer's value for receiving it is $B$, and the first-best allocation trades exactly when $B\ge S$. The impossibility theorem of Myerson and Satterthwaite shows that full efficiency is incompatible with the standard incentive and budget requirements in general~\cite{MyersonSatterthwaite1983}. Under deterministic dominant-strategy incentive compatibility, ex post individual rationality, and strong budget balance, fixed prices form the canonical robust mechanism class~\cite{HagertyRogerson1987,CaiWu2023}. We study the full-prior version: the designer knows the independent distributions of $S$ and $B$, selects one price before values are realized, and trades exactly when
\[
S\le p\le B.
\]

Let $F_S,F_B$ be independent Borel probability distributions on $[0,\infty)$ with finite first moments. Define
\begin{align*}
\OPT(F_S,F_B)&:=\E[\max\{S,B\}],\\
\SW(p;F_S,F_B)
&:=\E\!\left[B\one\{S\le p\le B\}+S\one\{\text{not }(S\le p\le B)\}\right],\\
\FP(F_S,F_B)&:=\max_{p\ge0}\SW(p;F_S,F_B).
\end{align*}
The maximum is always attained; see \cref{lem:price-attainment}. The object of study is
\[
\alpha_{\mathrm{FP}}
:=\inf_{F_S,F_B}\frac{\FP(F_S,F_B)}{\OPT(F_S,F_B)},
\]
where the infimum ranges over admissible pairs with $\OPT>0$.

\subsection{Main theorem}

For $1<r<3$, define
\begin{equation}\label{eq:theta-intro}
\Theta(r):=\int_0^{(r-1)/r}\frac{\dd x}{x^2-x+\frac{1}{r+1}}.
\end{equation}

\begin{theorem}[Exact fixed-price ratio]\label{thm:main}
There is a unique $r_\star\in(1,3)$ satisfying
\begin{equation}\label{eq:root-intro}
e^{-\Theta(r_\star)/2}
=(r_\star-1)\bigl(\Theta(r_\star)-r_\star+1\bigr).
\end{equation}
Moreover,
\begin{equation}\label{eq:alpha-intro}
\boxed{\alpha_{\mathrm{FP}}=\frac{r_\star+1}{\Theta(r_\star)+2}}
=0.7380243357\ldots.
\end{equation}
The lower bound holds for every admissible pair $(F_S,F_B)$. Conversely, there is an explicit bounded seller distribution $F_{S_\star}$ and an explicit family of finite-mean buyer distributions $(F_{B_\varepsilon})_{\varepsilon>0}$ whose welfare ratios converge to \cref{eq:alpha-intro}. No admissible finite-mean pair attains the infimum.
\end{theorem}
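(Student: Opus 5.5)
We plan to prove \cref{thm:main} in four stages, following the outline of the abstract: reduce to transforms, relax, solve a convex control problem, and realize the optimizer.

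\textbf{Stage 1: transform formulation.} We normalize by scaling and write everything through the seller put transform $P(p)=\E[(p-S)^+]$ and the buyer call transform $C(p)=\E[(B-p)^+]$. Then
\[
\OPT=\E[S]+\E[(B-S)^+],\qquad \SW(p)-\E[S]=\E[(B-S)\one\{S\le p\le B\}].
\]
The gain from trade splits by integrating over the threshold. Writing $\E[(B-S)^+]=\int_0^\infty F_S(t)\,(1-F_B(t))\dd t$, we obtain an expression in $P'$ and $-C'$. The trade term at price $p$ equals
\[
\Pp(S\le p)\,C(p)+\Pp(B\ge p)\,P(p)=P'(p)C(p)-C'(p)P(p),
\]
which is a Wronskian-type quantity. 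The fixed-price guarantee $\FP\ge\alpha\,\OPT$ then becomes: if $\max_p\,(P'C-C'P)\le \lambda$, bound $\E[S]+\int P'(-C')$ from below. We first prove \cref{lem:price-attainment}-style compactness, so that we can work with arbitrary Borel laws having finite means. We approximate general laws by bounded, absolutely continuous ones, using monotone convergence for $\OPT$ and upper semicontinuity of $\SW$ in $p$.

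\textbf{Stage 2: saturation and change of variables.} For the lower bound, we relax to the extremal problem. Maximize $\OPT$ subject to $P'C-C'P\le 1$, with $P$ convex increasing, $C$ convex decreasing, $P(0)=0$, and the mean constraint $\E[S]\le P(\cdot)$-linked terms. A perturbation argument shows that the constraint can be assumed tight, i.e.\ $P'C-C'P=1$, on the relevant interval. We then pass to inverse-call coordinates: parametrize by $c=C(p)$ and let $q$ be the ratio $-C'$ (or $F_S$) expressed as a function of $c$. The Wronskian identity makes $P$ determined by $q$ through a linear ODE, which introduces the factor $e^{-\Theta/2}$. The objective becomes an integral functional in $(c,q)$ with constraint $q\le 1$ (probabilities). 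Taking logarithms, $u=\log(\cdot)$, should make the Lagrangian strictly convex. First-order optimality then gives an interior arc solving a Riccati-type equation; its solution is exactly the integrand $1/(x^2-x+\frac1{r+1})$ in \eqref{eq:theta-intro}. This arc is followed by the boundary arc $q=1$. Matching the junction (a transversality / smooth-fit condition) yields \eqref{eq:root-intro}, and evaluating the objective gives $(r+1)/(\Theta+2)$. Uniqueness of $r_\star$ comes from monotonicity: the left side of \eqref{eq:root-intro} is decreasing and the right side is increasing in $r$ on $(1,3)$. We would verify this by differentiating $\Theta$ under the integral.

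\textbf{Stage 3: upper bound and non-attainment.} We realize the optimizer explicitly. The seller law $F_{S_\star}$ is read off from $P$ on the bounded range. The buyer body is read off from $C$ on the interior arc. Because the boundary arc $q=1$ corresponds to mass pushed to infinity, we add buyer mass $\varepsilon$ at value $\approx c_0/\varepsilon$. We check that every price gives ratio at most $\alpha_{\FP}+o(1)$ by direct computation of $P'C-C'P$. For non-attainment: equality in the lower-bound chain forces the boundary arc to be realized with finite mean, which is impossible. Strict convexity gives uniqueness of the optimizer, so any attaining pair would have to coincide with the limit object, and that object carries infinite buyer value.

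\textbf{Main obstacle.} We expect the main obstacle to be Stage 2: rigorously justifying saturation of the Wronskian constraint for arbitrary distributions, and proving global optimality of the one-arc-then-boundary structure. Our first attempt would be a convex-duality certificate. We would exhibit an explicit multiplier $\mu(p)\ge 0$ for the constraint $P'C-C'P\le\lambda$ such that the Lagrangian inequality $\OPT-\alpha^{-1}\FP\le\int\mu\,(P'C-C'P-\lambda)\le 0$ holds pointwise after integration by parts. This would bypass existence questions in the control problem.
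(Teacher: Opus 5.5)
Your outline matches the paper's: transforms, Wronskian saturation, inverse-call coordinates, $u=\log q$, an interior arc followed by $q=1$, and a vanishing remote buyer atom. However, wherever you commit to a specific mechanism it is misidentified, and several of these errors would make the argument fail.

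\emph{The root equation.} You never linearize the ratio. Since $m=\E[S]$ appears in both $\FP=m+M$ and $\OPT=m+I$, the paper rewrites $\FP\ge(1+h)^{-1}\OPT$ as $I-hm\le(1+h)M$. It then normalizes $M=1$ and reduces the guarantee to $V(h)\le h$. The switch condition $\Delta(s)=0$, the terminal condition $N(b)=b$ and the integrated interior arc together give only $h=H_-(r)$, which selects the optimizer for a given $h$. Equation \eqref{eq:root-intro} is the separate fixed-point condition $V(h)=h$, i.e.\ $H_+(r)=H_-(r)$. It is not a junction or smooth-fit condition, and your Stage~2 (``maximize $\OPT$ subject to \dots\ and a mean constraint'') contains nothing that pins down $r_\star$. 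Likewise, $e^{-\Theta/2}$ does not come from the seller ODE, whose solution $\widehat A=C\int_0^pC^{-2}$ has no exponential. It comes from $s=e^{T(r)}$ with $T=\Theta/2-\log r$, obtained by integrating $\dd\log c/\dd x=x/d_r(x)$.

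\emph{Saturation.} This is not a perturbation statement. You need the global comparison $(A/C)'\le C^{-2}$, hence $A\le\widehat A$, plus a proof that $\widehat A$ is itself a put transform: $\dd\widehat A'=R\,\dd C'\ge0$ and $\widehat A'\le1/C\le1$. The slope bound needs $C\ge1$. That is why saturation runs only on $[0,p_1]$ with $C(p_1)=1$, leaving the tail term $\int_{p_1}^\infty\Pp(B>p)\dd p=1$ in $\widetilde I$.

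\emph{Monotonicity.} The control reduction discards monotonicity of $q=-C'$ as a function of $c$. The convex structure is in the logarithm of this buyer survival, not of $F_S$. Tightness therefore requires checking that the optimizer $q_\star$ is nondecreasing and that the induced $F_\star$ is a CDF, which you do not address.

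\emph{What escapes.} The boundary arc $q=1$ is the low-price interval $[0,b-s]$ on which the bounded buyer body has survival one, and a bounded distribution realizes it. What escapes is the unit $C(p_1)=1$ at the other end of the inverse-call range, $c\downarrow1$. It is realized by mass $\varepsilon$ at exactly $1/\varepsilon$, which adds $1$ to the call transform so that $c=C_0+1$ and $F_\star c+A_\star Q_0=1$. With the same body, an atom carrying a larger escaped moment, such as $c_0/\varepsilon$ with $c_0=C(0)>1$, would raise the captured gain to about $c_0$. The ratios would then converge to $(m_\star+c_0)/(m_\star+c_0+G_\star)>\alpha_\star$.

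\emph{Non-attainment.} For the same reason, your argument looks in the wrong place: nothing about the $q=1$ arc is impossible. The actual contradiction is that equality throughout the chain forces $q=q_\star$ a.e., with $q_\star(c)\to0$ as $c\downarrow1$. But any genuine buyer with $C(p_1)=1$ has $\Pp(B>p_1)>0$, so its control stays bounded away from zero near $c=1$.

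\emph{Fallback certificate.} Your duality certificate in $(P,C)$ does not work as stated, because the constraint $P'C-C'P\le\lambda$ is bilinear and hence nonconvex. Convexity appears only after the seller is eliminated by saturation and the action is written as a sum of exponentials of linear functionals of $u=\log q$.
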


The constant is numerically close to the previously best upper bound. The theorem supplies an analytic characterization of the continuous extremal problem, a certificate of global optimality, and a structural description of the limiting hard instances.

\subsection{Conceptual contributions}

\paragraph{Wronskian saturation.}
Let
\[
A(p)=\E[(p-S)_+],\qquad C(p)=\E[(B-p)_+]
\]
be the seller put and buyer call transforms. At almost every price, the captured gain from trade is the Wronskian $A'C-AC'$. After normalizing its maximum to one,
\[
\left(\frac{A}{C}\right)'\le\frac1{C^2}.
\]
For fixed $C$, the extremal seller is obtained by saturating this slope constraint pointwise:
\[
\widehat A(p)=C(p)\int_0^p\frac{\dd t}{C(t)^2}.
\]
This replacement increases first-best gains, decreases the seller baseline mean, and preserves the normalized fixed-price cap. The reduction is performed directly at the transform level.

\paragraph{Convexity of the logarithmic control.}
In inverse-call coordinates, the reduced problem has a control $q\in(0,1]$. Writing $q=e^u$ transforms its fixed-endpoint action into
\begin{equation}\label{eq:intro-log-action}
h\int (1-c^{-2})e^{-u(c)}\dd c
+\iint_{s\le c}\frac{e^{u(s)-u(c)}}{c^2}\dd s\dd c.
\end{equation}
Each term is the exponential of a linear functional of $u$, and the first term makes the action strictly convex. The KKT system is therefore globally sufficient. A crossing argument forces a single transition from an interior arc to the boundary $q=1$.

\paragraph{Escape of first moment.}
The optimizing control is realized by bounded seller and buyer components, while tightness is obtained by appending buyer mass $\varepsilon$ at value $1/\varepsilon$. This mass vanishes under weak convergence but contributes one unit to the first moment. The construction gives a continuous explanation for the remote buyer support points observed in finite mathematical programs~\cite{GiambartolomeiDeKeijzer2026}.

\subsection{Related work and positioning}\label{sec:related-work}

The fixed-price welfare ratio has been tightened through a sequence of analytic and computational advances. Blumrosen and Dobzinski obtained a $1-1/e$ guarantee~\cite{BlumrosenDobzinski2021}; Kang, Pernice, and Vondr\'ak improved this guarantee in the asymmetric full-prior setting~\cite{KangPerniceVondrak2022}. Liu, Ren, and Wang used dynamic programming to obtain a $0.71$ lower bound and a $0.7381$ upper bound~\cite{LiuRenWang2023}. Cai and Wu obtained $0.72$ and $0.7381$ through an infinite-dimensional factor-revealing program and convergent finite discretizations~\cite{CaiWu2023}. Giambartolomei and de Keijzer subsequently narrowed the interval to $[0.7292,0.73805]$ using mathematical programming and explicit hard instances~\cite{GiambartolomeiDeKeijzer2026}.

\begin{table}[t]
\centering
\setlength{\tabcolsep}{3.5pt}
\caption{Recent progress toward the exact asymmetric full-prior fixed-price welfare ratio.}\label{tab:related-work}
\begin{tabularx}{\textwidth}{@{}>{\raggedright\arraybackslash}p{0.235\textwidth}>{\raggedright\arraybackslash}p{0.135\textwidth}>{\raggedright\arraybackslash}p{0.135\textwidth}>{\raggedright\arraybackslash}X@{}}
\toprule
Work & Lower bound & Upper bound & Main method \\
\midrule
Blumrosen--Dobzinski~\cite{BlumrosenDobzinski2021} & $1-1/e$ & -- & random-quantile pricing \\
Kang--Pernice--Vondr\'ak~\cite{KangPerniceVondrak2022} & $1-1/e+10^{-4}$ & -- & quantile analysis \\
Liu--Ren--Wang~\cite{LiuRenWang2023} & $0.71$ & $0.7381$ & dynamic programming \\
Cai--Wu~\cite{CaiWu2023} & $0.72$ & $0.7381$ & infinite-dimensional program \\
Giambartolomei--de Keijzer~\cite{GiambartolomeiDeKeijzer2026} & $0.7292$ & $0.73805$ & mathematical programming \\
This paper & exact & exact & saturation and convex control \\
\bottomrule
\end{tabularx}
\end{table}

\paragraph{Relation to Cai and Wu.}
Cai and Wu formulate the problem directly over the seller and buyer distributions~\cite{CaiWu2023}. We derive $V(h)$ independently, through Wronskian saturation and inverse-call coordinates. The following proposition shows that the original distributional guarantee and the control condition agree at every approximation threshold.

\begin{proposition}[Threshold equivalence]\label{prop:decision-equivalence}
For every $h>0$,
\begin{equation}\label{eq:decision-equivalence}
\alpha_{\mathrm{FP}}\ge\frac1{1+h}
\quad\Longleftrightarrow\quad
V(h)\le h,
\end{equation}
where $V(h)$ is defined in \cref{eq:V-definition}. Thus the control relaxation is exact at the level of threshold feasibility.
\end{proposition}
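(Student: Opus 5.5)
We argue both directions through a normalization of the welfare ratio in terms of the transforms $A,C$, with $V(h)$ the value of the inverse-call control problem (\cref{eq:V-definition}).

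\textbf{Common reformulation.} For an admissible pair with $\OPT>0$ we have
\[
\OPT=\E[S]+\int_0^\infty A'(p)\,(-C'(p))\dd p ,
\]
since the first-best gain is $\E[(B-S)_+]$, written via the put and call transforms. We also have $\SW(p)=\E[S]+G(p)$ with $G(p)=A'C-AC'$ almost everywhere. Hence $\FP/\OPT\ge 1/(1+h)$ is equivalent to
\[
\E\bigl[(B-S)_+\bigr]-h\max_p G(p)\le h\,\E[S].
\]
This condition is invariant under scaling, so we may normalize $\max_p G=1$. The statement $\alpha_{\mathrm{FP}}\ge 1/(1+h)$ then becomes: for every admissible pair with $\max G=1$, we have $\mathrm{Gain}-h\le h\,\E[S]$, that is, $\mathrm{Gain}-h\,\E[S]\le h$.

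\textbf{($\Leftarrow$).} Fix a pair with $\max G=1$, so that $(A/C)'\le 1/C^2$. Replace $A$ by the saturated $\widehat A$. By the saturation property from the introduction, this increases the gain, decreases the seller mean $\E[S]$, and keeps the cap at $1$. The functional $\mathrm{Gain}-h\,\E[S]$ therefore only increases.

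Next change to inverse-call coordinates. Treat $C$ as the independent variable and let $q$ encode the buyer's distribution, with the constraint $q\in(0,1]$ coming from $-C'\le1$ and $C$ convex. The saturated functional then becomes the action whose supremum is $V(h)$, so $\mathrm{Gain}-h\,\E[S]\le V(h)\le h$.

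One technicality remains: atoms and unbounded support. We handle these by approximating with distributions whose transforms are smooth and strictly positive on the relevant range. The approximation must respect lower semicontinuity of $\FP$ under this convergence (\cref{lem:price-attainment}), or be made monotone so that the inequality passes to the limit.

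\textbf{($\Rightarrow$).} Suppose $V(h)>h$. Take an admissible control $q$ with action exceeding $h$. We may choose it bounded away from $0$ and piecewise smooth by density, since the action is continuous in $u=\log q$ on such classes.

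Realize this control as the paper does for the optimizer: a bounded seller given by $\widehat A$ and a bounded buyer body, plus buyer mass $\varepsilon$ at value $1/\varepsilon$. This supplies the missing unit of first moment that the endpoint condition in the control problem encodes. For such a pair, $\mathrm{Gain}-h\,\E[S]\to$ action $>h$ as $\varepsilon\to0$, while the cap $\max G$ tends to $1$. Hence $\FP/\OPT<1/(1+h)$ for small $\varepsilon$, which contradicts $\alpha_{\mathrm{FP}}\ge1/(1+h)$.

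\textbf{Main obstacle.} The hard step is proving that the inverse-call change of variables is an exact correspondence, not just an inequality in one direction. In particular:
\begin{itemize}[nosep]
\item each saturated pair must map into the admissible control set, including the endpoint condition and the boundary behaviour at the large-$p$ end where $C\to0$;
\item each admissible control must be realizable, up to the vanishing escaping mass, by an honest finite-mean distribution pair.
\end{itemize}
The first thing to try is to verify the endpoint identities, namely the total buyer mass and the mean contributions, explicitly under the substitution $c=C(p)$. The error terms from the escaping atom should then be tracked to order $O(\varepsilon)$.
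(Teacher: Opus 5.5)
Your architecture matches the paper's: the forward direction goes through saturation and \cref{cor:universal-reduction}, and the reverse direction realizes a control and appends buyer mass $\varepsilon$ at $1/\varepsilon$. The reverse direction, however, has a genuine gap. You take an \emph{arbitrary} control with $\K_h(b,q)>h$, smooth it, bound it away from zero, and then ``realize it as the paper does.'' That realization works only for controls that are nondecreasing in $c$. The buyer survival function is $Q_0(p)=q(c(p))$ with $c(p)$ decreasing, so it is nonincreasing only if $q$ is nondecreasing (equivalently, $c(p)$ is convex). The same monotonicity is what makes the saturated seller CDF nondecreasing, via $\dd F=R(c)\,\dd c'\ge0$. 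Elements of $\Qcal_b$ are merely measurable with values in $(0,1]$; dropping monotonicity was exactly the one relaxation made in the reduction, and smoothing or bounding away from zero does not restore it. You list realizability as an open ``obstacle,'' but it is the whole content of this direction. The missing idea is to realize the \emph{maximizer} rather than a near-maximizer. By \cref{lem:free-existence} the supremum $V(h)$ is attained, and by \cref{thm:V-exact} every maximizer is the explicit one-switch trajectory of \cref{lem:trajectory-reconstruction}. That trajectory is monotone because $\frac{\dd}{\dd x}\log q(c(x))=\frac1x+\frac{1-x}{d(x)}>0$. Its action is exactly $V(h)>h$, and the paper then gets $R_h-\frac1{1+h}=\frac{h-V(h)}{(1+h)(1+V(h)+(1+h)m_h)}<0$.

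Separately, your common reformulation is off by one unit. The inequality $(1+h)(m+M)\ge m+I$ is $I-hm\le(1+h)M$, not $I-hM\le hm$. After normalizing $M=1$, the target is $I-hm\le1+h$, and the reduction yields $I-hm\le1+V(h)$; the extra $1$ is the tail $\int_{p_1}^\infty\Pp(B>p)\dd p=C(p_1)=1$. In the extremal family this unit is the escaped moment $\varepsilon\cdot(1/\varepsilon)$, so $I_\varepsilon-hm\to1+\K_h(b,q)$, not $\K_h(b,q)$. Your intermediate claims are false as stated. For example, $S\equiv0$ with $\E[B]=1$ has $M=I=1$, $m=0$ and ratio one, yet $I-hm=1$ violates both your target $I-hm\le h$ and your bound $I-hm\le V(h)$ at $h=h_\star<1$. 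The final equivalence comes out right only because the two off-by-one errors cancel. Finally, the approximation step for atoms and unbounded support is unnecessary, since \cref{lem:saturation,lem:control-representation} work directly at the transform level for arbitrary finite-mean distributions. Moreover, \cref{lem:price-attainment} gives upper semicontinuity of $g$ in the price, not any continuity of $\FP$ or $\OPT$ in the distributions. Such continuity in fact fails under weak convergence, as the escaping atom shows.
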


The forward implication follows from the universal reduction. The reverse implication is proved by realizing the optimizing control and adding the escaped-moment tail; the full proof appears in Appendix~\ref{app:threshold-equivalence}.

\paragraph{Scope.}
The theorem concerns deterministic fixed-price mechanisms in the full-prior model. More general buyer-offering mechanisms with a reserve can guarantee $0.746$ of first-best welfare~\cite{DobzinskiShaulker2026}.

\subsection{Proof overview}\label{sec:proof-overview}

\begin{figure}[t]
\centering
\resizebox{\textwidth}{!}{%
\begin{tikzpicture}[
 node distance=8mm and 9mm,
 box/.style={draw,rounded corners,align=center,minimum height=8mm,inner xsep=6pt},
 arr/.style={-{Latex[length=2.2mm]},thick}
]
\node[box] (dist) {distributions\\$(F_S,F_B)$};
\node[box,right=of dist] (pc) {put/call\\$(A,C)$};
\node[box,right=of pc] (sat) {saturated seller\\$\widetilde A$};
\node[box,right=of sat] (ctrl) {inverse-call control\\$(b,q)$};
\node[box,below=of ctrl] (log) {$u=\log q$\\strict convexity};
\node[box,left=of log] (switch) {one-switch\\optimizer};
\node[box,left=of switch] (root) {one variable\\$r_\star$};
\node[box,left=of root] (ext) {tight family\\$(S_\star,B_\varepsilon)$};
\draw[arr] (dist)--(pc);
\draw[arr] (pc)--(sat);
\draw[arr] (sat)--(ctrl);
\draw[arr] (ctrl)--(log);
\draw[arr] (log)--(switch);
\draw[arr] (switch)--(root);
\draw[arr] (root)--(ext);
\end{tikzpicture}}
\caption{The upper row gives the universal reduction. The lower row solves the control problem and reconstructs a tight distributional family.}\label{fig:pipeline}
\end{figure}

Write $m=\E[S]$, $I=\E[(B-S)_+]$, and $M=\max_p g(p)$, where $g(p)$ is the gain from trade captured at price $p$. Then
\[
\OPT=m+I,\qquad \FP=m+M,
\]
and a guarantee $(1+h)^{-1}$ is equivalent to
\begin{equation}\label{eq:overview-linear}
I-hm\le(1+h)M.
\end{equation}

\paragraph{Saturation.}
After scaling $M=1$, the inequality $(A/C)'\le C^{-2}$ is saturated for the seller transform. This increases $I-hm$ and leaves a canonical seller determined by the buyer call transform.

\paragraph{Control reduction.}
Using $c=C(p)$ as the coordinate and $q(c)=\Pr(B>p(c))$ as the control gives
\[
I-hm\le1+V(h),
\]
where $V(h)$ is the supremum of a one-dimensional action. The only relaxation is the removal of the monotonicity constraint on $q$.

\paragraph{Optimizer structure and value.}
The substitution $u=\log q$ makes the fixed-endpoint action strictly convex, so its KKT conditions characterize the global minimizer. The switching function crosses zero once, producing an interior arc followed by $q=1$. If $s$ is the switching point and $b$ the endpoint, the ratio $r=b/s$ parametrizes the trajectory. Explicit integration gives functions $H_-$ and $H_+$ such that
\[
h=H_-(r),\qquad V(h)=H_+(r).
\]
Their unique intersection yields $r_\star$, $h_\star=V(h_\star)$, and the universal bound $\alpha_{\mathrm{FP}}=(1+h_\star)^{-1}$.

\paragraph{Tightness.}
The optimizing control is nondecreasing in $c$ and therefore comes from a genuine bounded buyer body. Together with the saturated seller it satisfies a sharp gain identity. Adding probability $\varepsilon$ at buyer value $1/\varepsilon$ restores one unit of first moment, while the best captured gain converges to one. The resulting ratios converge to $(1+h_\star)^{-1}$.

\begin{figure}[t]
\centering
\begin{tikzpicture}[x=1cm,y=1cm,>=Latex]
 \draw[->] (0,0)--(5.6,0) node[right] {$c$};
 \draw[->] (0,0)--(0,2.5) node[above] {$q_\star(c)$};
 \draw[thick] plot[smooth] coordinates {(0.45,0.05) (0.8,0.35) (1.2,0.8) (1.7,1.35) (2.2,1.75) (2.8,2.0)};
 \draw[thick] (2.8,2.0)--(5.0,2.0);
 \draw[dashed] (2.8,0)--(2.8,2.0);
 \node[below] at (0.45,0) {$1$};
 \node[below] at (2.8,0) {$s$};
 \node[below] at (5.0,0) {$b$};
 \node[left] at (0,2.0) {$1$};
 \begin{scope}[xshift=7.0cm]
 \draw[->] (0,0)--(6.0,0) node[right] {buyer value};
 \draw[thick,decorate,decoration={brace,amplitude=5pt,mirror}] (0.3,-0.15)--(2.7,-0.15)
 node[midway,below=6pt] {bounded body $B_0\subseteq[0,L]$};
 \fill (5.2,0) circle (2.2pt);
 \draw[dashed] (5.2,0)--(5.2,1.6);
 \node[above] at (5.2,1.6) {mass $\varepsilon$};
 \node[below] at (5.2,0) {$1/\varepsilon$};
 \node[align=center] at (3.4,2.35) {$\varepsilon(1/\varepsilon)=1$\\escaped first moment};
 \end{scope}
\end{tikzpicture}
\caption{The optimizing control has one interior arc and one boundary arc. The tight buyer family has a bounded body and a vanishing remote atom. The drawing is schematic.}\label{fig:extremal-schematic}
\end{figure}

\section{Welfare identities and transform geometry}\label{sec:transforms}

Throughout, $S$ and $B$ are independent, nonnegative random variables with finite first moments. For every nonnegative random variable $X$, we use the right-continuous CDF and survival conventions
\[
F_X(p):=\Pp(X\le p),\qquad Q_X(p):=\Pp(X>p).
\]
When an endpoint atom is relevant, $\Pp(X\ge p)=Q_X(p)+\Pp(X=p)$ is written explicitly. Equalities involving ordinary derivatives are asserted almost everywhere; one-sided derivatives are used for pointwise endpoint statements.

Let
\begin{equation}\label{eq:m-I-g-M}
m:=\E[S],\qquad I:=\E[(B-S)_+],\qquad
g(p):=\E[(B-S)\one\{S\le p\le B\}],\qquad M:=\max_{p\ge0}g(p).
\end{equation}
Thus $I$ is the first-best expected gain from trade and $M$ is the gain captured by the optimal fixed price.

\begin{lemma}[Welfare decomposition]\label{lem:welfare-decomp}
For every admissible pair,
\begin{equation}\label{eq:opt-fp-decomp}
\OPT=m+I,\qquad \FP=m+M.
\end{equation}
In particular, $0\le M\le I<\infty$.
\end{lemma}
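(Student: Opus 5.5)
The plan is to verify both identities pointwise at the level of random variables and then take expectations; all quantities are integrable because $S,B\ge0$ have finite first moments. For $\OPT$, use the pointwise identity
\[
\max\{S,B\}=S+(B-S)_+ .
\]
Taking expectations gives $\OPT=\E[S]+\E[(B-S)_+]=m+I$. Since $0\le (B-S)_+\le B$, we also get $I\le\E[B]<\infty$.

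For $\FP$, fix a price $p\ge0$ and let $T_p:=\one\{S\le p\le B\}$. The integrand in $\SW(p)$ rewrites pointwise as
\[
B T_p+S(1-T_p)=S+(B-S)T_p .
\]
Taking expectations, which is legitimate since $|(B-S)T_p|\le B+S$, gives $\SW(p;F_S,F_B)=m+g(p)$ for every $p$. Hence maximizing $\SW(p)$ over $p$ is the same as maximizing $g(p)$, and
\[
\FP=\max_p \SW(p)=m+\max_p g(p)=m+M .
\]
This requires that both maxima be attained at the same prices, which follows from \cref{lem:price-attainment}: it says the maximum of $\SW$ is attained, and since $\SW$ and $g$ differ by the constant $m$, the maximum in the definition of $M$ is attained too.

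For the bounds on $M$: on the event $\{S\le p\le B\}$ we have $B-S\ge0$, so $(B-S)T_p=(B-S)_+T_p$. It follows that
\[
0\le g(p)\le \E[(B-S)_+]=I
\]
for every $p$, and hence $0\le M\le I<\infty$. No step is a real obstacle; the only subtlety is the attainment of the maximum, which we take from \cref{lem:price-attainment} rather than reproving.
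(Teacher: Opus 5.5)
Your proof is correct and follows the paper's route: both write welfare pointwise as the seller baseline $S$ plus the realized gain, which is $(B-S)_+$ in the first best and $(B-S)\one\{S\le p\le B\}$ at price $p$, take expectations, and get $M\le I$ because the fixed-price trade event lies inside $\{B\ge S\}$. One small point: $\sup_p\SW(p)=m+\sup_p g(p)$ holds without any attainment, and the price-attainment lemma proves attainment for $g$ directly (its claim about $\SW$ is derived from this very decomposition), so cite it in that direction to avoid an apparent circularity.
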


\begin{proof}
The seller keeps the item in the no-trade allocation, so its welfare is $S$. Relative to that baseline, an efficient trade adds $(B-S)_+$, while a fixed price $p$ adds $(B-S)\one\{S\le p\le B\}$. Taking expectations gives \cref{eq:opt-fp-decomp}. The inequality $M\le I$ follows because the fixed-price trade event is contained in $\{B\ge S\}$.
\end{proof}

\begin{lemma}[Attainment of an optimal price]\label{lem:price-attainment}
For every admissible pair, $g$ is upper semicontinuous on $[0,\infty)$ and satisfies $g(p)\to0$ as $p\to\infty$. Consequently, $g$ attains its maximum and the supremum in the definition of the optimal fixed-price welfare is a maximum.
\end{lemma}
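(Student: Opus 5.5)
I will write $g$ as the expectation of a nonnegative integrand that is upper semicontinuous in $p$. Then upper semicontinuity of $g$ follows from the reverse Fatou lemma, and decay at infinity follows from dominated convergence. For fixed realizations $(S,B)$, set
\[
\phi(p;S,B):=(B-S)\one\{S\le p\le B\}=(B-S)_+\one\{S\le p\le B\}.
\]
The second equality holds because on $\{S\le p\le B\}$ we have $B\ge S$. The set $\{p:S\le p\le B\}=[S,B]$ is closed (possibly empty), so its indicator is upper semicontinuous in $p$. Multiplying by the nonnegative constant $(B-S)_+$ keeps it upper semicontinuous. The integrand is dominated by $(B-S)_+\le B$, and $\E[B]<\infty$ because the first moments are finite.

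\emph{Upper semicontinuity.} Let $p_n\to p$. Pointwise, $\limsup_n \phi(p_n;S,B)\le\phi(p;S,B)$. Since $0\le\phi\le(B-S)_+$, which is integrable, the reverse Fatou lemma gives
\[
\limsup_n g(p_n)\le\E\bigl[\limsup_n\phi(p_n;S,B)\bigr]\le g(p).
\]

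\emph{Decay.} For $p\to\infty$ we have $\phi(p;S,B)\le(B-S)_+\one\{B\ge p\}\le B\one\{B\ge p\}$. Hence $0\le g(p)\le\E[B\one\{B\ge p\}]$, and this tends to $0$ by dominated convergence, using $\E[B]<\infty$.

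\emph{Attainment.} If $\sup_p g(p)=0$, then $g\equiv0$ (since $g\ge0$) and every $p$ is a maximizer. Otherwise let $\sigma=\sup g>0$. By the decay, there is $R$ with $g(p)<\sigma/2$ for all $p>R$, so the supremum equals $\sup_{[0,R]}g$. An upper semicontinuous function on the compact interval $[0,R]$ attains its maximum, so $g$ attains its maximum. By \cref{lem:welfare-decomp}, $\SW(p)=m+g(p)$, so the same price maximizes the welfare and $\FP$ is a maximum.

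The only delicate point is the treatment of closed versus open endpoints. The trade rule $S\le p\le B$ uses weak inequalities, and this is exactly what makes the integrand upper semicontinuous rather than lower. I expect this to be the main thing to verify carefully, particularly with atoms at $p$. The closed-interval observation settles it, including when $S$ or $B$ is atomic.
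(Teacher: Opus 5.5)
Your proposal is correct and follows essentially the same route as the paper. Both arguments use upper semicontinuity of the closed-interval integrand together with reverse Fatou and the integrable dominator $(B-S)_+$, then the decay bound $g(p)\le\E[B\one\{B\ge p\}]$, and finally attainment on a compact interval $[0,R]$ outside of which $g<\sigma/2$. Your explicit remark that $\SW(p)=m+g(p)$, which transfers the maximizer of $g$ to the welfare objective, is a small step the paper leaves implicit.
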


\begin{proof}
For fixed values $(s,b)$, the function
\[
p\longmapsto (b-s)\one\{s\le p\le b\}
\]
is identically zero when $b<s$, and otherwise is a nonnegative multiple of the indicator of the closed interval $[s,b]$. It is therefore upper semicontinuous. It is dominated by the integrable random variable $(B-S)_+$. Hence, for every sequence $p_n\to p$, reverse Fatou gives
\[
\limsup_{n\to\infty}g(p_n)
\le \E\!\left[\limsup_{n\to\infty}(B-S)\one\{S\le p_n\le B\}\right]
\le g(p).
\]
Thus $g$ is upper semicontinuous. Moreover,
\[
0\le g(p)\le \E\bigl[B\one\{B\ge p\}\bigr]\longrightarrow0
\qquad(p\to\infty),
\]
by integrability of $B$. If $M=0$, every price attaining zero is optimal. If $M>0$, choose $R$ so large that $g(p)<M/2$ for $p\ge R$; upper semicontinuity then gives a maximizer on the compact interval $[0,R]$.
\end{proof}

Define the seller put transform and buyer call transform by
\begin{equation}\label{eq:put-call}
A(p):=\E[(p-S)_+],\qquad C(p):=\E[(B-p)_+],\qquad p\ge0.
\end{equation}
Both are finite and convex. The function $A$ is nondecreasing and $1$-Lipschitz, while $C$ is nonincreasing and $1$-Lipschitz. Their one-sided derivatives satisfy
\begin{align}
A'_+(p)&=\Pp(S\le p),&A'_-(p)&=\Pp(S<p),\label{eq:A-derivatives}\\
-C'_+(p)&=\Pp(B>p),&-C'_-(p)&=\Pp(B\ge p).\label{eq:C-derivatives}
\end{align}

\begin{lemma}[Atom-aware fixed-price identity]\label{lem:atom-aware}
For every $p\ge0$,
\begin{equation}\label{eq:g-atom-aware}
g(p)=A(p)\Pp(B\ge p)+C(p)\Pp(S\le p).
\end{equation}
At every point where both $A$ and $C$ are differentiable,
\begin{equation}\label{eq:wronskian}
g(p)=A'(p)C(p)-A(p)C'(p).
\end{equation}
Moreover,
\begin{equation}\label{eq:I-identities}
I=\E[A(B)]=\int_0^\infty\Pp(S\le p)\Pp(B>p)\dd p.
\end{equation}
\end{lemma}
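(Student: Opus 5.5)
The plan is to prove the atom-aware identity \cref{eq:g-atom-aware} directly from the definition of $g$ in \cref{eq:m-I-g-M}, using independence of $S$ and $B$. On the event $\{S\le p\le B\}$ we split
\[
B-S=(B-p)+(p-S),
\]
which gives
\[
g(p)=\E\bigl[(p-S)\one\{S\le p\}\one\{B\ge p\}\bigr]+\E\bigl[(B-p)\one\{B\ge p\}\one\{S\le p\}\bigr].
\]
Independence factors each term. The first equals $\E[(p-S)\one\{S\le p\}]\,\Pp(B\ge p)=A(p)\Pp(B\ge p)$. For the second, $(B-p)\one\{B\ge p\}=(B-p)_+$, because the atom at $B=p$ contributes zero; so it equals $C(p)\Pp(S\le p)$. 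Integrability is automatic, since every term is dominated by $S+B$, and both have finite first moments. This argument works at every $p$, atoms included.

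For the Wronskian form \cref{eq:wronskian}, take $p$ where both $A$ and $C$ are differentiable. Then the one-sided derivatives in \cref{eq:A-derivatives} and \cref{eq:C-derivatives} coincide. This gives $A'(p)=\Pp(S\le p)$ and $-C'(p)=\Pp(B\ge p)$, the left derivative. Here $\Pp(B\ge p)$ is exactly the buyer factor in \cref{eq:g-atom-aware}. Substituting these into \cref{eq:g-atom-aware} gives $A'C-AC'$. Nothing else is needed, since differentiability means there are no atoms at $p$.

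For \cref{eq:I-identities}, I would first condition on $B$. Using independence (Fubini on the product measure), $\E[(B-S)_+\mid B=b]=\E[(b-S)_+]=A(b)$, hence $I=\E[A(B)]$. For the integral formula I would write
\[
(B-S)_+=\int_0^\infty\one\{S\le t<B\}\dd t,
\]
which holds pointwise: when $B>S$ the set is $[S,B)$ of length $B-S$, and otherwise it is empty. Tonelli applies because the integrand is nonnegative. Taking expectations and using independence gives $\int_0^\infty\Pp(S\le t)\Pp(B>t)\dd t$.

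There is no real obstacle here. The only care needed is bookkeeping at atoms: the trade event is closed, $S\le p\le B$, so the buyer factor is $\Pp(B\ge p)$ rather than $\Pp(B>p)$. It then has to be matched to the correct one-sided derivative of $C$, which is why \cref{eq:wronskian} is asserted only at differentiability points.
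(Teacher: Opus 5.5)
Your proposal is correct and follows the paper's proof essentially step for step. Both split $B-S=(p-S)+(B-p)$ on the closed trade event and factor by independence. Both substitute the matching one-sided derivatives at common differentiability points. Both then get $I$ by conditioning on $B$ and applying Tonelli to $(B-S)_+=\int_0^\infty\one\{S\le t<B\}\dd t$.
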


\begin{proof}
On the event $\{S\le p\le B\}$, $B-S=(p-S)+(B-p)$. Independence therefore gives
\begin{align*}
g(p)
&=\E[(p-S)\one\{S\le p\}]\Pp(B\ge p)
+\Pp(S\le p)\E[(B-p)\one\{B\ge p\}]\\
&=A(p)\Pp(B\ge p)+C(p)\Pp(S\le p),
\end{align*}
which proves \cref{eq:g-atom-aware}. At a common differentiability point there are no atoms at $p$, and \cref{eq:A-derivatives,eq:C-derivatives} gives \cref{eq:wronskian}.

For the first equality in \cref{eq:I-identities}, condition on $B$ and use independence. For the second, use
\[
(B-S)_+=\int_0^\infty\one\{S\le p<B\}\dd p
\]
and apply Tonelli's theorem.
\end{proof}

The ratio is invariant under a common positive rescaling of $S$ and $B$. The following reformulation will be used throughout.

\begin{lemma}[Linear target inequality]\label{lem:linear-target}
Fix $h\ge0$ and set $\alpha=(1+h)^{-1}$. The universal bound $\FP\ge\alpha\OPT$ is equivalent to
\begin{equation}\label{eq:linear-target}
I-hm\le(1+h)M.
\end{equation}
\end{lemma}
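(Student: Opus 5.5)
The argument is pure algebra applied pair by pair, using the decomposition of \cref{lem:welfare-decomp}. For an admissible pair that lemma gives $\OPT=m+I$ and $\FP=m+M$, with $m,I,M$ finite and nonnegative. Because $\alpha=(1+h)^{-1}>0$, the inequality $\FP\ge\alpha\OPT$ is equivalent to $(1+h)\FP\ge\OPT$. Substituting the decomposition turns this into
\[
(1+h)(m+M)\ge m+I .
\]
Expanding the left side gives $m+hm+(1+h)M$. Subtracting $m$ from both sides, which is legitimate since all quantities are finite, leaves $I-hm\le(1+h)M$. This is \cref{eq:linear-target}.

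Every step above is reversible, so the two inequalities are equivalent for each individual pair. The universal statements are therefore also equivalent: $\FP\ge\alpha\OPT$ holds for all admissible pairs exactly when \cref{eq:linear-target} holds for all admissible pairs.

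Pairs with $\OPT=0$ do not break the equivalence. In that case $m=0$ and $I=0$, so $M=0$ as well, and both inequalities hold trivially. Equivalently, such pairs are excluded from the infimum defining $\alpha_{\mathrm{FP}}$ and satisfy both conditions anyway.

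Finally, the universal bound $\FP\ge\alpha\OPT$ is the same as the statement $\alpha_{\mathrm{FP}}\ge\alpha$, because $\alpha_{\mathrm{FP}}$ is the infimum of $\FP/\OPT$ over pairs with $\OPT>0$. Thus the lemma also gives $\alpha_{\mathrm{FP}}\ge(1+h)^{-1}$ if and only if \cref{eq:linear-target} holds for every admissible pair.

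The only real point to watch is finiteness. Cancelling $m$ needs $m<\infty$, which follows from the finite-first-moment assumption, and $I<\infty$ is part of \cref{lem:welfare-decomp}. Apart from that the argument has no obstacle.
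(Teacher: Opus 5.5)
Your proposal is correct and follows the paper's own argument: substitute $\OPT=m+I$ and $\FP=m+M$ from \cref{lem:welfare-decomp} and rearrange. Multiplying through by $1+h$ instead of dividing by $\OPT$ also lets you treat the degenerate case $\OPT=0$ explicitly, whereas the paper simply restricts to $\OPT>0$.
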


\begin{proof}
By \cref{lem:welfare-decomp}, \cref{eq:linear-target} is equivalent to $m+I\le(1+h)(m+M)$, which is exactly $(m+M)/(m+I)\ge(1+h)^{-1}$ whenever $\OPT=m+I>0$.
\end{proof}

\section{Wronskian saturation}\label{sec:saturation}

We reduce the universal inequality to a variational problem. Assume first that $M>0$ and rescale values so that
\begin{equation}\label{eq:M-normalized}
M=1.
\end{equation}
If $M=0$, then $I=0$: otherwise the integral in \cref{eq:I-identities} is positive, so for some $p$ both $\Pp(S\le p)>0$ and $C(p)>0$, and \cref{eq:g-atom-aware} gives $g(p)>0$. Thus an instance with $M=0$ has welfare ratio one.

At almost every $p$, \cref{eq:wronskian,eq:M-normalized} imply
\begin{equation}\label{eq:wronskian-bound}
A'(p)C(p)-A(p)C'(p)\le1.
\end{equation}

If $C(0)=\E[B]\le1$, then $I=\E[A(B)]\le\E[B]=C(0)\le1$, so $I-hm\le1<1+h$ for every $h>0$. It remains to consider $C(0)>1$. Because $C$ is continuous, strictly decreasing on the set where $C>0$, and tends to zero, there is a unique $p_1>0$ such that
\begin{equation}\label{eq:p1}
C(p_1)=1.
\end{equation}

Define, for $0\le p\le p_1$,
\begin{equation}\label{eq:R-Ahat}
R(p):=\int_0^p\frac{\dd t}{C(t)^2},\qquad \widehat A(p):=C(p)R(p).
\end{equation}

\begin{lemma}[Saturated seller transform]\label{lem:saturation}
The function $\widehat A$ has the following properties on $[0,p_1]$.
\begin{enumerate}[label=(\roman*),leftmargin=2em]
\item $A(p)\le\widehat A(p)$ for all $p\in[0,p_1]$.
\item $\widehat A$ is convex, $\widehat A(0)=0$, and $0\le\widehat A'(p)\le1$ for almost every $p\in[0,p_1]$.
\item At almost every $p\in[0,p_1]$,
\begin{equation}\label{eq:saturated-wronskian}
\widehat A'(p)C(p)-\widehat A(p)C'(p)=1.
\end{equation}
\end{enumerate}
Consequently, the slope-one extension
\begin{equation}\label{eq:Atilde}
\widetilde A(p):=
\begin{cases}
\widehat A(p),&0\le p\le p_1,\\
\widehat A(p_1)+p-p_1,&p\ge p_1
\end{cases}
\end{equation}
is the put transform of a nonnegative random variable $\widetilde S$ supported on $[0,p_1]$.
\end{lemma}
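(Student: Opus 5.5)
The plan is to work with the ratio $\widehat A/C=R$ and read off each property from the explicit formula
\[
\widehat A'=C'R+\frac1C .
\]
On $[0,p_1]$ we have $C\ge C(p_1)=1>0$. Hence $1/C^2$ is bounded and continuous, $R$ is $C^1$ with $R'=1/C^2$, and $\widehat A=CR$ is Lipschitz as a product of Lipschitz functions.

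\textbf{Step 1: property (i).} I would first establish (i) by a quotient comparison. The function $A/C$ is Lipschitz on $[0,p_1]$. Dividing \cref{eq:wronskian-bound} by $C^2$ gives, almost everywhere,
\[
\Bigl(\frac AC\Bigr)'\le\frac1{C^2}=R'.
\]
Integrating from $0$, and using $A(0)=0$ (since $S\ge0$) and $R(0)=0$, yields $A/C\le R$. Multiplying by $C>0$ then gives $A\le\widehat A$.

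\textbf{Step 2: property (iii) and the easy parts of (ii).} Property (iii) is a direct computation at common differentiability points:
\[
\widehat A'C-\widehat AC'=(C'R+1/C)C-CRC'=1 .
\]
For (ii), $\widehat A(0)=C(0)R(0)=0$. For convexity, I would show that the right derivative
\[
\widehat A'_+(p)=C'_+(p)R(p)+\frac1{C(p)}
\]
is nondecreasing. For $p<p'$ I would write the increment as
\[
\widehat A'_+(p')-\widehat A'_+(p)=\int_{(p,p']}R\,\dd C'_+ \;+\;\int_p^{p'}C'_+(t)\,\frac{\dd t}{C(t)^2}\;+\;\Bigl(\frac1{C(p')}-\frac1{C(p)}\Bigr).
\]
This is integration by parts for the Stieltjes product of the nondecreasing right-continuous function $C'_+$ with the $C^1$ function $R$. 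The last two terms cancel, because $(1/C)'=-C'/C^2$ almost everywhere and $1/C$ is Lipschitz. The first term is nonnegative because $C$ is convex, so $\dd C'_+\ge0$, and $R\ge0$. This Stieltjes/product-rule bookkeeping is the step I expect to need the most care. The key point is that the cross terms cancel exactly, leaving only the nonnegative term $R\,\dd C'_+$.

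\textbf{Step 3: slope bounds in (ii).} Monotonicity of $\widehat A'_+$ reduces the slope bounds to the two endpoints. At the left endpoint, $\widehat A'_+(0)=1/C(0)>0$, so $\widehat A'\ge0$ throughout. At the right endpoint, using \cref{eq:C-derivatives} and $C(p_1)=1$,
\[
\widehat A'_-(p_1)=C'_-(p_1)R(p_1)+1=1-\Pp(B\ge p_1)R(p_1)\le1 .
\]
Hence $0\le\widehat A'\le1$ almost everywhere on $[0,p_1]$.

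\textbf{Step 4: the extension is a put transform.} The function $\widetilde A$ is continuous at $p_1$. Its slope is nondecreasing across $p_1$, since the left slope there is at most $1$ and the right slope equals $1$. So $\widetilde A$ is convex on $[0,\infty)$, with $\widetilde A(0)=0$ and right derivative in $[0,1]$ equal to $1$ on $[p_1,\infty)$. I would then define
\[
F(p):=\widetilde A'_+(p)\quad(p\ge0).
\]
This $F$ is right-continuous and nondecreasing, takes values in $[0,1]$, and equals $1$ for $p\ge p_1$. It is therefore the CDF of a random variable $\widetilde S$ supported on $[0,p_1]$, with a possible atom at $0$ of mass $1/C(0)$. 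Finally,
\[
\E[(p-\widetilde S)_+]=\int_0^pF(t)\dd t=\widetilde A(p),
\]
since $\widetilde A$ is absolutely continuous and vanishes at $0$. This shows $\widetilde A$ is the put transform of $\widetilde S$.
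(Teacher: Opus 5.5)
Your proof is correct and follows the paper's argument essentially step for step: the quotient comparison $(A/C)'\le R'$ for (i), the direct computation for (iii), and convexity via the product-rule identity $\dd\widehat A'=R\,\dd C'$ with cancelling cross terms. The only deviation is cosmetic: you get the slope bounds from convexity by evaluating $\widehat A'_+(0)=1/C(0)$ and $\widehat A'_-(p_1)\le1$. The paper instead bounds $\widehat A'=1/C-QR$ pointwise, using that $Q$ is nonincreasing for the lower bound and $QR\ge0$, $C\ge1$ for the upper bound; this is the same integrated inequality.
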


\begin{proof}
Because $C\ge1$ on $[0,p_1]$, the quotient $A/C$ is absolutely continuous. By \cref{eq:wronskian-bound},
\[
\left(\frac{A}{C}\right)'=\frac{A'C-AC'}{C^2}\le\frac1{C^2}=R'
\quad\text{almost everywhere}.
\]
Both $A/C$ and $R$ vanish at zero, hence $A/C\le R$ and $A\le\widehat A$.

At differentiability points, writing $Q(p):=-C'(p)=\Pp(B>p)$ gives
\begin{equation}\label{eq:Ahat-prime}
\widehat A'(p)=\frac1{C(p)}-Q(p)R(p).
\end{equation}
Since $Q$ is nonincreasing, for almost every $p$ with $Q(p)>0$,
\[
Q(p)R(p)\le\int_0^p\frac{Q(t)}{C(t)^2}\dd t
=\frac1{C(p)}-\frac1{C(0)}.
\]
The same inequality is trivial if $Q(p)=0$. Therefore $\widehat A'(p)\ge1/C(0)>0$. Also \cref{eq:Ahat-prime} gives $\widehat A'(p)\le1/C(p)\le1$.

To see convexity in the presence of atoms, write $\widehat A'=C'R+1/C$. The distributional product rule gives
\begin{align}
\dd\widehat A'
&=R\,\dd C'+C'\,\dd R+\dd(1/C)\notag\\
&=R\,\dd C'+\frac{C'}{C^2}\dd p-\frac{C'}{C^2}\dd p
=R\,\dd C'.
\label{eq:Ahat-second}
\end{align}
Here $R$ and $1/C$ are absolutely continuous, while $\dd C'$ is the nonnegative second-derivative measure of the convex function $C$. Hence $\widehat A$ is convex. Differentiating \cref{eq:R-Ahat} directly gives \cref{eq:saturated-wronskian}.

The left derivative of $\widehat A$ at $p_1$ is at most one, so the extension \cref{eq:Atilde} remains convex. Its right derivative is a nondecreasing function taking values in $[0,1]$ and equal to one for $p\ge p_1$. Defining this derivative as a cumulative distribution function produces a probability distribution on $[0,p_1]$, and
\[
\widetilde A(p)=\int_0^pF_{\widetilde S}(t)\dd t=\E[(p-\widetilde S)_+].
\]
A detailed transform characterization is recorded in Appendix~\ref{app:transforms}.
\end{proof}

Let $\widetilde m:=\E[\widetilde S]$ and $\widetilde I:=\E[(B-\widetilde S)_+]$.

\begin{lemma}[Monotonicity of the saturation reduction]\label{lem:saturation-monotone}
For every $h\ge0$,
\begin{equation}\label{eq:saturation-monotone}
I-hm\le\widetilde I-h\widetilde m.
\end{equation}
\end{lemma}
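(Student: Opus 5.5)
We need to show $I\le\widetilde I$ and $m\ge\widetilde m$; the claim then follows for $h\ge0$. The plan is to treat both quantities as functionals of the seller put transform and use that $\widetilde A\ge A$ pointwise, with the buyer $B$ held fixed.

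\textbf{First moments.} For any nonnegative $X$ with finite mean and put transform $A_X$, we have $A_X(p)=p-\E[X]$ for $p\ge\sup\operatorname{supp}X$. More generally,
\[
\E[X]=\lim_{p\to\infty}\bigl(p-A_X(p)\bigr),
\]
since $p-A_X(p)=\E[\min(X,p)]$, which increases to $\E[X]$ by monotone convergence. So it suffices to show $\widetilde A(p)\ge A(p)$ for all $p\ge0$:
\begin{itemize}
\item On $[0,p_1]$ this is \cref{lem:saturation}(i).
\item For $p\ge p_1$, $A$ is $1$-Lipschitz, so $A(p)\le A(p_1)+(p-p_1)\le\widehat A(p_1)+p-p_1=\widetilde A(p)$.
\end{itemize}
Hence $p-\widetilde A(p)\le p-A(p)$, and letting $p\to\infty$ gives $\widetilde m\le m$.

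\textbf{Gains.} By \cref{eq:I-identities}, applied with seller $\widetilde S$ and the same independent buyer $B$, $I=\E[A(B)]$ and $\widetilde I=\E[\widetilde A(B)]$. Since $\widetilde A\ge A$ pointwise and both are nonnegative and integrable ($\widetilde A(b)\le b$ and $B$ has finite mean), $\widetilde I\ge I$. Combining,
\[
I-hm\le\widetilde I-h\widetilde m \quad\text{for } h\ge0.
\]

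\textbf{Main obstacle.} The only delicate point is the pointwise comparison $\widetilde A\ge A$ beyond $p_1$, where saturation is not defined. It is handled by the Lipschitz extension argument above. One also needs \cref{eq:I-identities} to apply to the pair $(\widetilde S,B)$; this holds because \cref{lem:saturation} shows $\widetilde A$ is a genuine put transform of a bounded variable, which we take independent of $B$. No further issue with atoms arises, since only pointwise values of the transforms are used.
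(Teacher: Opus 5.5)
Your proof is correct and is essentially the paper's own argument. Both prove the pointwise domination $\widetilde A\ge A$ (from the saturation lemma on $[0,p_1]$ and the $1$-Lipschitz bound beyond $p_1$), then deduce $\widetilde I=\E[\widetilde A(B)]\ge\E[A(B)]=I$ and $\widetilde m\le m$ from $m=\lim_{p\to\infty}(p-A(p))$. Your identity $p-A(p)=\E[\min\{S,p\}]$, which justifies that limit by monotone convergence, is a useful detail the paper leaves implicit.
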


\begin{proof}
By \cref{lem:saturation}, $\widetilde A\ge A$ on $[0,p_1]$. For $p\ge p_1$, the function $A$ has slope at most one, so
\[
A(p)\le A(p_1)+p-p_1\le\widehat A(p_1)+p-p_1=\widetilde A(p).
\]
Therefore $\widetilde I=\E[\widetilde A(B)]\ge\E[A(B)]=I$. Moreover,
\[
m=\lim_{p\to\infty}(p-A(p)),\qquad
\widetilde m=\lim_{p\to\infty}(p-\widetilde A(p)),
\]
and $\widetilde A\ge A$ implies $\widetilde m\le m$. Combining the inequalities proves the claim.
\end{proof}

\section{Reduction to a one-dimensional variational problem}\label{sec:control-reduction}

Set $c_0:=C(0)>1$. On $[0,p_1]$, use $c=C(p)$ as the independent coordinate. Since $C(p_1)=1>0$, the survival probability $\Pp(B>p)$ is bounded away from zero on this interval, so the inverse map $p=p(c)$ is absolutely continuous. Define
\begin{equation}\label{eq:q-control}
q(c):=-C'(p(c))
\quad\text{for almost every }c\in[1,c_0].
\end{equation}
Then $0<q(c)\le1$ and
\begin{equation}\label{eq:change-variable}
\dd p=-\frac{\dd c}{q(c)}.
\end{equation}
For controls generated by buyer distributions, $q$ is nondecreasing as a function of $c$. We enlarge the admissible class and retain only the measurable constraint $0<q\le1$; this can only increase the supremum and therefore strengthens the universal bound. This is the only relaxation in the reduction. \Cref{lem:qstar-monotone} shows that the optimizing control is itself nondecreasing, and \cref{sec:extremal} reconstructs it as a genuine buyer distribution.

Define
\begin{equation}\label{eq:Y-R-control}
Y(c):=\int_1^c q(s)\dd s,\qquad
R(c):=\int_c^{c_0}\frac{\dd u}{q(u)u^2}.
\end{equation}
By \cref{eq:Ahat-prime}, the saturated seller CDF on $[0,p_1]$ is
\begin{equation}\label{eq:Fhat-control}
\widehat F(c)=\frac1c-q(c)R(c).
\end{equation}

\begin{lemma}[Control representation]\label{lem:control-representation}
For the saturated seller,
\begin{align}
\widetilde m&=\int_1^{c_0}\frac{1-c^{-2}}{q(c)}\dd c,
\label{eq:m-control}\\
\widetilde I&=1+\log c_0-\int_1^{c_0}\frac{Y(c)}{q(c)c^2}\dd c.
\label{eq:I-control}
\end{align}
Consequently, for every $h\ge0$,
\begin{equation}\label{eq:objective-control}
\widetilde I-h\widetilde m=1+\K_h(c_0,q),
\end{equation}
where
\begin{equation}\label{eq:K-definition}
\K_h(b,q):=\log b-\int_1^b\frac{Y(c)+h(c^2-1)}{q(c)c^2}\dd c,
\qquad Y(c)=\int_1^c q(s)\dd s.
\end{equation}
\end{lemma}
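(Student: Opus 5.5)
The plan is to compute $\widetilde m$ and $\widetilde I$ directly from the transforms on $[0,p_1]$ and $[p_1,\infty)$, and then change variables to $c=C(p)$ using \cref{eq:change-variable}. I will use two facts. First, $q(c)\ge \Pp(B>p_1)>0$, so all integrands below are bounded and the change of variables is legitimate: $p(c)$ is absolutely continuous with $\dd p=-\dd c/q(c)$, and $p=0\leftrightarrow c=c_0$, $p=p_1\leftrightarrow c=1$. Second, the function $R$ of \cref{eq:R-Ahat} evaluated at $p(c)$ coincides with the $R(c)$ of \cref{eq:Y-R-control}. This follows from
\[
\int_0^{p(c)}\frac{\dd t}{C(t)^2}=\int_c^{c_0}\frac{\dd u}{q(u)u^2}
\]
by the same substitution. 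Together with \cref{eq:Ahat-prime}, it also justifies \cref{eq:Fhat-control} almost everywhere.

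For the mean, I use $\widetilde m=\lim_{p\to\infty}(p-\widetilde A(p))$ from the proof of \cref{lem:saturation-monotone}. Since $\widetilde A$ has slope one beyond $p_1$, this limit equals $p_1-\widehat A(p_1)$. Because $C(p_1)=1$, we have $\widehat A(p_1)=R(p_1)$. In the $c$-coordinate,
\[
p_1=\int_1^{c_0}\frac{\dd c}{q(c)},\qquad R(p_1)=\int_1^{c_0}\frac{\dd c}{q(c)c^2}.
\]
Subtracting gives \cref{eq:m-control}.

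For the gain, I apply the second identity in \cref{eq:I-identities} to the pair $(\widetilde S,B)$, namely $\widetilde I=\int_0^\infty \widetilde F(p)\,\Pp(B>p)\dd p$. On $[p_1,\infty)$ we have $\widetilde F\equiv1$, so this piece contributes $\int_{p_1}^\infty\Pp(B>p)\dd p=C(p_1)=1$. On $[0,p_1]$, the substitution $\dd p=\dd c/q$ (with orientation reversed) cancels the factor $\Pp(B>p)=q(c)$. Using \cref{eq:Fhat-control}, the integral becomes
\[
\int_1^{c_0}\Bigl(\frac1c-q(c)R(c)\Bigr)\dd c=\log c_0-\int_1^{c_0}q(c)\int_c^{c_0}\frac{\dd u}{q(u)u^2}\dd c .
\]
Tonelli's theorem applies because the integrand is nonnegative and bounded on a bounded domain. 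Exchanging the order of integration turns the double integral into $\int_1^{c_0}\frac{Y(u)}{q(u)u^2}\dd u$, which yields \cref{eq:I-control}.

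Finally, \cref{eq:objective-control} is pure algebra: combine \cref{eq:I-control} with $h$ times \cref{eq:m-control}, writing $h(1-c^{-2})/q=h(c^2-1)/(qc^2)$. The only delicate point is the change of variables in the presence of atoms of $B$ and flat pieces of $C$. Atoms of $B$ make $q$ jump, which is harmless because only almost-everywhere values enter. Flat pieces of $C$ cannot occur on $[0,p_1]$ since $\Pp(B>p)>0$ there, so $C$ is strictly decreasing and bi-Lipschitz there. I would state this explicitly before performing the substitutions.
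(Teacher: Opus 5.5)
Your proposal is correct and follows the paper's proof essentially step for step. You compute $\widetilde m=p_1-\widehat A(p_1)$ from the slope-one extension, split $\widetilde I$ into the tail piece $C(p_1)=1$ and the $[0,p_1]$ piece, change variables to $c$, and exchange integrals by Tonelli. Your explicit checks that the two definitions of $R$ agree under the substitution and that $C$ is bi-Lipschitz on $[0,p_1]$ spell out what the paper leaves implicit.
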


\begin{proof}
Since $\widetilde A$ has slope one after $p_1$,
\[
\widetilde m=p_1-\widehat A(p_1)
=\int_1^{c_0}\frac{\dd c}{q(c)}-\int_1^{c_0}\frac{\dd c}{q(c)c^2},
\]
which is \cref{eq:m-control}.

For the first-best gain, use \cref{eq:I-identities}. Beyond $p_1$, the saturated seller CDF equals one, so the tail contribution is
\[
\int_{p_1}^\infty\Pp(B>p)\dd p=C(p_1)=1.
\]
On $[0,p_1]$, changing variables from $p$ to $c$ gives
\[
\int_0^{p_1}\widehat F(p)\Pp(B>p)\dd p=\int_1^{c_0}\widehat F(c)\dd c.
\]
By \cref{eq:Fhat-control},
\begin{align*}
\int_1^{c_0}\widehat F(c)\dd c
&=\log c_0-\int_1^{c_0}q(c)\left(\int_c^{c_0}\frac{\dd u}{q(u)u^2}\right)\dd c\\
&=\log c_0-\int_1^{c_0}\frac{\int_1^u q(c)\dd c}{q(u)u^2}\dd u,
\end{align*}
where Tonelli's theorem justifies exchanging integrals. This is \cref{eq:I-control}. Combining the formulas yields \cref{eq:objective-control}.
\end{proof}

For $h>0$, define
\begin{equation}\label{eq:V-definition}
V(h):=\sup_{b>1}\sup_{q\in\Qcal_b}\K_h(b,q),
\end{equation}
where $\Qcal_b$ is the set of measurable functions $q:(1,b)\to(0,1]$ for which the integral in \cref{eq:K-definition} is finite.

\begin{corollary}[Universal reduction]\label{cor:universal-reduction}
Under the normalization $M=1$,
\begin{equation}\label{eq:universal-reduction}
I-hm\le1+V(h)
\qquad\text{for every }h>0.
\end{equation}
\end{corollary}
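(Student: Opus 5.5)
The plan is to chain the reductions of the preceding sections and handle the degenerate cases separately. Fix $h>0$ and an admissible pair with $M=1$.

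\emph{Case $C(0)\le1$.} As noted before \cref{eq:p1}, $I\le\E[B]\le1$. Hence $I-hm\le1$. It remains to check $V(h)\ge0$, so that $1\le1+V(h)$. For this I would take $q\equiv1$ on $(1,b)$. Then $Y(c)=c-1$ and
\[
\K_h(b,1)=\log b-\int_1^b\frac{c-1+h(c^2-1)}{c^2}\dd c .
\]
The integrand vanishes at $c=1$, so the integral is $O((b-1)^2)$, while $\log b\sim b-1$. Therefore $\K_h(b,1)>0$ for $b$ close to $1$, and $V(h)>0$. This is enough for the corollary. Alternatively one can note that $V(h)\ge\limsup_{b\downarrow1}\K_h(b,q)=0$.

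\emph{Case $C(0)>1$.} Here $p_1$ exists and $c_0=C(0)>1$. \Cref{lem:saturation-monotone} gives $I-hm\le\widetilde I-h\widetilde m$. \Cref{lem:control-representation} gives $\widetilde I-h\widetilde m=1+\K_h(c_0,q)$, where $q$ is the control from \cref{eq:q-control}. It then suffices to verify that this $q$ belongs to $\Qcal_{c_0}$, because then $\K_h(c_0,q)\le V(h)$ by definition of the supremum with $b=c_0>1$.

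Membership in $\Qcal_{c_0}$ has two parts. First, measurability and $0<q\le1$ hold because $q(c)=\Pp(B>p(c))$, which is positive on $[0,p_1)$ since $C(p_1)=1>0$. Second, the integral in \cref{eq:K-definition} must be finite. I would use that $q\ge q(1^+)=\Pp(B>p_1)$, taken at the left end, and argue as follows. The survival function is nonincreasing in $p$, so $q$ is nondecreasing in $c$ and $q(c)\ge\lim_{p\uparrow p_1}\Pp(B>p)=\Pp(B\ge p_1)>0$. The integrand is then bounded on the compact interval $[1,c_0]$, so the integral is finite.

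The main (mild) obstacle is making this a.e. bookkeeping rigorous: the change of variables \cref{eq:change-variable}, and the fact that \cref{lem:control-representation} uses exactly this $q$. Both were already established earlier, so the corollary follows by combining the two cases.
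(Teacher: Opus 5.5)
Your proof is correct and follows the paper's argument. The case $C(0)\le1$ uses $I\le\E[B]\le1$ together with $V(h)\ge0$ from short intervals near $b=1$, and the case $C(0)>1$ chains \cref{lem:saturation-monotone,lem:control-representation} with the definition of $V(h)$. Your explicit check that the buyer-generated control lies in $\Qcal_{c_0}$, via $q\ge\Pp(B\ge p_1)>0$ so that the integrand is bounded, fills in a step the paper leaves implicit.
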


\begin{proof}
If $C(0)\le1$, then $I-hm\le1\le1+V(h)$ because $V(h)\ge0$, obtained by taking $b\downarrow1$. If $C(0)>1$, combine \cref{lem:saturation-monotone,lem:control-representation} and the definition of $V(h)$.
\end{proof}

\section{Global solution of the variational problem}\label{sec:variational}

For fixed $h>0$ and endpoint $b>1$, write
\begin{equation}\label{eq:N-definition}
N(c):=Y(c)+h(c^2-1),\qquad Y(c)=\int_1^c q(s)\dd s.
\end{equation}
Maximizing $\K_h(b,q)$ is equivalent to minimizing
\begin{equation}\label{eq:A-action}
\Acal_{h,b}(q):=\int_1^b\frac{N(c)}{q(c)c^2}\dd c.
\end{equation}

\subsection{Existence}

\begin{proposition}[Existence of optimizers]\label{prop:optimizer-existence}
For every $h>0$ and $b>1$, the fixed-endpoint action $\Acal_{h,b}$ has a minimizer. The free-endpoint supremum $V(h)$ is attained by some pair $(b,q)$.
\end{proposition}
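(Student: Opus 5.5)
We will use the logarithmic substitution $q=e^u$ from the introduction, together with the direct method of the calculus of variations. Write
\[
\Acal_{h,b}(q)=\int_1^b\frac{h(c^2-1)}{c^2}e^{-u(c)}\dd c+\iint_{1\le s\le c\le b}\frac{e^{u(s)-u(c)}}{c^2}\dd s\dd c ,
\]
with $u\le 0$ measurable. Both terms are nonnegative, and each is convex in $u$, since it integrates exponentials of linear functionals. We would also check lower semicontinuity under pointwise convergence via Fatou.

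\textbf{Fixed endpoint.} Take a minimizing sequence $q_n$. The main obstacle is compactness, because $q_n$ may degenerate toward $0$ on sets near $c=1$ or oscillate. We would work with $w_n:=1/q_n\in[1,\infty)$. The first term is $\int h(1-c^{-2})w_n\dd c$, and its weight is positive on $(1,b]$. Since the action is bounded along the sequence, $w_n$ is bounded in the weighted space $L^1((1-c^{-2})\dd c)$. By Komlós' theorem, passing to Cesàro means of a subsequence, $w_n\to w$ almost everywhere. Convexity in $w$ is needed for the averaged sequence to remain minimizing. The first term is linear in $w$. For the second term, write it as $\int_1^b c^{-2}w(c)Y(c)\dd c$ with $Y(c)=\int_1^c 1/w$. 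The integrand $w(c)/w(s)$ is jointly convex on $(0,\infty)^2$, since it is the perspective of a linear function, so the whole action is convex in $w$. Cesàro means therefore do not increase the action beyond the running infimum up to $o(1)$. Fatou then gives lower semicontinuity along the a.e.\ limit, because $Y_n(c)\to Y(c)$ by dominated convergence ($1/w_n\le1$). The limit $w$ is finite a.e., because it is integrable against a positive weight away from $c=1$, so $q=1/w\in(0,1]$ a.e.\ is a minimizer. If finiteness near $c=1$ is delicate, we would use the weight $c^{-2}Y$ in the second term as well.

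\textbf{Free endpoint.} The supremum satisfies $V(h)=\sup_b(\log b-m(b))$ with $m(b):=\min\Acal_{h,b}$, and $V(h)\ge0$ as $b\downarrow1$. We would show that $\log b-m(b)\to-\infty$ as $b\to\infty$. Since $q\le1$, the first term satisfies $\ge h\int_1^b(1-c^{-2})\dd c\sim hb$, which dominates $\log b$. Hence the supremum can be restricted to a compact interval $b\in[1,b_1]$.

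It remains to show that $m$ is continuous, or at least lower semicontinuous, in $b$. For upper semicontinuity, we would extend a minimizer on $(1,b)$ by $q=1$ to a slightly longer interval, which changes the action by $O(b'-b)$. Restriction to a shorter interval only decreases the action, because the integrands are nonnegative. Together these give continuity of $m$ on $(1,b_1]$. We also need to rule out the supremum being approached only as $b\downarrow1$. Either $V(h)=0$ is attained in the limiting degenerate sense, or we show $V(h)>0$ using a test control with $q\equiv1$ near $b=1$: there $\log b\approx b-1$ while $\Acal=O((b-1)^2)$. So the maximum lies in the interior of $(1,b_1]$ and is attained by continuity, with the fixed-endpoint minimizer from the previous paragraph providing $q$.
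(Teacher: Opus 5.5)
\textbf{The gap is in your fixed-endpoint step.} You claim that $(w(c),w(s))\mapsto w(c)/w(s)$ is jointly convex, and this is false. The function $\phi(x,t)=x/t$ has Hessian determinant $-t^{-4}<0$. The perspective of $f(x)=x$ is $t f(x/t)=x$; the jointly convex quotient is $x^2/t$, not $x/t$.

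The action itself is also not convex in $w$. Take $w=t$ on $(1,c_1)$ and $w=x$ on $(c_1,b)$ with $x,t\ge1$. Then
\[
\Acal_{h,b}=\alpha x+\beta t+\gamma\,\frac{x}{t}+K,\qquad \gamma=(c_1-1)\Bigl(\frac1{c_1}-\frac1b\Bigr)>0.
\]
Midpoint convexity fails for the feasible points $(x,t)=(3,1+\sqrt2)$ and $(1,3-\sqrt2)$. The average of $x/t$ at these two points is about $0.937$, while its value at the midpoint $(2,2)$ is $1$. So the Ces\`aro means produced by Koml\'os need not remain minimizing. Your argument then yields an a.e.\ limit $w$ with no control of $\Acal_{h,b}(w)$ relative to the infimum.

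\textbf{How to repair it.} Average in the variable where convexity actually holds, namely $u_n=\log q_n$, which you introduced in your first paragraph.
\begin{itemize}
\item Since $|u_n|\le e^{-u_n}$, the first term bounds $u_n$ in $L^1((1-c^{-2})\dd c)$, so Koml\'os applies.
\item Both exponential integrands are convex in $u$, so the Ces\`aro means stay minimizing.
\item Fatou applied to both nonnegative integrands gives $\Acal_{h,b}(u)\le\inf$. For the double integral, apply it on the product of the convergence sets. The limit satisfies $u\le0$ and is finite a.e.
\end{itemize}

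Alternatively, follow the paper. Treat $q_n=Y_n'\in[0,1]$ and extract $Y_n\to Y$ uniformly with $Y_n'\to Y'$ weak-star. Then use the representation $a/v=\sup_{\theta\ge0}(2\sqrt{a\theta}-\theta v)$. This needs convexity only in $v=Y'$, because the numerator $a(c,Y_n)$ converges uniformly. That is exactly how the paper avoids the missing joint convexity. Finite action then forces $Y'>0$ a.e.

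\textbf{Your free-endpoint reduction is correct,} and it is a genuinely different and more elementary route than the paper's.
\begin{itemize}
\item Restriction shows that $m(b)=\min\Acal_{h,b}$ is nondecreasing in $b$.
\item Extension by $q=1$ gives $m(b')\le m(b)+K(b'-b)$, with $K$ uniform on compacts.
\item Hence $m$ is locally Lipschitz, and attainment of $V(h)$ reduces to fixed-endpoint existence at a maximizing $b$.
\end{itemize}
One small correction: lower semicontinuity of $m$ from the left comes from the extension bound applied at $b'<b$, not from restriction. The two bounds together do give continuity. The paper instead rescales the interval to $[0,1]$ and proves lower semicontinuity jointly in endpoint and control.
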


\begin{proof}[Proof sketch]
Write $Y(c)=\int_1^c q(s)\dd s$ and take the compact closure $0\le Y'\le1$. The reciprocal integrand is lower semicontinuous under uniform convergence of $Y$ and weak-star convergence of $Y'$ by the perspective identity
\[
\frac{a}{v}=\sup_{\theta\ge0}\bigl(2\sqrt{a\theta}-\theta v\bigr).
\]
Finite action forces $Y'>0$ almost everywhere. For the free endpoint, the bound
\[
\K_h(b,q)\le\log b-h(b+b^{-1}-2)
\]
prevents escape to infinity, while $\K_h(b,q)\le\log b$ prevents a positive maximizing sequence from approaching $b=1$. A rescaling to $[0,1]$ gives compactness when the endpoints vary. The complete lower-semicontinuity and reparameterization arguments are given in Appendix~\ref{app:direct-method}.
\end{proof}

\subsection{Logarithmic convexity and global sufficiency}

Fix $b>1$ and write $u(c):=\log q(c)\le0$. The following variational inequality turns the first-order conditions into a global optimality certificate; see, more generally,~\cite{Rockafellar1970}. Tonelli's theorem rewrites the action as
\begin{align}
\Acal_{h,b}(u)
={}&h\int_1^b(1-c^{-2})e^{-u(c)}\dd c
\label{eq:log-action-first}\\
&+\int_1^b\int_1^c\frac{e^{u(s)-u(c)}}{c^2}\dd s\dd c.
\label{eq:log-action-second}
\end{align}

\begin{proposition}[Strict logarithmic convexity]\label{prop:strict-convexity}
For fixed $h>0$ and $b>1$, $\Acal_{h,b}(u)$ is strictly convex on
\[
\Ucal_b:=\{u:(1,b)\to(-\infty,0]\text{ measurable}:\Acal_{h,b}(u)<\infty\}.
\]
The domain $\Ucal_b$ is convex, because convexity of the integrands in \cref{eq:log-action-first,eq:log-action-second} preserves finiteness along line segments. Consequently, the fixed-endpoint minimizing control is unique up to null sets.
\end{proposition}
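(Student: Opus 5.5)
The plan is to check convexity of each term directly, convexity of the domain, and then strictness from the first term, with uniqueness following at the end.

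\textbf{Convexity of each piece.} The first term \cref{eq:log-action-first} is $h\int_1^b w(c)\,e^{-u(c)}\dd c$ with weight $w(c)=1-c^{-2}>0$ on $(1,b)$. For each fixed $c$, the map $u\mapsto e^{-u(c)}$ is the exponential of a linear evaluation, so it is convex. Integrating against a positive weight preserves convexity. The second term \cref{eq:log-action-second} is an integral, with nonnegative weight $c^{-2}$ on the triangle $\{1<s\le c<b\}$, of $e^{u(s)-u(c)}$. Each of these is again the exponential of a linear functional of $u$, hence convex in $u$. Tonelli's theorem lets us integrate these pointwise convexity inequalities. For $u_0,u_1\in\Ucal_b$ and $u_t=(1-t)u_0+tu_1$, all integrands are nonnegative, so the pointwise inequality
\[
e^{\ell(u_t)}\le(1-t)e^{\ell(u_0)}+te^{\ell(u_1)}
\]
integrates to give
\[
\Acal_{h,b}(u_t)\le(1-t)\Acal_{h,b}(u_0)+t\Acal_{h,b}(u_1)<\infty.
\]
This gives convexity of $\Acal_{h,b}$. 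It also gives convexity of $\Ucal_b$, because $u_t\le0$ and $u_t$ is measurable. No integrability issue arises, since everything is a nonnegative integral that is possibly infinite and is bounded by a finite quantity.

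\textbf{Strictness.} Suppose $u_0\ne u_1$ on a set $E\subset(1,b)$ of positive measure, and fix $t\in(0,1)$. Strict convexity of $\exp$ gives
\[
e^{-u_t(c)}<(1-t)e^{-u_0(c)}+te^{-u_1(c)}\qquad\text{for every }c\in E.
\]
Because $h>0$ and $w>0$ almost everywhere on $E$, integrating this strict inequality over a positive-measure set yields a strict inequality in the first term. This step needs both sides finite, which holds since the terms are finite for $u_0,u_1\in\Ucal_b$. The second term satisfies the non-strict inequality. Adding the two gives $\Acal_{h,b}(u_t)<(1-t)\Acal_{h,b}(u_0)+t\Acal_{h,b}(u_1)$. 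The only mildly delicate point, and the one I expect to require care, is this strictness. Strict convexity of the second term alone could fail, since it is invariant under $u\mapsto u+\mathrm{const}$. So the hypothesis $h>0$ and the positivity of $1-c^{-2}$ on $(1,b)$ are exactly what is needed, and the argument should flag this explicitly.

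\textbf{Uniqueness.} If two minimizers $u_0,u_1$ differ on a set of positive measure, then $u_{1/2}\in\Ucal_b$ has strictly smaller action than the common minimum, which is a contradiction. A minimizer exists by \cref{prop:optimizer-existence}, and in the $q$-variables it satisfies $q>0$ almost everywhere. Hence $u=\log q$ is well defined almost everywhere and lies in $\Ucal_b$. The minimizing control is therefore unique up to null sets.
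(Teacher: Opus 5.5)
Your proposal is correct and takes the same route as the paper. Each integrand is the exponential of a linear functional of $u$, strictness comes only from the first term (strict convexity of $x\mapsto e^{-x}$ with the positive weight $h(1-c^{-2})$), and uniqueness follows by comparing with the midpoint; you also make explicit details the paper leaves implicit, namely the finiteness needed to integrate the strict inequality, the convexity of $\mathcal{U}_b$, and why the translation-invariant second term alone cannot give strictness.
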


\begin{proof}
Each integrand in \cref{eq:log-action-first,eq:log-action-second} is the exponential of a linear functional of $u$, and is therefore convex. If two controls differ on a set of positive measure, strict convexity of $x\mapsto e^{-x}$ and the positive weight $h(1-c^{-2})$ make the first integral strictly convex. Hence equality in the convexity inequality is possible only for controls that agree almost everywhere.
\end{proof}

For a fixed-endpoint minimizer $q=e^u$, define
\begin{equation}\label{eq:lambda}
\lambda(c):=\int_c^b\frac{\dd t}{q(t)t^2},\qquad c>1.
\end{equation}
It is finite on every compact subinterval of $(1,b]$. The first-variation density is
\begin{equation}\label{eq:g-gradient}
G(c):=q(c)\lambda(c)-\frac{N(c)}{q(c)c^2}.
\end{equation}

\begin{proposition}[KKT conditions and global sufficiency]\label{prop:global-kkt}
A fixed-endpoint minimizer satisfies
\begin{align}
G(c)&=0&&\text{for almost every }c\text{ with }q(c)<1,\label{eq:kkt-interior}\\
G(c)&\le0&&\text{for almost every }c\text{ with }q(c)=1.\label{eq:kkt-boundary}
\end{align}
Equivalently,
\begin{equation}\label{eq:q-kkt}
q(c)=\min\left\{1,\sqrt{\frac{N(c)}{\lambda(c)c^2}}\right\}
\quad\text{almost everywhere}.
\end{equation}
Conversely, any feasible control satisfying \cref{eq:kkt-interior,eq:kkt-boundary} is a global fixed-endpoint minimizer.
\end{proposition}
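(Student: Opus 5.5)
We work in the logarithmic variable $u=\log q$ and compute the directional (Gateaux) derivative of $\Acal_{h,b}$ from \cref{eq:log-action-first,eq:log-action-second}. For a bounded measurable perturbation $\phi$, differentiating under the integral gives
\[
\frac{\dd}{\dd t}\Acal_{h,b}(u+t\phi)\Big|_{t=0}
=-\int_1^b \frac{h(c^2-1)}{c^2}e^{-u(c)}\phi(c)\dd c
+\int_1^b\int_1^c\frac{e^{u(s)-u(c)}}{c^2}\bigl(\phi(s)-\phi(c)\bigr)\dd s\dd c .
\]
Exchanging the order in the $\phi(s)$ term via Tonelli produces $\int_1^b q(s)\phi(s)\lambda(s)\dd s$, with $\lambda$ as in \cref{eq:lambda}. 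The $\phi(c)$ term contributes $-\int Y(c)\phi(c)/(q(c)c^2)\dd c$. Together the derivative equals $\int_1^b G(c)\phi(c)\dd c$, with $G$ as in \cref{eq:g-gradient}.

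To justify differentiation under the integral, we restrict to $\phi$ supported in a compact $[1+\delta,b]$ with $|\phi|\le 1$. On such sets $\lambda$ is finite. Convexity of $t\mapsto e^{t x}$ gives monotone difference quotients, so monotone convergence applies, and finiteness of $\Acal$ along the segment follows from the convexity of the domain noted in \cref{prop:strict-convexity}.

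\textbf{Necessity.} Let $q=e^u$ be a minimizer. On $E_k=\{c\ge 1+1/k: q(c)<1-1/k\}$, both perturbations $\phi=\pm\one_{F}$ with $F\subseteq E_k$ are admissible for small $|t|$, since $u+t\phi\le 0$ still holds. One-sided optimality from both directions gives $\int_F G=0$ for every measurable $F\subseteq E_k$, hence $G=0$ almost everywhere on $E_k$. Letting $k\to\infty$ yields \cref{eq:kkt-interior}. On $\{q=1\}$ only $\phi\le 0$ is admissible, which gives $\int_F G\,\phi\ge 0$ for $\phi=-\one_F$, i.e.\ $G\le 0$, which is \cref{eq:kkt-boundary}. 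The required local integrability of $G$ on $[1+\delta,b]$ follows from finiteness of the action, since $N/(qc^2)$ is integrable and $q\lambda\le\lambda(1+\delta)$.

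\textbf{Formula \cref{eq:q-kkt}.} Setting $G=0$ gives $q^2\lambda=N/c^2$, i.e.\ $q=\sqrt{N/(\lambda c^2)}$, whenever $q<1$. The condition $G\le0$ at $q=1$ reads $\lambda\le N/c^2$, which is the same as $\sqrt{N/(\lambda c^2)}\ge1$. Conversely, the displayed min-formula implies both conditions. Note that $N$ and $\lambda$ depend on $q$ itself, so this is a fixed-point characterization rather than an explicit formula.

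\textbf{Sufficiency.} Let $u$ be feasible and satisfy the KKT conditions, and let $v\in\Ucal_b$ be arbitrary. Convexity (\cref{prop:strict-convexity}) gives
\[
\Acal(v)\ge\Acal(u)+\lim_{t\downarrow0}\frac{\Acal(u+t(v-u))-\Acal(u)}{t},
\]
where the limit exists in $(-\infty,\infty]$ by monotonicity of convex difference quotients. It therefore suffices to show this one-sided derivative is $\ge\int G\,(v-u)$ and that the latter is $\ge 0$. Pointwise, $G(v-u)\ge 0$ almost everywhere: either $G=0$, or else $q=1$, so $u=0$, $v-u\le 0$, and $G\le 0$.

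\textbf{Main obstacle.} The difficulty is that $v-u$ may be unbounded and need not be compactly supported, so the identity "derivative $=\int G\phi$" is not directly available. The plan is to truncate: take $\phi_n=(v-u)\one_{[1+1/n,b]}\one_{\{|v-u|\le n\}}$, so that $u+\phi_n$ interpolates between $u$ and $v$ pointwise and remains admissible, with $\Acal(u+\phi_n)\ge\Acal(u)+\int G\phi_n\ge\Acal(u)$ for each $n$. Then pass $n\to\infty$ using Fatou and dominated convergence. The integrands are exponentials of linear functionals, bounded by $\max(e^{\text{at }u},e^{\text{at }v})$ and hence integrable. This gives $\Acal(v)\ge\Acal(u)$, and global minimality follows.
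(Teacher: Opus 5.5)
Your architecture is the paper's. You take the first variation $\int_1^b G\varphi$ along bounded perturbations. Necessity comes from two-sided perturbations on $\{q<1\}$ and one-sided ones on $\{q=1\}$. Sufficiency comes from $\Acal(u+\phi)\ge\Acal(u)+\int_1^b G\phi\ge\Acal(u)$ for bounded $\phi$, followed by truncating $v-u$. Getting the bounded-case inequality from convexity of $t\mapsto\Acal(u+t\phi)$ is equivalent to the paper's termwise use of $e^x\ge1+x$. The necessity argument and the equivalence with \cref{eq:q-kkt} are fine. One small correction: the one-sided derivative in your first sufficiency paragraph lies in $[-\infty,\Acal(v)-\Acal(u)]$, not $(-\infty,\infty]$, since convex difference quotients decrease as $t\downarrow0$. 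The truncation route makes that paragraph unnecessary anyway.

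\textbf{The gap is the limit $n\to\infty$, which is where the unbounded case is actually settled.}

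First, Fatou's lemma points the wrong way. It yields $\Acal(v)\le\liminf_n\Acal(u+\phi_n)$. Since you know $\Acal(u+\phi_n)\ge\Acal(u)$, you need instead $\limsup_n\Acal(u+\phi_n)\le\Acal(v)$, so everything rests on an integrable majorant.

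Second, the majorant you propose is false for the double integral \cref{eq:log-action-second}. The function $w_n:=u+\phi_n$ equals $v$ at some points and $u$ at others. The integrand $e^{w_n(s)-w_n(c)}/c^2$ can therefore be a cross term such as $e^{v(s)-u(c)}/c^2$. For example, take $n=5$, $u(s)=-3$, $v(s)=0$, $u(c)=-10$, $v(c)=0$. Then $w_n(s)=0$ and $w_n(c)=-10$, giving $e^{10}$, while $\max\{e^{u(s)-u(c)},e^{v(s)-v(c)}\}=e^{7}$. The phrase ``exponential of a linear functional'' gives a max-type bound only when the argument is a convex combination of $u$ and $v$ with a weight common to all points. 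A pointwise truncation mixes $u$ and $v$ at different points; the paper's clipping has the same feature.

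The missing idea is to use nonpositivity of $w_n$ to discard the inner variable:
\[
e^{w_n(s)-w_n(c)}\le e^{-w_n(c)}\le e^{-u(c)}+e^{-v(c)},
\qquad\text{so}\qquad
\int_1^c\frac{e^{w_n(s)-w_n(c)}}{c^2}\dd s\le\frac{c-1}{c^2}\bigl(e^{-u(c)}+e^{-v(c)}\bigr).
\]
Because $(c-1)/c^2\le1-c^{-2}$, this is integrable by finiteness of the first term \cref{eq:log-action-first} at both $u$ and $v$; here $h>0$ is used. The same bound $h(1-c^{-2})(e^{-u}+e^{-v})$ dominates the first term. With this majorant and the pointwise convergence $w_n\to v$, dominated convergence gives $\Acal(u+\phi_n)\to\Acal(v)$, and your argument closes exactly as the paper's does.
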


\begin{proof}
\emph{Necessary conditions.}
Let $\varphi$ be bounded and supported in $[1+\eta,b-\eta]$ for some $\eta>0$, and put $u_\tau=u+\tau\varphi$. For $|\tau|$ small, each integrand in \cref{eq:log-action-first,eq:log-action-second} is bounded by a constant multiple of the corresponding integrand at $u$. Differentiation under the integral sign and Fubini's theorem are therefore justified. The derivative at zero is
\begin{equation}\label{eq:first-variation}
\left.\frac{\dd}{\dd\tau}\Acal_{h,b}(u_\tau)\right|_{\tau=0}
=\int_1^bG(c)\varphi(c)\dd c.
\end{equation}
Indeed, the occurrences of $u(c)$ as the inner variable contribute $q(c)\lambda(c)$, while the first term and the occurrences as the outer variable contribute $-N(c)/(q(c)c^2)$.

On $\{q\le1-\delta\}\cap[1+\eta,b-\eta]$, both signs of sufficiently small perturbations are feasible, so \cref{eq:first-variation} implies $G=0$ there. Taking the union over rational $\delta,\eta>0$ proves \cref{eq:kkt-interior}. On the active set $\{q=1\}\cap[1+\eta,b-\eta]$, every bounded $\varphi\le0$ is a feasible one-sided direction; minimality gives $\int G\varphi\ge0$, and hence $G\le0$ almost everywhere there. Letting $\eta\downarrow0$ proves \cref{eq:kkt-boundary}. Algebraically these two conditions are exactly \cref{eq:q-kkt}.

\emph{Global sufficiency.}
Let $\bar u\le0$ be any other feasible logarithmic control. If $d:=\bar u-u$ is bounded, apply $e^x\ge1+x$ separately to every exponential in \cref{eq:log-action-first,eq:log-action-second}. After collecting linear terms one obtains the global convexity inequality
\begin{equation}\label{eq:global-sufficiency}
\Acal_{h,b}(\bar u)-\Acal_{h,b}(u)
\ge\int_1^bG(c)(\bar u(c)-u(c))\dd c.
\end{equation}
On $\{q<1\}$ the integrand on the right is zero. On $\{q=1\}$, one has $u=0$, $\bar u\le0$, and $G\le0$, so the integrand is nonnegative. Thus $\Acal_{h,b}(\bar u)\ge\Acal_{h,b}(u)$.

For an unbounded log-ratio, let $d_n=\max\{-n,\min\{n,d\}\}$ and $u_n=u+d_n$. Pointwise, $u_n$ lies between $u$ and $\bar u$, so $u_n\le0$. The bounded case gives $\Acal_{h,b}(u_n)\ge\Acal_{h,b}(u)$. Moreover,
\[
e^{-u_n(c)}\le e^{-u(c)}+e^{-\bar u(c)}
\]
and, since $u_n(s)\le0$,
\[
\int_1^c\frac{e^{u_n(s)-u_n(c)}}{c^2}\dd s
\le\frac{c-1}{c^2}\bigl(e^{-u(c)}+e^{-\bar u(c)}\bigr).
\]
The right-hand sides are integrable because $(c-1)/c^2\le1-c^{-2}$ and both controls have finite action. Dominated convergence gives $\Acal_{h,b}(u_n)\to\Acal_{h,b}(\bar u)$, completing the proof.
\end{proof}

\subsection{Regularity and the one-switch structure}

\begin{lemma}[Regularity of the switching function]\label{lem:switch-regularity}
A minimizing control has a representative for which $q$ is continuous and strictly positive on $(1,b)$, $N,\lambda\in C^1(1,b)$, and
\[
N'=q+2hc,\qquad \lambda'=-\frac1{qc^2}.
\]
Consequently,
\begin{equation}\label{eq:Delta}
\Delta(c):=N(c)-\lambda(c)c^2
\end{equation}
belongs to $C^1(1,b)$.
\end{lemma}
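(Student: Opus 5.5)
The plan is to read regularity off the KKT formula \cref{eq:q-kkt} of \cref{prop:global-kkt} and bootstrap. First, $N(c)=Y(c)+h(c^2-1)$ with $Y(c)=\int_1^c q$ is absolutely continuous, since $0<q\le1$. Next, $\lambda(c)=\int_c^b \dd t/(q(t)t^2)$ is finite on compact subintervals of $(1,b]$ (as noted after \cref{eq:lambda}), so $\lambda$ is absolutely continuous on such subintervals. On $(1,b)$ we have $\lambda>0$, because $1/q\ge1$ gives $\lambda(c)\ge c^{-1}-b^{-1}>0$. Likewise $N(c)\ge h(c^2-1)>0$ for $c>1$. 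Hence the function
\[
\Phi(c):=\min\left\{1,\sqrt{\frac{N(c)}{\lambda(c)c^2}}\right\}
\]
is continuous and strictly positive on $(1,b)$. By \cref{eq:q-kkt}, $q=\Phi$ almost everywhere.

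We then take $\Phi$ as the representative of $q$. Changing $q$ on a null set leaves $Y$, $N$, $\lambda$ and the action unchanged, so this representative is still a minimizer with the same $N$ and $\lambda$. With $q$ continuous, the fundamental theorem of calculus applies pointwise. From $N(c)=\int_1^c q(s)\dd s+h(c^2-1)$ we get $N\in C^1(1,b)$ with $N'=q+2hc$. From $\lambda(c)=\int_c^b \dd t/(q(t)t^2)$, whose integrand is continuous on $(1,b)$ because $q$ is continuous and positive, we get $\lambda\in C^1(1,b)$ with $\lambda'=-1/(qc^2)$. Then $\Delta=N-\lambda c^2$ is a combination of $C^1$ functions, so $\Delta\in C^1(1,b)$.

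The only point needing care is strict positivity of the continuous representative near the left endpoint. As $c\downarrow1$ we have $N(c)\to0$, so $\Phi$ may tend to zero there. However, the lemma only asks for positivity on the open interval $(1,b)$, and there both $N>0$ and $\lambda<\infty$ hold. Near $b$, $\lambda\to0$, so $\Phi=1$ there, which is harmless. The remaining technical check is that \cref{eq:q-kkt} holds almost everywhere on the whole interval $(1,b)$, not only on compact subintervals. This follows from \cref{prop:global-kkt}, since that result took the union over $\eta\downarrow0$.
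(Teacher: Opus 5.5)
Your proposal is correct and follows essentially the same route as the paper. Both arguments observe that $N$ and $\lambda$ are continuous and positive on $(1,b)$, so the right-hand side of \cref{eq:q-kkt} is a continuous, strictly positive representative of $q$ that leaves $Y$, $N$, $\lambda$ and the action unchanged, after which the fundamental theorem of calculus gives the $C^1$ identities. Your explicit bound $\lambda(c)\ge c^{-1}-b^{-1}$ fills in a positivity step the paper leaves implicit.
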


\begin{proof}
The functions $N$ and $\lambda$ are continuous and positive on $(1,b)$ before redefining the control on a null set. The right-hand side of \cref{eq:q-kkt} is therefore a continuous, strictly positive function on $(1,b)$. Choose that continuous representative of $q$. Since it agrees with the original control almost everywhere, it does not change $Y$ or the action. The displayed derivative identities then imply the asserted $C^1$ regularity.
\end{proof}

By \cref{eq:q-kkt},
\begin{equation}\label{eq:Delta-sign}
q(c)<1\iff\Delta(c)<0,
\qquad
q(c)=1\iff\Delta(c)\ge0.
\end{equation}

\begin{lemma}[Unique switch]\label{lem:one-switch}
Every free-endpoint maximizer has a unique $s\in(1,b)$ such that
\begin{equation}\label{eq:one-switch}
q(c)<1\quad(1<c<s),\qquad q(c)=1\quad(s\le c\le b).
\end{equation}
\end{lemma}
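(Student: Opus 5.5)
The plan is to read the switch off the sign of the $C^1$ function $\Delta=N-\lambda c^2$ from \cref{lem:switch-regularity}. By \cref{eq:Delta-sign}, $q<1$ exactly where $\Delta<0$ and $q=1$ exactly where $\Delta\ge0$. So \cref{eq:one-switch} follows once I show three things: $\Delta<0$ near $c=1$, $\Delta>0$ near $c=b$, and every zero of $\Delta$ in $(1,b)$ is a strict upcrossing.

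A free-endpoint maximizer $(b,q)$, which exists by \cref{prop:optimizer-existence}, is in particular a fixed-endpoint minimizer of $\Acal_{h,b}$. Hence \cref{prop:global-kkt} and \cref{lem:switch-regularity} apply to it.

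\emph{Boundary signs.}
\begin{itemize}
\item Near the right end: $\lambda(c)\to0$ as $c\uparrow b$, while $N(c)\to N(b)\ge h(b^2-1)>0$. Hence $\Delta>0$ on some interval $(b-\eta,b)$.
\item Near the left end: $N(c)\to0$ as $c\downarrow1$, while $\lambda(c)$ increases to $\lambda(1^+)\in(0,\infty]$. Here $\lambda(1^+)>0$ because $1/(qt^2)\ge t^{-2}>0$. Hence $\Delta<0$ on some $(1,1+\eta)$.
\end{itemize}
By continuity, $\Delta$ therefore has at least one zero in $(1,b)$.

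\emph{Transversality at zeros.} Using $N'=q+2hc$ and $\lambda'=-1/(qc^2)$,
\[
\Delta'(c)=q(c)+2hc+\frac{1}{q(c)}-2c\lambda(c).
\]
At a zero $c_0$ of $\Delta$, the formula \cref{eq:q-kkt} gives $q(c_0)=1$ and $\lambda(c_0)=N(c_0)/c_0^2$. Substituting,
\[
\Delta'(c_0)=2+2hc_0-\frac{2N(c_0)}{c_0}.
\]
Since $0<q\le1$, we have $Y(c_0)\le c_0-1$, so $N(c_0)\le(c_0-1)+h(c_0^2-1)$. Therefore
\[
\Delta'(c_0)\ge 2+2hc_0-2+\frac{2}{c_0}-2hc_0+\frac{2h}{c_0}=\frac{2(1+h)}{c_0}>0.
\]
A $C^1$ function whose every zero has positive derivative can vanish at most once. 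Indeed, between two consecutive upcrossings it would need a downcrossing, or a zero with nonpositive derivative.

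Combining the two parts, $\Delta$ has exactly one zero $s\in(1,b)$. We get $\Delta<0$ on $(1,s)$ and $\Delta>0$ on $(s,b)$, and \cref{eq:Delta-sign} then yields \cref{eq:one-switch}. The value $q=1$ at $c=b$ itself is a choice of representative and does not matter.

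The main obstacle I anticipate is the left-endpoint behaviour. One must check that $\lambda(1^+)>0$, which is immediate, and that $N\to0$, so no zero can accumulate at $c=1$. One also needs $\Delta$ to be $C^1$ on the open interval, which \cref{lem:switch-regularity} supplies. The crossing inequality itself, which relies only on the crude bound $Y\le c-1$, is the heart of the argument and seems robust.
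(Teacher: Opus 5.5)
Your proposal is correct and matches the paper's proof step for step. It uses the same boundary signs of $\Delta$: near $c=1$, $N\to0$ while $\lambda$ stays bounded below, and near $c=b$, $\lambda\to0$ while $N(b)>0$. It uses the same transversality bound $\Delta'(c_0)=2(c_0+h-Y(c_0))/c_0\ge 2(1+h)/c_0$ from $Y\le c-1$, and the same conclusion that a $C^1$ function whose zeros are all strict upcrossings vanishes at most once. Your ``consecutive zeros'' argument and the paper's ``first return to zero'' argument are interchangeable.
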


\begin{proof}
At every zero of $\Delta$, \cref{eq:q-kkt} gives $q=1$ and $\lambda=N/c^2$. Hence
\begin{align}
\Delta'(c)
&=q(c)+2hc+\frac1{q(c)}-2\lambda(c)c\notag\\
&=\frac{2(c+h-Y(c))}{c}
\ge\frac{2(1+h)}c>0,
\label{eq:Delta-positive-crossing}
\end{align}
where the inequality uses $Y(c)\le c-1$. Thus every zero is a strict crossing from negative to positive.

Near $c=1$, $N(c)\to0$, while $\lambda$ is bounded below by its positive integral over any fixed terminal subinterval, so $\Delta<0$. Near $b$, $\lambda(c)\to0$ and $N(b)>0$, so $\Delta>0$. A zero therefore exists. If there were a second zero, the first return to zero after the initial crossing would have a nonpositive left derivative, contradicting \cref{eq:Delta-positive-crossing} and the $C^1$ regularity from \cref{lem:switch-regularity}. The sign rule \cref{eq:Delta-sign} gives \cref{eq:one-switch}.
\end{proof}

\begin{lemma}[Endpoint regularity]\label{lem:endpoint-regularity}
On the interior arc $(1,s)$,
\begin{equation}\label{eq:interior-lambda-relation}
\lambda(c)=\frac{N(c)}{q(c)^2c^2}.
\end{equation}
Moreover,
\[
0<\lambda(1+)<\infty,\qquad
q(c)\to0\quad(c\downarrow1),\qquad
\int_1^s\frac{\dd c}{q(c)}<\infty.
\]
\end{lemma}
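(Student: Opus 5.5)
The interior relation \cref{eq:interior-lambda-relation} follows directly from the KKT conditions. By \cref{lem:one-switch}, $q<1$ on $(1,s)$, and by \cref{lem:switch-regularity} we may use the continuous representative. Then \cref{eq:kkt-interior} holds pointwise, i.e.\ $G\equiv0$ on $(1,s)$. Rearranging $q\lambda=N/(qc^2)$ gives \cref{eq:interior-lambda-relation}.

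For the endpoint behaviour at $c=1$, the plan is to turn \cref{eq:interior-lambda-relation} into a differential relation and integrate it. Write $\lambda=N/(q^2c^2)$ and $\lambda'=-1/(qc^2)$, so that $\Lambda:=\lambda c^2=N/q^2$. Then $\Lambda'=2c\lambda-1/q$.

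The first claim is $\lambda(1+)<\infty$. The key observation is that $\lambda$ is nonincreasing, so $\lambda(1+)$ exists in $(0,\infty]$, and it is positive because $\lambda(1+)\ge\lambda(s)>0$. To show finiteness, use $q\le1$ and $Y(c)\le c-1$ to get $N(c)\le(c-1)(1+2hc)$. Hence
\[
q^2\lambda c^2=N\le (c-1)(1+2hc),
\]
and therefore $1/q\ge \sqrt{\lambda c^2/((c-1)(1+2hc))}$. Substituting into $-\lambda'=1/(qc^2)$ gives
\[
-\lambda'\ge \frac{\sqrt{\lambda}}{c\sqrt{(c-1)(1+2hc)}}.
\]
This shows $(\sqrt\lambda)'$ is bounded below by $-K/\sqrt{c-1}$ only in the wrong direction, so a different source of finiteness is needed. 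Finiteness of the action supplies it:
\[
\infty>\Acal\ge h\int_1^s(1-c^{-2})\,\frac{\dd c}{q},
\]
so $\int (c-1)/q\,\dd c<\infty$. I would combine this with the identity
\[
\lambda(1+)-\lambda(s)=\int_1^s\frac{\dd c}{qc^2}
\]
and bootstrap. Suppose $\lambda(1+)=\infty$. Then $q=\sqrt{N/\Lambda}$, and with $N\le K(c-1)$ this gives $q\le\sqrt{K(c-1)/\lambda}$. Feeding this into $-\lambda'=1/(qc^2)$ yields
\[
-\lambda'\ge \frac{\sqrt{\lambda}}{c^2\sqrt{K(c-1)}},
\]
hence $(\sqrt\lambda)'\le -\dfrac{1}{2c^2\sqrt{K(c-1)}}$. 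This is integrable near $1$ and gives only an upper estimate on $\sqrt\lambda$ growth from the right, which is consistent but not contradictory.

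The cleaner route is to use the first term of the action together with the lower bound $Y\ge0$ for a bound in the other direction: $N\ge h(c^2-1)\ge 2h(c-1)$. Then
\[
\frac1q=\sqrt{\frac{\lambda c^2}{N}}\le\sqrt{\frac{\lambda c^2}{2h(c-1)}},
\]
so $-\lambda'\le \dfrac{\sqrt\lambda}{c\sqrt{2h(c-1)}}$. Equivalently $-(\sqrt\lambda)'\le \dfrac{1}{2c\sqrt{2h(c-1)}}$. The right side is integrable at $1$, so integrating from $c$ to $s$ bounds $\sqrt{\lambda(c)}\le\sqrt{\lambda(s)}+K'$ uniformly. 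This gives $\lambda(1+)<\infty$, which is the main step.

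The remaining claims then follow. The limit $\int_1^s\dd c/q<\infty$ holds because $\int_1^s \dd c/(qc^2)=\lambda(1+)-\lambda(s)<\infty$ and $c^2\le s^2$ on the arc. Finally $q^2=N/(\lambda c^2)$ with $N(c)\to0$ and $\lambda(c)\to\lambda(1+)>0$, so $q(c)\to0$ as $c\downarrow1$. The main obstacle is choosing the correct one-sided bound on $N$ (namely $N\ge h(c^2-1)$ rather than an upper bound), since that is what makes the $\sqrt\lambda$ comparison integrable.
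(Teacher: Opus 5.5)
Your argument is correct and, once the abandoned detours are removed, it is essentially the paper's proof. Your bound $-(\sqrt\lambda)'\le 1/(2c\sqrt{2h(c-1)})$ is just the exact identity $\frac{\dd}{\dd c}\sqrt{\lambda}=-\frac{1}{2c\sqrt{N}}$ (which follows from $1/q=c\sqrt{\lambda}/\sqrt{N}$) combined with $N\ge h(c^2-1)$. The only variation is the last step, where you obtain $\int_1^s\dd c/q<\infty$ from $\lambda(1+)-\lambda(s)=\int_1^s\dd c/(qc^2)$ instead of the paper's pointwise bound $1/q\le K/\sqrt{c^2-1}$; both are fine, and a written version should drop the exploratory false starts (the upper bound on $N$ and the attempted contradiction).
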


\begin{proof}
Equation \cref{eq:interior-lambda-relation} is the equality case of \cref{eq:q-kkt}. Differentiating its square root through $\lambda'=-1/(qc^2)$ gives
\begin{equation}\label{eq:sqrt-lambda}
\frac{\dd}{\dd c}\sqrt{\lambda(c)}=-\frac1{2c\sqrt{N(c)}}.
\end{equation}
Since $N(c)\ge h(c^2-1)$,
\[
\int_1^s\frac{\dd c}{c\sqrt{N(c)}}
\le\frac1{\sqrt h}\int_1^s\frac{\dd c}{c\sqrt{c^2-1}}<\infty.
\]
Thus $0<\lambda(1+)<\infty$. Because $N(c)\to0$, \cref{eq:interior-lambda-relation} gives $q(c)\to0$. Finally,
\[
\frac1{q(c)}=\frac{c\sqrt{\lambda(c)}}{\sqrt{N(c)}}
\le\frac{K}{\sqrt{c^2-1}}
\]
near one, which is integrable.
\end{proof}

\subsection{The free-endpoint equation}

\begin{lemma}[Terminal condition]\label{lem:terminal-condition}
Every free-endpoint maximizer satisfies
\begin{equation}\label{eq:N-b}
N(b)=b.
\end{equation}
\end{lemma}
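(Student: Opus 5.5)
The plan is to compare a free-endpoint maximizer $(b,q)$, which exists by \cref{prop:optimizer-existence}, with the two competitors obtained by moving the endpoint slightly while leaving the control on $(1,b)$ essentially untouched. The key observation is that in \cref{eq:K-definition} the integrand at $c$ depends on $q$ only through $q(c)$ and $Y(c)=\int_1^c q$. Consequently, truncating or extending the control does not change the integrand at earlier points, and the map $b'\mapsto\K_h(b',q|_{(1,b')})$ is governed by the integrand at the moving endpoint alone.

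First, I would handle shortening the interval. For $1<b'<b$, the restriction $q|_{(1,b')}$ lies in $\Qcal_{b'}$, since its action is part of a finite integral. Its value is
\[
\K_h(b',q|_{(1,b')})=\log b'-\int_1^{b'}\frac{N(c)}{q(c)c^2}\dd c .
\]
By \cref{lem:one-switch}, $q\equiv1$ on $[s,b]$ with $s<b$, and $N$ is continuous by \cref{lem:switch-regularity}. Hence this function of $b'$ is $C^1$ on $(s,b]$, with derivative $1/b'-N(b')/b'^2$. Maximality of $(b,q)$ means that it is maximized at the right end $b'=b$. The left derivative must therefore be nonnegative, which gives $1/b-N(b)/b^2\ge0$, that is, $N(b)\le b$.

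Second, I would handle lengthening the interval. For $b'>b$, extend the control by $q\equiv1$ on $[b,b')$. This is admissible, and the integrand is bounded there, so the integral remains finite. On the extension, $N(c)=N(b)+(c-b)+h(c^2-b^2)$ is continuous. The same computation shows that the right derivative of $b'\mapsto\K_h$ at $b$ is again $1/b-N(b)/b^2$. Maximality forces this right derivative to be $\le0$, so $N(b)\ge b$. Combining the two inequalities gives \cref{eq:N-b}.

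I do not expect a serious obstacle here. The only point requiring care is justifying the one-sided derivatives at $b$, which needs the integrand to be continuous at $c=b$ from both sides. On the left this follows from the one-switch structure, which gives $q=1$ on a whole interval $[s,b]$ rather than only a.e. near $b$, together with the $C^1$ regularity of $N$. On the right it holds because the extension is chosen with $q\equiv1$. One should also note that $b>1$ is a genuine interior point of the free-endpoint range, so both perturbations are allowed. This is guaranteed because the supremum is attained at some $b>1$ by \cref{prop:optimizer-existence}.
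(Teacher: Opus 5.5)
Your proposal is correct and follows essentially the same argument as the paper. Truncation yields $1/b-N(b)/b^2\ge0$ and extension by $q\equiv1$ yields the reverse inequality. In both cases the one-sided limits of the running reward at $b$ exist because \cref{lem:one-switch} gives $q=1$ on $[s,b]$.
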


\begin{proof}
Let
\[
L(c):=\frac1c-\frac{N(c)}{q(c)c^2}
\]
be the running reward. By \cref{lem:one-switch}, $q=1$ on a left neighborhood of $b$, and $L$ has a continuous left limit there. For small $\delta>0$, maximality against truncation gives
\begin{equation}\label{eq:terminal-truncate}
0\le \K_h(b,q)-\K_h(b-\delta,q|_{(1,b-\delta)})
=\int_{b-\delta}^bL(c)\dd c.
\end{equation}
Hence $L(b-)\ge0$.

For extension, set $\widetilde q=1$ on $[b,b+\delta]$ and define
\[
\widetilde Y(c)=Y(b)+c-b,
\qquad
\widetilde N(c)=\widetilde Y(c)+h(c^2-1).
\]
Then maximality gives
\begin{equation}\label{eq:terminal-extend}
0\ge \K_h(b+\delta,\widetilde q)-\K_h(b,q)
=\int_b^{b+\delta}
\left(\frac1c-\frac{\widetilde N(c)}{c^2}\right)\dd c.
\end{equation}
Divide \cref{eq:terminal-truncate,eq:terminal-extend} by $\delta$ and let $\delta\downarrow0$. The two continuous one-sided limits agree, so $L(b)=0$. Since $q(b)=1$, this is exactly $N(b)=b$.
\end{proof}

\subsection{Integration of the interior trajectory}

By \cref{lem:switch-regularity} and the interior equality \cref{eq:interior-lambda-relation}, the control is $C^1$ on $(1,s)$. Differentiating that equality and using $\lambda'=-1/(qc^2)$ gives
\begin{equation}\label{eq:q-prime}
q'=\frac{q(cq+h-Y)}{cN}.
\end{equation}
The following normalization isolates the integrable part of the dynamics. The state $y=N/(hc^2)$ removes the deterministic quadratic drift in $N$; the variable $v=q/(hc)$ puts the control on the same scale; and $z=v/y=qc/N$ is the ratio appearing in the running reward. Set
\begin{equation}\label{eq:y-v-z}
t:=\log c,\qquad y:=\frac{N}{hc^2},\qquad v:=\frac{q}{hc},\qquad z:=\frac vy.
\end{equation}
Substitution of $N'=q+2hc$ and \cref{eq:q-prime} gives
\begin{align}
\dot y&=v+2-2y,\label{eq:y-dot}\\
\dot v&=\frac{v(v+1-2y)}y,\label{eq:v-dot}\\
\dot z&=-\frac zy.\label{eq:z-dot}
\end{align}
Finally, $w=1/y$ makes the equation in $z$ linear. Eliminating $t$ yields
\[
\frac{\dd w}{\dd z}=1+\frac{2(w-1)}z,
\qquad
\frac{\dd}{\dd z}\left(\frac{w}{z^2}\right)
=\frac1{z^2}-\frac2{z^3}.
\]
Integration gives $w/z^2=-1/z+1/z^2+\kappa$. Therefore every interior trajectory has the first integral
\begin{equation}\label{eq:first-integral}
\boxed{\frac1y=1-z+\kappa z^2}.
\end{equation}
Moreover,
\[
\frac{1/y}{z^2}=\frac{y}{v^2}=h\lambda(c)c^2,
\]
so \cref{lem:endpoint-regularity} implies $\kappa=h\lambda(1+)>0$. Also
\begin{equation}\label{eq:z-infinity}
z=\frac{qc}{N}=\frac1{\sqrt{\lambda N}}\longrightarrow\infty
\qquad(c\downarrow1),
\end{equation}
and hence
\begin{equation}\label{eq:log-c-z}
\log c=\int_z^\infty\frac{\dd u}{u(1-u+\kappa u^2)}.
\end{equation}

Let
\begin{equation}\label{eq:r-definition}
r:=\frac bs>1,\qquad a_r:=\frac{r-1}{r}.
\end{equation}
On the terminal arc $q=1$,
\begin{equation}\label{eq:lambda-s}
\lambda(s)=\int_s^b\frac{\dd c}{c^2}=\frac1s-\frac1b.
\end{equation}
At the switch $\Delta(s)=0$, so
\begin{equation}\label{eq:N-s}
N(s)=s^2\lambda(s)=s\left(1-\frac1r\right)=a_rs.
\end{equation}
Integrating $N'=1+2hc$ from $s$ to $b$ and using $N(b)=b$ yields
\begin{equation}\label{eq:hs-r}
hs=\frac1{r(r^2-1)}.
\end{equation}
At the switch,
\begin{equation}\label{eq:switch-y-v-z}
y_s=(r-1)(r^2-1),\qquad
v_s=r(r^2-1),\qquad
z_s=\frac r{r-1}.
\end{equation}
Substitution in \cref{eq:first-integral} gives
\begin{equation}\label{eq:kappa-r}
\kappa=\frac1{r+1}.
\end{equation}

Set $x:=1/z$. Then
\begin{equation}\label{eq:y-v-x}
y=\frac{x^2}{d_r(x)},\qquad
v=\frac{x}{d_r(x)},\qquad
d_r(x):=x^2-x+\frac1{r+1},
\end{equation}
and
\begin{equation}\label{eq:log-c-x}
\frac{\dd\log c}{\dd x}=\frac{x}{d_r(x)}.
\end{equation}
The variable $x$ runs from $0$ at $c=1$ to $a_r$ at $c=s$. Positivity of $y$ requires $d_r>0$ throughout this interval. This forces $1<r<3$: if $r\ge3$, then $d_r(1/2)\le0$ and $1/2<a_r$.

Define, for $1<r<3$,
\begin{align}
T(r)&:=\int_0^{a_r}\frac{x}{d_r(x)}\dd x,\label{eq:T-definition}\\
J(r)&:=\int_0^{a_r}\frac{x(1-x)}{d_r(x)}\dd x.\label{eq:J-definition}
\end{align}
Then $s=e^{T(r)}$, and \cref{eq:hs-r} becomes
\begin{equation}\label{eq:Hminus}
h=H_-(r):=\frac{e^{-T(r)}}{r(r^2-1)}.
\end{equation}

\begin{lemma}[Parameter uniqueness]\label{lem:Hminus-bijection}
The function $H_-$ is continuous and strictly decreasing from $+\infty$ to $0$ on $(1,3)$. Hence, for every $h>0$, there is a unique $r(h)\in(1,3)$ satisfying $h=H_-(r(h))$.
\end{lemma}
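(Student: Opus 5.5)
The plan is to show directly from \cref{eq:Hminus} that $\log H_-(r)=-T(r)-\log\bigl(r(r^2-1)\bigr)$ is the sum of two strictly decreasing continuous functions. The factor $r(r^2-1)$ is continuous, positive and strictly increasing on $(1,3)$, so it suffices to prove three things about $T$: it is continuous, it is strictly increasing, and it satisfies $T(1+)=0$ and $T(3-)=+\infty$. The limits of $H_-$ then follow, and the intermediate value theorem together with strict monotonicity gives existence and uniqueness of $r(h)$ for every $h>0$.

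\textbf{Positivity of $d_r$ and continuity.} First write
\[
d_r(x)=\Bigl(x-\tfrac12\Bigr)^2+\tfrac1{r+1}-\tfrac14 .
\]
Hence $d_r(x)\ge \tfrac{3-r}{4(r+1)}>0$ for all $x$ whenever $1<r<3$. On any compact subinterval $[r_1,r_2]\subset(1,3)$ this lower bound is uniform. Since $0\le a_r<1$, the integrand $x/d_r(x)$ is jointly continuous and bounded on $\{(x,r):0\le x\le a_r,\ r\in[r_1,r_2]\}$. The upper limit $a_r=(r-1)/r$ depends continuously on $r$. Therefore $T$ is continuous on $(1,3)$, by dominated convergence or simply by uniform continuity of the integrand.

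\textbf{Strict monotonicity of $T$.} For each fixed $x$, the map $r\mapsto d_r(x)$ is strictly decreasing, because $1/(r+1)$ is. Take $1<r<r'<3$. For $x\in(0,a_r]$ the integrand satisfies $x/d_{r'}(x)>x/d_r(x)>0$. Moreover $a_{r'}>a_r$, and the additional piece $\int_{a_r}^{a_{r'}}x/d_{r'}\,dx$ is positive. Hence $T(r')>T(r)$.

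\textbf{Boundary limits.} As $r\downarrow1$ we have $a_r\to0$, while $d_r\ge\tfrac{3-r}{4(r+1)}$ stays bounded below near $r=1$. So $T(r)\to0$, while $r(r^2-1)\to0^+$, and therefore $H_-(r)\to+\infty$.

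The case $r\uparrow3$ is the only delicate step. There $a_r\to2/3>1/2$, so for $r$ near $3$ the interval $[0,a_r]$ contains $[1/3,\,0.6]$, a fixed neighbourhood of $1/2$. On that neighbourhood
\[
\frac{x}{d_r(x)}\ge \frac{1/3}{(x-1/2)^2+\varepsilon_r},\qquad \varepsilon_r:=\tfrac1{r+1}-\tfrac14\downarrow0 .
\]
Integrating this lower bound gives
\[
T(r)\ge \tfrac13\,\varepsilon_r^{-1/2}\bigl(\arctan(0.1/\sqrt{\varepsilon_r})+\arctan(1/(6\sqrt{\varepsilon_r}))\bigr)\to+\infty .
\]
Alternatively, monotone convergence applies, since the limiting integrand has the nonintegrable singularity $(x-1/2)^{-2}$. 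In either case $e^{-T(r)}\to0$, while $r(r^2-1)\to24$, so $H_-(r)\to0$.

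Combining these three facts, $H_-$ is a continuous strict decreasing bijection of $(1,3)$ onto $(0,\infty)$. This yields the unique $r(h)$.
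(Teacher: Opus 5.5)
Your proof is correct and follows essentially the same route as the paper. $T$ is strictly increasing because both the integrand $x/d_r(x)$ and the interval $[0,a_r]$ grow with $r$, and $r(r^2-1)$ is strictly increasing. The endpoint limits come from $a_r\to0$ as $r\downarrow1$, and from the double zero of $d_r$ at $x=1/2$ as $r\uparrow3$.

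Your additions make rigorous what the paper states only briefly. The uniform bound $d_r\ge(3-r)/(4(r+1))$ gives continuity of $T$, which the paper does not prove, and the arctangent lower bound quantifies the blow-up of $T$ as $r\uparrow3$.
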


\begin{proof}
As $r$ increases, the endpoint $a_r$ increases and the constant term $1/(r+1)$ in $d_r$ decreases. Thus the positive integrand and its interval both increase, so $T$ is strictly increasing. The factor $r(r^2-1)$ is also strictly increasing, and therefore $H_-$ is strictly decreasing. As $r\downarrow1$, $T(r)\to0$ and $r(r^2-1)\to0$, so $H_-(r)\to\infty$. As $r\uparrow3$, $d_r$ develops a double zero at $x=1/2$ in the integration interval, so $T(r)\to\infty$ and $H_-(r)\to0$.
\end{proof}

\begin{lemma}[Unique trajectory reconstruction]\label{lem:trajectory-reconstruction}
Fix $h>0$ and let $r=r(h)$. Put
\[
a=a_r,\qquad d=d_r,\qquad s=e^{T(r)},\qquad b=rs.
\]
There is exactly one one-switch trajectory satisfying the KKT equations and the terminal condition. It is given on the interior arc by
\begin{align}
c(x)&=\exp\left(\int_0^x\frac{u}{d(u)}\dd u\right),\label{eq:c-x-reconstruction}\\
q(c(x))&=hc(x)\frac{x}{d(x)},\qquad0<x\le a,\label{eq:q-x-reconstruction}
\end{align}
and by $q(c)=1$ for $s\le c\le b$. The control has finite action, is continuous, and satisfies $q(c)\downarrow0$ as $c\downarrow1$.
\end{lemma}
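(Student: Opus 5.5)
The proof has two halves. Uniqueness comes from the derivation already given. Existence comes from checking that the explicit formulas \cref{eq:c-x-reconstruction,eq:q-x-reconstruction} define a feasible control satisfying the KKT system of \cref{prop:global-kkt} and the terminal condition \cref{eq:N-b}.

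\emph{Uniqueness.} Take any one-switch trajectory satisfying the KKT equations and $N(b)=b$, with switch $s$ and $r=b/s$. The computations \cref{eq:lambda-s,eq:N-s,eq:hs-r,eq:switch-y-v-z,eq:kappa-r} use only the terminal arc, the switching condition $\Delta(s)=0$ and $N(b)=b$. They force $\kappa=1/(r+1)$, $1<r<3$ (by positivity of $d_r$), and $h=H_-(r)$ through $s=e^{T(r)}$. By \cref{lem:Hminus-bijection}, $r=r(h)$, so $s$ and $b$ are determined. On the interior arc, $z=qc/N$ satisfies $\dot z=-z/y$ with $y>0$, so $z$ is strictly monotone in $t=\log c$. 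Therefore $x=1/z$ is a legitimate coordinate running from $0$ to $a$, and \cref{eq:log-c-x} with $c=1$ at $x=0$ gives \cref{eq:c-x-reconstruction}. Then $v=x/d(x)$ gives \cref{eq:q-x-reconstruction}. Since $q$ and $N$ are determined, so is the whole trajectory.

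\emph{Existence.} I would run the derivation backwards.
\begin{enumerate}
\item Define $c(x)$ by \cref{eq:c-x-reconstruction} for $x\in[0,a]$. Since $d>0$ on $[0,a]$ when $r<3$, the map is smooth and strictly increasing, with $c(0)=1$ and $c(a)=e^{T(r)}=s$.
\item Define $q$ by \cref{eq:q-x-reconstruction} and $q=1$ on $[s,b]$. At $x=a$, using $hs=1/(r(r^2-1))$ and $d(a)=(r-1)/(r^2(r+1))$, one checks $hs\,a/d(a)=1$. Hence $q$ is continuous at $s$.
\item Show $0<q<1$ on $(1,s)$. Along the arc, $q=hcv$, and $v$ satisfies \cref{eq:v-dot}. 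Rather than argue monotonicity abstractly, I would verify directly that $\Delta<0$ on the arc.
\item Check that $N:=Y+h(c^2-1)$ equals $hc^2 y$ with $y=x^2/d$. Both sides vanish at $x=0$, where $c=1$. Their $x$-derivatives agree because the ODE \cref{eq:y-dot} follows from the first integral and \cref{eq:log-c-x}: differentiating $1/y=1-z+\kappa z^2$ and using $\dot z=-z/y$ recovers $\dot y=v+2-2y$ with $v=zy$.
\item Define $\lambda$ by \cref{eq:lambda}. On the terminal arc, \cref{eq:lambda-s} and $N(s)=as$ give $\Delta(s)=0$. On the interior arc, set $\mu:=N/(q^2c^2)=y/(hv^2c^2)=(1/y)/(hz^2c^2)$. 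Differentiating shows $\mu'=-1/(qc^2)$; this is the reversal of \cref{eq:q-prime}. Since $\mu(s)=\lambda(s)$, we get $\mu=\lambda$ on $(1,s]$, which is the interior equality \cref{eq:interior-lambda-relation}, so $G=0$ there.
\item On $[s,b]$, show $\Delta\ge0$. Using $\lambda=1/c-1/b$ and $N(c)=N(s)+(c-s)+h(c^2-s^2)$, this is an explicit polynomial inequality in $c$. It vanishes at $s$ and has positive derivative by \cref{eq:Delta-positive-crossing}; since the derivative there stays bounded below by $2(1+h)/c$, $\Delta$ stays positive. So $G\le0$ on the boundary arc.
\item Terminal condition: integrating $N'=1+2hc$ from $s$ to $b$ and using \cref{eq:hs-r} gives $N(b)=b$.
\item Finite action and $q\downarrow0$: as $x\to0$, $c-1\sim x^2(r+1)/2$ and $q\sim h(r+1)x$. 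Thus $q\sim C\sqrt{c-1}$, which is continuous, tends to $0$, and makes $1/q$ integrable. So $\Acal_{h,b}<\infty$.
\end{enumerate}

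\emph{Main obstacle.} I expect the hardest step to be showing $q<1$, equivalently $\Delta<0$, throughout the open interior arc rather than only near its ends. My first attempt: since $\lambda=\mu$ on the arc, $\Delta=N-\lambda c^2=N(1-1/q^2)$, so $\Delta<0$ is equivalent to $q<1$. Because $q(s)=1$, it then suffices to show $q$ is strictly increasing in $x$ on $(0,a)$. By \cref{eq:q-prime}, $q'$ has the sign of $cq+h-Y$, which is at least $1+h-(c-Y)$; but this is circular unless $q<1$ is already known. So I would instead use a continuity argument: take the largest $x_0<a$ with $q=1$, and apply the strict-crossing computation \cref{eq:Delta-positive-crossing} at the corresponding zero of $\Delta$ to reach a contradiction.
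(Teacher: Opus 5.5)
Your overall plan matches the paper's. Uniqueness comes from the forced identities. Existence comes from running the derivation backwards: define $c(x)$ and $q$, check $q(s)=1$, set $N=hc^2x^2/d$ and $\lambda=d/(hc^2)$ on the arc, match $\lambda$ with $1/c-1/b$ at $s$, check $\Delta\ge0$ on $[s,b]$ and $N(b)=b$, and use $\dd c/q=\dd x/h$ for finite action.

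\textbf{The gap.} It is in the step you flagged yourself: proving $q<1$ on the open interior arc. This is needed for feasibility ($q\le1$), and hence for the KKT certificate to mean anything. Your continuity argument invokes \cref{eq:Delta-positive-crossing}. But the lower bound $2(1+h)/c$ there was derived from $Y(c)\le c-1$, that is, from $q\le1$ on $(1,c)$. At your largest zero $x_0<a$ of $q-1$, nothing yet rules out $q>1$ on parts of $(0,x_0)$. So the positivity of $\Delta'=2(c+h-Y)/c$ at $c(x_0)$ is exactly as circular as the monotonicity route you rejected. As written, feasibility of the reconstructed control is not established.

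\textbf{The fix.} On the reconstructed arc you do not need any abstract argument. The explicit formula gives
\[
\frac{\dd}{\dd x}\log q(c(x))=\frac{x}{d}+\frac1x-\frac{2x-1}{d}=\frac1x+\frac{1-x}{d(x)}>0 .
\]
So $q$ is strictly increasing in $x$ with $q(c(a))=1$, and hence $0<q<1$ on $(1,s)$. This is exactly what the paper does.

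Equivalently, your objection to \cref{eq:q-prime} goes away once you compute $Y$ explicitly. You have $Y=hc^2(y-1)+h$ with $y-1=(x-\kappa)/d$, which gives $cq+h-Y=cq\kappa/x>0$ without any a priori bound on $Y$. The same identity gives $c+h-Y=c(1-q+q\kappa/x)$, so at any point of the arc with $q=1$ you get $\Delta'=2\kappa/x>0$. That would also rescue your crossing argument.

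\textbf{Two smaller points.}
\begin{itemize}
\item The correct value is $d(a)=1/(r^2(r+1))$, not $(r-1)/(r^2(r+1))$. With your value, $hs\,a/d(a)=1/(r-1)$ rather than $1$.
\item In step 6, the bound $2(1+h)/c$ from \cref{eq:Delta-positive-crossing} holds only at zeros of $\Delta$, and it again presupposes $q\le1$ on the interior arc. On $[s,b]$ you can argue directly: $\lambda=1/c-1/b$ gives $\Delta'(c)=2hc+2c/b>0$. Alternatively, use the explicit form $\Delta(c)=-s/r+h(c^2-s^2)+c^2/b$, which vanishes at $s$ and increases thereafter.
\end{itemize}
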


\begin{proof}
The identities derived above force $s=e^{T(r)}$, $b=rs$, and \cref{eq:c-x-reconstruction,eq:q-x-reconstruction}; hence at most one trajectory is possible. We verify that these formulas reconstruct one.

Since $d>0$ on $[0,a]$, $c$ is strictly increasing from $1$ to $s$. The switch relation $hs=1/[r(r^2-1)]$ and
\[
d(a)=\frac1{r^2(r+1)}
\]
give $q(s)=1$. Furthermore,
\begin{equation}\label{eq:q-reconstruction-monotone}
\frac{\dd}{\dd x}\log q(c(x))
=\frac1x+\frac{1-x}{d(x)}>0,
\end{equation}
so $0<q<1$ on the interior arc. Near zero, $q(c(x))=O(x)$ and $\dd c=O(x)\dd x$; in fact $\dd c/q=\dd x/h$, proving finite action.

Define on the interior arc
\[
N(c(x)):=hc(x)^2\frac{x^2}{d(x)},
\qquad
Y(c):=N(c)-h(c^2-1).
\]
A direct differentiation using $\dd c/\dd x=cx/d$ gives $\dd Y/\dd x=q\,\dd c/\dd x$, and $Y(1)=0$. Thus $Y'=q$. Set
\[
\lambda(c(x)):=\frac{N(c(x))}{q(c(x))^2c(x)^2}
=\frac{d(x)}{hc(x)^2}.
\]
Differentiation gives $\lambda'=-1/(qc^2)$. At $c=s$, this equals $1/s-1/b$, so it matches the terminal definition $\lambda(c)=1/c-1/b$. On $[s,b]$ the state is
\[
N(c)=c-\frac sr+h(c^2-s^2).
\]
It satisfies $N(b)=b$, and
\[
\Delta(c)=N(c)-\lambda(c)c^2
=-\frac sr+h(c^2-s^2)+\frac{c^2}{b}\ge0,
\]
with equality only at $s$. Hence the reconstructed control satisfies the global KKT conditions, the one-switch sign conditions, and the terminal equation. Uniqueness follows from the forced formulas.
\end{proof}

\subsection{Exact value of the control problem}

On the interior arc, $x=N/(qc)$, so the running reward is $(1-x)/c$. By \cref{eq:log-c-x},
\begin{equation}\label{eq:K-interior}
\K_{\mathrm{int}}=\int_0^{a_r}\frac{x(1-x)}{d_r(x)}\dd x=J(r).
\end{equation}
On the terminal arc, $q=1$ and $N(c)=c-s/r+h(c^2-s^2)$. Therefore
\begin{align}
\K_{\mathrm{bd}}
&=\int_s^{rs}\left(\frac{s}{rc^2}-h\left(1-\frac{s^2}{c^2}\right)\right)\dd c\notag\\
&=\frac{r-1}{r^2}-hs\frac{(r-1)^2}{r}
=\frac{r-1}{r(r+1)},
\label{eq:K-boundary}
\end{align}
where the last equality uses \cref{eq:hs-r}.

\begin{theorem}[Exact variational value]\label{thm:V-exact}
For every $h>0$, let $r(h)\in(1,3)$ be the unique solution of
\[
h=\frac{e^{-T(r)}}{r(r^2-1)}.
\]
Then
\begin{equation}\label{eq:V-Hplus}
\boxed{V(h)=H_+(r(h))},
\end{equation}
where
\begin{equation}\label{eq:Hplus}
H_+(r):=J(r)+\frac{r-1}{r(r+1)}.
\end{equation}
The free-endpoint maximizer is unique, and for each fixed endpoint the minimizing control is unique almost everywhere.
\end{theorem}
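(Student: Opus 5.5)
The plan is to combine existence, the necessary conditions, and the trajectory reconstruction already established. First I would invoke \cref{prop:optimizer-existence} to obtain a free-endpoint maximizer $(b,q)$ of $V(h)$. For this endpoint $b$, the control $q$ is in particular a fixed-endpoint minimizer of $\Acal_{h,b}$. So \cref{prop:global-kkt} gives the KKT system \cref{eq:q-kkt}, \cref{lem:switch-regularity} gives a continuous representative, \cref{lem:one-switch} gives a unique switch $s\in(1,b)$, and \cref{lem:terminal-condition} gives $N(b)=b$.

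Next, the derivation in the integration subsection applies verbatim to this maximizer. It yields the first integral \cref{eq:first-integral} with $\kappa=h\lambda(1+)>0$ (using \cref{lem:endpoint-regularity}), the switch relations \cref{eq:hs-r,eq:switch-y-v-z}, and then $\kappa=1/(r+1)$ with $r=b/s\in(1,3)$. It also gives $s=e^{T(r)}$, so $h=H_-(r)$. By \cref{lem:Hminus-bijection}, $r=r(h)$ is forced, and by \cref{lem:trajectory-reconstruction} the pair $(b,q)$ coincides almost everywhere with the explicit reconstructed trajectory. This already shows that the free-endpoint maximizer is unique. The value is then computed by splitting $\K_h(b,q)=\log b-\Acal_{h,b}(q)$ as $\int_1^b L(c)\dd c$ with running reward $L(c)=1/c-N(c)/(q(c)c^2)$, which is legitimate because $\int_1^b\dd c/c=\log b$. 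On the interior arc, $x=N/(qc)$ turns $L\dd c$ into $(1-x)\dd\log c$, and \cref{eq:log-c-x} then gives \cref{eq:K-interior}, namely $J(r)$. On the terminal arc, \cref{eq:K-boundary} gives $(r-1)/(r(r+1))$. Summing these yields $V(h)=H_+(r(h))$.

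Uniqueness of the fixed-endpoint minimizer for every $b$ (not only the optimal one) follows directly from the strict convexity in \cref{prop:strict-convexity}, since a fixed-endpoint minimizer exists by \cref{prop:optimizer-existence}.

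The step I expect to need the most care is justifying that the ODE manipulations apply to an arbitrary maximizer and not just to a smooth candidate. Three points need checking. The change of variables $t=\log c$ and the passage to $x=1/z$ require $z$ to be strictly monotone on $(1,s)$; this holds because $\dot z=-z/y<0$. The limit $z\to\infty$ as $c\downarrow1$ is \cref{eq:z-infinity}. Finally, $d_r>0$ must hold along the whole arc, which is automatic from $y>0$. With these checks in place, the identities are forced for any maximizer, and the value computation follows by direct integration.
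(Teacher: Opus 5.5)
Your proposal is correct and follows the paper's proof essentially step for step. The steps are: existence of a free-endpoint maximizer; the KKT, one-switch and terminal conditions; forcing $r=r(h)$ through the injectivity of $H_-$; uniqueness via the forced trajectory reconstruction; the value as $J(r)+(r-1)/(r(r+1))$; and fixed-endpoint uniqueness from strict logarithmic convexity. The additional checks you flag (monotonicity of $z$, $z\to\infty$ as $c\downarrow 1$, and $d_r>0$ from $y>0$) are the facts used implicitly in the paper's integration subsection.
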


\begin{proof}
By \cref{lem:free-existence}, a free-endpoint maximizer exists. Every maximizer satisfies the global KKT conditions, the unique-switch lemma, and the terminal condition. Hence \cref{lem:Hminus-bijection} fixes its parameter to $r(h)$, while \cref{lem:trajectory-reconstruction} uniquely reconstructs its endpoint and control. Its value is the sum of \cref{eq:K-interior,eq:K-boundary}. This proves \cref{eq:V-Hplus} for the full measurable-control problem. Fixed-endpoint uniqueness is \cref{prop:strict-convexity}.
\end{proof}

\section{The fixed point and the universal guarantee}\label{sec:fixed-point}

Define
\begin{equation}\label{eq:Theta-definition}
\Theta(r):=\int_0^{a_r}\frac{\dd x}{d_r(x)},\qquad1<r<3.
\end{equation}
Since $x=(2x-1)/2+1/2$ and $d_r(a_r)/d_r(0)=1/r^2$,
\begin{equation}\label{eq:T-Theta}
T(r)=\frac{\Theta(r)}2-\log r.
\end{equation}
Also $x(1-x)=1/(r+1)-d_r(x)$, so
\begin{equation}\label{eq:J-Theta}
J(r)=\frac{\Theta(r)}{r+1}-\frac{r-1}{r}.
\end{equation}
Consequently,
\begin{align}
H_-(r)&=\frac{e^{-\Theta(r)/2}}{r^2-1},\label{eq:Hminus-Theta}\\
H_+(r)&=\frac{\Theta(r)-r+1}{r+1}.\label{eq:Hplus-Theta}
\end{align}

\begin{lemma}[Unique fixed point]\label{lem:unique-fixed-point}
There is a unique $r_\star\in(1,3)$ such that $H_-(r_\star)=H_+(r_\star)$. Equivalently, $r_\star$ is the unique solution of
\begin{equation}\label{eq:root-equation}
e^{-\Theta(r)/2}=(r-1)(\Theta(r)-r+1).
\end{equation}
\end{lemma}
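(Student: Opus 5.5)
We prove existence and uniqueness of a crossing of $H_-$ and $H_+$ on $(1,3)$ by a monotonicity argument. By \cref{lem:Hminus-bijection}, $H_-$ is continuous and strictly decreasing from $+\infty$ (as $r\downarrow1$) to $0$ (as $r\uparrow3$). We show that $H_+$ is continuous and strictly increasing, tending to $0$ at $r=1$ and to $+\infty$ at $r=3$. Then $\Phi:=H_+-H_-$ is continuous and strictly increasing on $(1,3)$, with limits $-\infty$ and $+\infty$ at the endpoints. The intermediate value theorem gives exactly one zero $r_\star$.

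The equivalence with \cref{eq:root-equation} follows by multiplying $H_-=H_+$, written in the forms \cref{eq:Hminus-Theta,eq:Hplus-Theta}, by $r^2-1>0$.

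To show that $H_+$ is increasing, we use the original form \cref{eq:Hplus}, $H_+=J+\frac{r-1}{r(r+1)}$, rather than \cref{eq:Hplus-Theta}.

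\emph{Monotonicity of $J$.} On $[0,a_r]\subset[0,1)$ the integrand $x(1-x)/d_r(x)$ is positive. Both effects of increasing $r$ push $J$ up. The constant term $1/(r+1)$ of $d_r$ decreases, so $d_r$ decreases pointwise while staying positive on $[0,a_r]$ for $r<3$, and the integrand increases. The upper limit $a_r=(r-1)/r$ also increases. Hence $J$ is strictly increasing, by the same argument as for $T$ in \cref{lem:Hminus-bijection}.

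\emph{Monotonicity of the boundary term.} The term $\frac{r-1}{r(r+1)}$ is not monotone on all of $(1,3)$: its derivative has the sign of $-r^2+2r+1$, which is positive only for $r<1+\sqrt2$. If we cannot rely on $J'$ dominating the boundary term, we instead differentiate \cref{eq:Hplus-Theta} directly:
\[
\frac{\dd}{\dd r}H_+=\frac{(\Theta'-1)(r+1)-(\Theta-r+1)}{(r+1)^2}.
\]
By Leibniz's rule,
\[
\Theta'(r)=\frac{a_r'}{d_r(a_r)}+\frac1{(r+1)^2}\int_0^{a_r}\frac{\dd x}{d_r(x)^2},
\qquad a_r'=r^{-2},\qquad d_r(a_r)=\frac{1}{r^2(r+1)}.
\]
The first term is therefore exactly $r+1$, so $(\Theta'-1)(r+1)\ge r(r+1)$. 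It then suffices to show $\Theta<r^2+2r-1$, or to bound the remaining integral term. If this crude bound fails near $r=3$, where $\Theta\to\infty$, we use instead that
\[
\int_0^{a_r}\frac{\dd x}{d_r^2}\ge\frac{\bigl(\int_0^{a_r}\dd x/d_r\bigr)^2}{a_r}=\frac{\Theta^2}{a_r}
\]
by Cauchy--Schwarz. This makes the positive part quadratic in $\Theta$ and dominates the linear term $-\Theta$.

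\emph{Endpoint limits.} As $r\downarrow1$, $a_r\to0$, so $\Theta\to0$ and $H_+\to0$. As $r\uparrow3$, $d_r$ acquires a double zero at $x=1/2<a_3=2/3$, so $\Theta\to\infty$ and $H_+\to\infty$.

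The main obstacle is establishing $H_+'>0$ uniformly on $(1,3)$. The exact evaluation $a_r'/d_r(a_r)=r+1$ should make this work: it shows $(\Theta'-1)(r+1)-(\Theta-r+1)\ge r^2-\Theta+(\text{integral term})$. Near $r=1$ we have $\Theta\approx a_r\cdot 2$, which is small, so the leading part $r^2$ dominates. For larger $r$, the Cauchy--Schwarz bound $\Theta^2/[a_r(r+1)]$ takes over. One checks that $\Theta^2/(2a_r)+r^2-\Theta>0$, since its discriminant in $\Theta$ is $1-2r^2/a_r<0$.
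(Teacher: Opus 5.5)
Your strategy works, and everything up to the last line is correct: the numerator of $\frac{\dd}{\dd r}H_+$ is $r^2+2r-1-\Theta+\frac1{r+1}\int_0^{a_r}d_r(x)^{-2}\dd x$, and Cauchy--Schwarz bounds the integral term below by $\Theta^2/[a_r(r+1)]$.

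The final check, however, is done for the wrong quadratic. You verify $\Theta^2/(2a_r)+r^2-\Theta>0$. The coefficient actually available is $1/[a_r(r+1)]$, which is \emph{smaller} than $1/(2a_r)$ because $r+1>2$. Positivity of your quadratic therefore does not imply the inequality you need. The repair is one line: $\Theta^2/[a_r(r+1)]-\Theta+r^2$ has discriminant $1-4r^2/[a_r(r+1)]=1-4r^3/(r^2-1)<0$ on $(1,3)$. It is therefore positive for every value of $\Theta$, and your proof goes through.

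Compared with the paper, the only real difference is how the integral term is controlled. The paper differentiates $H_+=J+\frac{r-1}{r(r+1)}$ directly. The Leibniz endpoint term of $J'$ is exactly $1-1/r^2$, and the remaining term $\frac1{(r+1)^2}\int_0^{a_r}x(1-x)/d_r(x)^2\dd x$ is simply dropped as positive. The endpoint term $1-1/r^2$ already dominates the derivative of the non-monotone boundary term, giving $H_+'>\frac{r^2+2r-1}{(r+1)^2}$.

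The two computations are the same in different coordinates. Since $x(1-x)=\frac1{r+1}-d_r(x)$, positivity of the paper's integral is exactly the pointwise bound $d_r\le\frac1{r+1}$ on $[0,a_r]$, i.e.\ $\frac1{r+1}\int_0^{a_r}d_r^{-2}\dd x\ge\Theta$. In your $\Theta$-form this cancels the $-\Theta$ term outright and leaves $r^2+2r-1>0$. So neither Cauchy--Schwarz nor a discriminant computation is needed. The unbounded $-\Theta/(r+1)^2$ term you had to fight comes from the quotient rule in the $\Theta$ representation, and the $J$ representation absorbs it automatically. Your route is valid after the fix, but the paper's is shorter and gives an explicit uniform lower bound on $H_+'$.
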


\begin{proof}
By \cref{lem:Hminus-bijection}, $H_-$ is strictly decreasing from $+\infty$ to $0$. We prove that $H_+$ is strictly increasing from $0$ to $+\infty$.

Let
\[
f_r(x):=\frac{x(1-x)}{d_r(x)},\qquad a_r=\frac{r-1}{r}.
\]
Since $a_r'=1/r^2$ and $d_r(a_r)=1/[r^2(r+1)]$, the endpoint term in Leibniz's rule is
\[
f_r(a_r)a_r'=1-\frac1{r^2}.
\]
Moreover,
\[
\partial_r f_r(x)
=\frac{x(1-x)}{(r+1)^2d_r(x)^2}>0
\qquad(0<x<a_r).
\]
Therefore
\begin{equation}\label{eq:J-prime-exact}
J'(r)=1-\frac1{r^2}
+\frac1{(r+1)^2}\int_0^{a_r}
\frac{x(1-x)}{d_r(x)^2}\dd x
>1-\frac1{r^2}.
\end{equation}
Also
\[
\frac{\dd}{\dd r}\frac{r-1}{r(r+1)}
=\frac{-r^2+2r+1}{r^2(r+1)^2}.
\]
Consequently,
\[
H_+'(r)>
1-\frac1{r^2}+\frac{-r^2+2r+1}{r^2(r+1)^2}
=\frac{r^2+2r-1}{(r+1)^2}>0.
\]
As $r\downarrow1$, both terms in \cref{eq:Hplus} tend to zero. As $r\uparrow3$, $J(r)\to\infty$ because $d_r$ develops a double zero at $x=1/2$ while $x(1-x)$ stays positive there. Thus the two monotone functions cross exactly once. Substitution of \cref{eq:Hminus-Theta,eq:Hplus-Theta} gives \cref{eq:root-equation}.
\end{proof}

Set
\begin{equation}\label{eq:hstar-alpha}
h_\star:=H_-(r_\star)=H_+(r_\star),\qquad
\alpha_\star:=\frac1{1+h_\star}.
\end{equation}
By \cref{thm:V-exact}, $V(h_\star)=h_\star$. Also \cref{eq:Hplus-Theta} gives
\begin{equation}\label{eq:alpha-formula}
\alpha_\star=\frac{r_\star+1}{\Theta(r_\star)+2}.
\end{equation}

\begin{theorem}[Universal fixed-price guarantee]\label{thm:universal}
For every pair of independent, nonnegative finite-mean distributions,
\begin{equation}\label{eq:universal-final}
\FP(F_S,F_B)\ge\alpha_\star\OPT(F_S,F_B).
\end{equation}
\end{theorem}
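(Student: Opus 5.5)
The plan is to assemble the universal bound from the reductions already proved, applied at the single threshold $h=h_\star$. By \cref{lem:linear-target} with $\alpha=\alpha_\star=(1+h_\star)^{-1}$, it suffices to show that every admissible pair satisfies
\[
I-h_\star m\le(1+h_\star)M.
\]
I would first dispose of the degenerate cases. If $\OPT=0$ the pair is excluded from the infimum, and the inequality \cref{eq:universal-final} is trivial anyway. If $M=0$, the argument at the start of \cref{sec:saturation} gives $I=0$, so $I-h_\star m\le0=(1+h_\star)M$.

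When $M>0$, the ratio $\FP/\OPT$ is invariant under common positive rescaling of $S$ and $B$. All three quantities $I$, $m$, $M$ scale linearly, so the target inequality is homogeneous, and I would normalize to $M=1$ as in \cref{eq:M-normalized}. Then \cref{cor:universal-reduction} with $h=h_\star>0$ gives
\[
I-h_\star m\le1+V(h_\star).
\]
It covers both subcases $C(0)\le1$ and $C(0)>1$, the latter through \cref{lem:saturation-monotone,lem:control-representation}.

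Next I would invoke \cref{thm:V-exact}, which gives $V(h_\star)=H_+(r(h_\star))$. Since $h_\star=H_-(r_\star)$ and $H_-$ is a bijection by \cref{lem:Hminus-bijection}, we have $r(h_\star)=r_\star$. Hence
\[
V(h_\star)=H_+(r_\star)=h_\star
\]
by \cref{lem:unique-fixed-point} and \cref{eq:hstar-alpha}. Therefore $I-h_\star m\le1+h_\star=(1+h_\star)M$, which is the target inequality. The formula \cref{eq:alpha-formula} then identifies the constant.

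The only step needing care is the bookkeeping: the normalization must be applied before using \cref{cor:universal-reduction}, and the reduction holds for arbitrary Borel laws with finite mean, including atoms. Both points are already covered by \cref{lem:atom-aware} and \cref{lem:saturation}, so no genuine obstacle is expected here. The substantive work, existence of a maximizer and global sufficiency of the KKT conditions so that $V$ equals $H_+$ rather than merely being bounded by it, was completed in \cref{thm:V-exact}, and this proof only combines those results.
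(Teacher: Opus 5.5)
Your proposal is correct and matches the paper's proof: you dispose of $M=0$, rescale to $M=1$, apply \cref{cor:universal-reduction} together with $V(h_\star)=h_\star$ from \cref{thm:V-exact}, and then undo the scaling to get $m+I\le(1+h_\star)(m+M)$. Your separate treatment of $\OPT=0$, and your derivation of $r(h_\star)=r_\star$ from the bijectivity of $H_-$, only spell out steps the paper leaves implicit.
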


\begin{proof}
If $M=0$, then $I=0$ and the ratio is one. Suppose $M>0$ and divide all values by $M$. By \cref{cor:universal-reduction} and $V(h_\star)=h_\star$,
\[
I-h_\star m\le1+h_\star.
\]
Undoing the scaling gives
\begin{equation}\label{eq:unscaled-linear}
I-h_\star m\le(1+h_\star)M.
\end{equation}
Therefore $m+I\le(1+h_\star)(m+M)$. Using \cref{lem:welfare-decomp} proves the claim.
\end{proof}

\section{A matching extremal family}\label{sec:extremal}

We now realize the optimal control by genuine probability distributions. We first specialize to the fixed point $h=h_\star$. All formulas below are parameter-uniform in $h>0$ and $r=r(h)$; this observation will be used in \cref{prop:decision-equivalence}.

Write
\begin{equation}\label{eq:extremal-parameters}
r:=r_\star,\quad h:=h_\star,\quad a:=\frac{r-1}{r},\quad
\kappa:=\frac1{r+1},\quad d(x):=x^2-x+\kappa.
\end{equation}
Define, for $0\le x\le a$,
\begin{equation}\label{eq:c-x-extremal}
c(x):=\exp\left(\int_0^x\frac{u}{d(u)}\dd u\right).
\end{equation}
Set
\begin{equation}\label{eq:s-b-extremal}
s:=c(a)=e^{T(r)},\qquad b:=rs.
\end{equation}
On $1\le c\le s$, define $q_\star$ parametrically by
\begin{equation}\label{eq:qstar-parametric}
q_\star(c(x)):=hc(x)\frac{x}{d(x)},\qquad0\le x\le a,
\end{equation}
and set $q_\star(c):=1$ for $s\le c\le b$. The identity $hs=1/[r(r^2-1)]$ implies $q_\star(s)=1$.

\begin{lemma}[Monotonicity of the extremal control]\label{lem:qstar-monotone}
The function $q_\star$ is continuous and nondecreasing on $[1,b]$, strictly increasing on $[1,s]$, with $q_\star(1)=0$ and $q_\star(s)=q_\star(b)=1$.
\end{lemma}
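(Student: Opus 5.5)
The plan is to work entirely in the parameter $x\in[0,a]$. Here $c(x)$ is a strictly increasing $C^1$ bijection from $[0,a]$ onto $[1,s]$, because $\dd c/\dd x=c\,x/d(x)>0$ for $x\in(0,a]$. This uses that $d>0$ on $[0,a]$, which holds since $1<r<3$ (as in the derivation before \cref{lem:Hminus-bijection}). Consequently $q_\star$ on $[1,s]$ is the composition of the continuous map $x\mapsto hc(x)x/d(x)$ with the continuous inverse $c\mapsto x(c)$. It is therefore continuous on $[1,s]$, and it is constant equal to one on $[s,b]$.

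For the endpoint values, $x=0$ gives $c(0)=1$ and $q_\star(1)=h\cdot1\cdot0/\kappa=0$. At $x=a$, I will use $d(a)=a^2-a+\kappa=1/[r^2(r+1)]$ together with $s=c(a)$ and $hs=1/[r(r^2-1)]$ from \cref{eq:hs-r}. These give
\[
q_\star(s)=hs\,\frac{a}{d(a)}=\frac{1}{r(r^2-1)}\cdot\frac{(r-1)/r}{1/[r^2(r+1)]}=1.
\]
So the two pieces agree at $s$, which gives continuity on all of $[1,b]$, and $q_\star(b)=1$ trivially.

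For strict monotonicity on $[1,s]$, I differentiate in $x$ on $(0,a]$, exactly as in \cref{eq:q-reconstruction-monotone}:
\[
\frac{\dd}{\dd x}\log q_\star(c(x))=\frac{x}{d(x)}+\frac1x-\frac{2x-1}{d(x)}=\frac1x+\frac{1-x}{d(x)}.
\]
This is positive because $0<x\le a<1$ and $d>0$. Hence $x\mapsto q_\star(c(x))$ is strictly increasing on $(0,a]$, and it extends continuously to the value $0$ at $x=0$, where it is smaller than all positive values. Composing with the increasing map $c\mapsto x(c)$ shows that $q_\star$ is strictly increasing on $[1,s]$. Combined with $q_\star=1$ on $[s,b]$ and $q_\star\le1$ on $[1,s]$, this makes $q_\star$ nondecreasing on $[1,b]$.

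The only point needing care is the positivity of $d$ on the closed interval $[0,a]$, which guarantees that everything is finite and that the derivative has a definite sign. I would verify it once from the discriminant: the minimum of $d$ is $\kappa-1/4=1/(r+1)-1/4>0$ for $r<3$. This gives $d>0$ everywhere, not just on $[0,a]$, so nothing else is delicate.
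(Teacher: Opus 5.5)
Your proposal is correct and follows the paper's proof. Like the paper, you compute $\frac{\mathrm{d}}{\mathrm{d}x}\log q_\star(c(x))=\frac1x+\frac{1-x}{d(x)}>0$ and read the endpoint values off the parametric formula and the switch relation $hs=1/[r(r^2-1)]$. You also make explicit what the paper leaves implicit: that $d>0$ for $r<3$, that $x\mapsto c(x)$ is a continuous bijection with continuous inverse, and the calculation $d(a)=1/[r^2(r+1)]$ that gives $q_\star(s)=1$.
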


\begin{proof}
For $0<x<a$,
\begin{equation}\label{eq:qstar-monotone-derivative}
\frac{\dd}{\dd x}\log q_\star(c(x))
=\frac{x}{d(x)}+\frac1x-\frac{2x-1}{d(x)}
=\frac1x+\frac{1-x}{d(x)}>0.
\end{equation}
The endpoint values follow from \cref{eq:qstar-parametric} and the switch relation.
\end{proof}

The relations \cref{eq:c-x-extremal,eq:qstar-parametric,eq:s-b-extremal} reconstruct the unique one-switch trajectory of \cref{thm:V-exact} with parameter $r=r_\star$. Hence $(b,q_\star)$ is the unique free-endpoint maximizer for $V(h)$, and
\begin{equation}\label{eq:qstar-optimal-value}
\K_h(b,q_\star)=V(h)=h.
\end{equation}

Define
\begin{equation}\label{eq:p-c-extremal}
p(c):=\int_c^b\frac{\dd u}{q_\star(u)},\qquad1\le c\le b,
\end{equation}
and let $L:=p(1)<\infty$. Finiteness follows from \cref{lem:endpoint-regularity}; directly, on the interior arc $\dd c/q_\star(c)=\dd x/h$. Let $c(p)$ be the inverse map from $[0,L]$ onto $[1,b]$.

\subsection{The bounded buyer component}

Define the right-continuous survival function $Q_0(p)=\Pp(B_0>p)$ by
\begin{equation}\label{eq:B0-survival}
Q_0(p):=
\begin{cases}
q_\star(c(p)),&0\le p<L,\\
0,&p\ge L.
\end{cases}
\end{equation}
By \cref{lem:qstar-monotone}, $Q_0$ is continuous and nonincreasing, with $Q_0(0)=1$ and $Q_0(L-)=Q_0(L)=0$. It is therefore the survival function of a random variable $B_0$ supported on $[0,L]$; in particular, $B_0$ has no endpoint atoms at $0$ or $L$. Its call transform is
\begin{equation}\label{eq:B0-call}
C_0(p):=\E[(B_0-p)_+]=c(p)-1,\qquad0\le p\le L,
\end{equation}
because both sides are absolutely continuous, vanish at $L$, and have derivative $-Q_0(p)=-q_\star(c(p))$ almost everywhere.

\subsection{The bounded seller}

Define
\begin{equation}\label{eq:Rstar}
R_\star(c):=\int_c^b\frac{\dd u}{q_\star(u)u^2}.
\end{equation}
For $0\le p\le L$, let
\begin{equation}\label{eq:Astar}
A_\star(p):=c(p)R_\star(c(p)),
\end{equation}
and extend with slope one for $p\ge L$. The following calculation shows that this is a seller put transform. Since $q_\star$ is nondecreasing in $c$ while $c(p)$ is decreasing in $p$, the derivative
\[
c'(p)=-q_\star(c(p))
\]
is nondecreasing; hence $c$ is convex. At every differentiability point of $c$, differentiating \cref{eq:Astar} gives
\begin{equation}\label{eq:Fstar-formula}
A_\star'(p)=\frac{1}{c(p)}-q_\star(c(p))R_\star(c(p)).
\end{equation}
Let $F_\star:=A'_{\star,+}$ be the right-continuous version of this derivative. Writing $F_\star=c'R_\star(c)+1/c$, the distributional product rule gives
\begin{align}
\dd F_\star
&=R_\star(c)\,\dd c'+c'\,\dd(R_\star(c))+\dd(1/c)\notag\\
&=R_\star(c)\,\dd c'+\frac{c'}{c^2}\dd p-\frac{c'}{c^2}\dd p
=R_\star(c)\,\dd c'\ge0.
\label{eq:Fstar-monotone}
\end{align}
Thus $F_\star$ is nondecreasing. At $p=0$, $R_\star(b)=0$, so $F_\star(0)=1/b$; as $p\uparrow L$, we have $c(p)\downarrow1$, $q_\star(c(p))\downarrow0$, and hence $F_\star(p)\uparrow1$. Therefore the slope-one extension is convex, its right derivative is a right-continuous CDF on $[0,L]$, and \cref{lem:put-characterization} shows that it is the put transform of a random variable $S_\star$ supported on $[0,L]$. The seller has an atom of mass $1/b$ at zero and no atom at $L$. Write
\begin{equation}\label{eq:mstar}
m_\star:=\E[S_\star]=L-A_\star(L).
\end{equation}

With the conventions just fixed, $F_\star(p)=\Pp(S_\star\le p)$ and $Q_0(p)=\Pp(B_0>p)$. The pointwise saturation identity is
\begin{equation}\label{eq:canonical-wronskian}
F_\star(p)c(p)+A_\star(p)Q_0(p)=1,
\qquad0\le p\le L.
\end{equation}
Since $C_0=c-1$, the gain captured from the bounded buyer component at price $p$ is
\begin{equation}\label{eq:g0}
\E[(B_0-S_\star)\one\{S_\star\le p\le B_0\}]=1-F_\star(p).
\end{equation}

Let $G_\star:=\E[(B_0-S_\star)_+]$. By independence and \cref{eq:I-identities},
\begin{align}
G_\star
&=\int_0^L F_\star(p)Q_0(p)\dd p
 =\int_1^b F_\star(c)\dd c \notag\\
&=\log b-\int_1^b q_\star(c)
 \left(\int_c^b\frac{\dd u}{q_\star(u)u^2}\right)\dd c \notag\\
&=\log b-\int_1^b
 \frac{Y(c)}{q_\star(c)c^2}\dd c,
\label{eq:Gstar-explicit}
\end{align}
where $Y(c)=\int_1^c q_\star(s)\dd s$ and Tonelli's theorem is used in the last step. Similarly, the slope-one extension gives
\begin{equation}\label{eq:mstar-explicit}
m_\star=\int_1^b\frac{1-c^{-2}}{q_\star(c)}\dd c.
\end{equation}
Subtracting $h$ times \cref{eq:mstar-explicit} from \cref{eq:Gstar-explicit} and using \cref{eq:K-definition} yields
\begin{equation}\label{eq:Gstar-control}
G_\star-hm_\star=\K_h(b,q_\star).
\end{equation}
Using \cref{eq:qstar-optimal-value}, we obtain
\begin{equation}\label{eq:Gstar-key}
\boxed{G_\star=h(1+m_\star).}
\end{equation}

\subsection{The escaping buyer tail}

For $0<\varepsilon<1/L$, define
\begin{equation}\label{eq:Bepsilon}
B_\varepsilon:=
\begin{cases}
B_0,&\text{with probability }1-\varepsilon,\\
1/\varepsilon,&\text{with probability }\varepsilon.
\end{cases}
\end{equation}
Take $B_\varepsilon$ independent of $S_\star$. Every $B_\varepsilon$ is nonnegative and has finite first moment.

For $0\le p\le L$, define $\mu_\star(p):=\E[S_\star\one\{S_\star\le p\}]$. By \cref{eq:g0},
\begin{align}
g_\varepsilon(p)
&=(1-\varepsilon)(1-F_\star(p))
+\varepsilon\E[(1/\varepsilon-S_\star)\one\{S_\star\le p\}]\notag\\
&=1-\varepsilon\bigl(1-F_\star(p)+\mu_\star(p)\bigr)\le1.
\label{eq:g-epsilon-low-price}
\end{align}
For every $L\le p\le1/\varepsilon$, the bounded buyer cannot trade and all sellers lie below $p$, so
\begin{equation}\label{eq:g-epsilon-high-price}
g_\varepsilon(p)=1-\varepsilon m_\star.
\end{equation}
For $p>1/\varepsilon$, the gain is zero. Therefore
\begin{equation}\label{eq:M-epsilon-sandwich}
1-\varepsilon m_\star\le M_\varepsilon:=\max_{p\ge0}g_\varepsilon(p)\le1,
\end{equation}
and $M_\varepsilon\to1$.

The first-best gain is
\begin{equation}\label{eq:I-epsilon}
I_\varepsilon=(1-\varepsilon)G_\star+1-\varepsilon m_\star,
\end{equation}
because $S_\star\le L<1/\varepsilon$. Thus $I_\varepsilon\to1+G_\star$. Using \cref{lem:welfare-decomp,eq:Gstar-key},
\begin{align}
\lim_{\varepsilon\downarrow0}
\frac{\FP(F_{S_\star},F_{B_\varepsilon})}{\OPT(F_{S_\star},F_{B_\varepsilon})}
&=\frac{m_\star+1}{m_\star+1+G_\star}
=\frac1{1+h}=\alpha_\star.
\label{eq:extremal-limit}
\end{align}
This proves the extremal side of \cref{thm:main}.

The remote atom is also an analytic explanation of a feature seen in the finite hard instances of Giambartolomei and de Keijzer, whose optimized buyer support contains a point far beyond the bounded body~\cite{GiambartolomeiDeKeijzer2026}. Whether those particular discretizations converge to the continuous optimizer remains a separate question; the escaped-moment mechanism makes remote support structurally natural.

\begin{proposition}[Non-attainment of the worst-case infimum]\label{prop:distribution-nonattainment}
No admissible pair of finite-mean distributions attains $\alpha_{\mathrm{FP}}$.
\end{proposition}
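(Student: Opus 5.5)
The plan is to argue by contradiction, pushing an equality case back through the chain of inequalities behind \cref{thm:universal}. The only ingredients are the uniqueness statement in \cref{thm:V-exact} and the boundary behaviour $q_\star(c)\downarrow0$ as $c\downarrow1$ from \cref{lem:qstar-monotone}.

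\textbf{Setup and easy cases.} Suppose an admissible pair $(F_S,F_B)$ with $\OPT>0$ has $\FP=\alpha_\star\OPT$, where $\alpha_\star=(1+h_\star)^{-1}$ and $h:=h_\star>0$. By \cref{lem:linear-target}, this is equality in \cref{eq:linear-target}, namely $I-hm=(1+h)M$.
\begin{itemize}[leftmargin=2em]
\item If $M=0$, then $I=0$ and the ratio is $1>\alpha_\star$, so this case is excluded.
\item Otherwise, rescale so that $M=1$; the target equality becomes $I-hm=1+h$.
\item If $C(0)=\E[B]\le1$, the argument before \cref{eq:p1} gives $I-hm\le1<1+h$, a contradiction.
\end{itemize}
Hence $c_0=C(0)>1$, and the point $p_1$ of \cref{eq:p1} is defined.

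\textbf{Forcing the optimal control.} In this case \cref{lem:saturation-monotone,lem:control-representation} and the definition \cref{eq:V-definition} give the chain
\[
1+h=I-hm\le\widetilde I-h\widetilde m=1+\K_h(c_0,q)\le1+V(h)=1+h.
\]
Here $q(c)=\Pp(B>p(c))$ is the buyer's control from \cref{eq:q-control}. Every inequality in the chain is therefore an equality, so in particular $\K_h(c_0,q)=V(h)$. Thus $(c_0,q)$ is a free-endpoint maximizer. By the uniqueness in \cref{thm:V-exact}, we get $c_0=b$ and $q=q_\star$ almost everywhere on $(1,b)$, where $q_\star$ is the control constructed in \cref{sec:extremal}.

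\textbf{Contradiction at $p_1$.} Translate back to prices: $\Pp(B>p)=q_\star(C(p))$ for almost every $p\in[0,p_1]$.
\begin{itemize}[leftmargin=2em]
\item As $p\uparrow p_1$, we have $C(p)\downarrow C(p_1)=1$, so \cref{lem:qstar-monotone} gives $q_\star(C(p))\to q_\star(1)=0$.
\item The survival function is nonincreasing, so $\Pp(B>p_1)\le\Pp(B>p)$ for every $p<p_1$.
\item Choosing $p\uparrow p_1$ along points where the almost-everywhere identity holds yields $\Pp(B>p_1)=0$.
\item Then $C(p_1)=\E[(B-p_1)_+]=0$, contradicting $C(p_1)=1$.
\end{itemize}
So no admissible pair attains $\alpha_{\mathrm{FP}}$.

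\textbf{Expected obstacle.} The delicate step is invoking uniqueness of the free-endpoint maximizer for the buyer's particular control. Two points need care. First, the buyer's $q$ must lie in $\Qcal_b$, which holds because its action is finite: it equals $\widetilde I$ and $\widetilde m$ terms that are finite. Second, the a.e.\ identification of $q$ with $q_\star$ in the $c$-coordinate must transfer to a.e.\ $p$. This works because $p\mapsto C(p)$ is bi-Lipschitz on $[0,p_1]$, since $\Pp(B>p)\ge\Pp(B>p_1)$ there. If that $\Pp(B>p_1)$ were zero the argument would already be finished; otherwise it gives the positive lower slope needed, and the limit argument above then produces the contradiction.
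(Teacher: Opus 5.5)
Your proposal is correct and follows the paper's proof essentially step for step. You exclude $M=0$ and $C(0)\le1$, force equality throughout the chain $I-hm\le\widetilde I-h\widetilde m=1+\K_h(c_0,q)\le1+V(h)$, and use the uniqueness in \cref{thm:V-exact} to get $q=q_\star$ almost everywhere. The contradiction is also the paper's: $q_\star(c)\to0$ as $c\downarrow1$, whereas $\Pp(B>p)\ge\Pp(B>p_1)>0$ on $[0,p_1]$. You run it in the reverse direction, deriving $\Pp(B>p_1)=0$ and contradicting $C(p_1)=1$, which is the same observation.
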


\begin{proof}
Suppose an admissible pair attained the ratio $\alpha_\star=(1+h_\star)^{-1}$. It cannot have $M=0$, because then $I=0$ and its ratio is one. Normalize to $M=1$. Equality of the welfare ratio is equivalent to
\begin{equation}\label{eq:nonattainment-equality}
I-h_\star m=1+h_\star.
\end{equation}
The case $C(0)\le1$ is impossible, since then $I-h_\star m\le1<1+h_\star$. Thus $C(0)>1$, and the universal proof gives the chain
\begin{equation}\label{eq:nonattainment-chain}
I-h_\star m
\le \widetilde I-h_\star\widetilde m
=1+\K_{h_\star}(C(0),q)
\le1+V(h_\star)=1+h_\star.
\end{equation}
By \cref{eq:nonattainment-equality}, every inequality in \cref{eq:nonattainment-chain} is an equality. Hence the buyer-generated control $q$ is a free-endpoint maximizer of $V(h_\star)$. The uniqueness in \cref{thm:V-exact,lem:trajectory-reconstruction} forces $q=q_\star$ almost everywhere, and therefore $q(c)\to0$ as $c\downarrow1$.

On the other hand, let $p_1$ be defined by $C(p_1)=1$. Since
\[
1=C(p_1)=\E[(B-p_1)_+],
\]
we have $\Pp(B>p_1)>0$. For $p<p_1$, $Q_B(p)\ge Q_B(p_1)>0$, so the inverse-call control $q(c)=Q_B(p(c))$ is bounded away from zero as $c\downarrow1$. This contradicts equality almost everywhere with the continuous control $q_\star$, which is arbitrarily small on an interval of positive length near $c=1$.
\end{proof}

\begin{proof}[Proof of \cref{thm:main}]
The uniqueness of $r_\star$ and the exact formula for $\alpha_\star$ follow from \cref{lem:unique-fixed-point,eq:alpha-formula}. \Cref{thm:universal} gives $\alpha_{\mathrm{FP}}\ge\alpha_\star$, while the limiting family in \cref{eq:extremal-limit} gives $\alpha_{\mathrm{FP}}\le\alpha_\star$. Thus $\alpha_{\mathrm{FP}}=\alpha_\star$. Non-attainment is \cref{prop:distribution-nonattainment}.
\end{proof}

\section{Certified numerical value}\label{sec:numerics}

The characterization in \cref{eq:root-equation,eq:alpha-formula} is analytic. Evaluating the closed form with interval arithmetic yields the following decimal enclosure:
\begin{align*}
1.53963634239276&<r_\star<1.53963634239278,\\
0.73802433573449&<\alpha_{\mathrm{FP}}<0.73802433573450.
\end{align*}
A longer enclosure and the arctangent form of $\Theta$ appear in Appendix~\ref{app:numerical-certificate}. Numerical computation is not used in the proof of the exact characterization.

\section{Conclusion}

We determined the exact worst-case welfare ratio of the optimal fixed-price mechanism in bilateral trade. Wronskian saturation reduces arbitrary instances to a logarithmically convex control problem, whose optimizer has one interior arc and one boundary arc. The matching family separates two attainment phenomena: every fixed instance has an optimal price, while the worst-case distributional ratio is approached only through escape of first moment to infinity.

\appendix

\section{Direct-method details}\label{app:direct-method}

The first issue is that the admissible closure permits $q=0$, while the action has a reciprocal singularity. We use the direct method in the calculus of variations~\cite{Dacorogna2008}, but include all compactness and lower-semicontinuity details needed here. The following integral form of the perspective identity makes the argument precise.

\begin{claim}[Integral perspective representation]\label{clm:perspective-integral}
Let $a,v$ be nonnegative measurable functions on a finite interval, and use the convention $a/v=+\infty$ on $\{a>0,v=0\}$. Then
\begin{equation}\label{eq:perspective-dual}
\int\frac{a}{v}
=\sup_{\theta\in L^\infty_+}
\int\left(2\sqrt{a\theta}-\theta v\right),
\end{equation}
where the supremum is over bounded nonnegative measurable functions.
\end{claim}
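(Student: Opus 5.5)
The plan is to prove the identity pointwise and then lift it to integrals, treating the two inequalities separately.

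\emph{Pointwise identity.} For scalars $a\ge0$, $v\ge0$ I will first show
\[
\sup_{\theta\ge0}\bigl(2\sqrt{a\theta}-\theta v\bigr)=\frac av ,
\]
using the stated convention. If $v>0$, write $\sigma=\sqrt\theta$; the expression $2\sqrt a\,\sigma-v\sigma^2$ is a concave quadratic in $\sigma$. Its maximum is at $\sigma=\sqrt a/v$, with value $a/v$. If $v=0$ and $a>0$, the expression $2\sqrt{a\theta}$ is unbounded, matching $+\infty$. If $a=0$, the supremum is $0$, attained at $\theta=0$, which matches $0/v=0$ for every $v$ (including $0/0=0$).

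\emph{Easy inequality ($\ge$).} Fix $\theta\in L^\infty_+$. The integrand $2\sqrt{a\theta}-\theta v$ is bounded above pointwise by $a/v$. Its negative part $\theta v$ may fail to be integrable if $v$ is not integrable; in that case I interpret the right-hand integral in the extended sense, so its value is $-\infty$ or is still dominated. Assuming $v$ is integrable, as it is in the application ($v=Y'\le1$), integrating the pointwise bound gives $\int(2\sqrt{a\theta}-\theta v)\le\int a/v$, and taking the supremum over $\theta$ yields $\ge$.

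\emph{Main step ($\le$).} I will exhibit bounded $\theta_n$ whose values increase to $\int a/v$. Let $E_n=\{a\le n,\ 1/n\le v\}$ and $\theta_n=(a/v^2)\one_{E_n}$, which is bounded by $n^3$. On $E_n$ the integrand equals $a/v$, and off $E_n$ it is $0$, so the integral is $\int_{E_n}a/v$. Since $E_n$ increases to $\{a<\infty, v>0\}$, monotone convergence gives $\int_{\{v>0\}}a/v$. The remaining set $\{v=0\}$ requires separate treatment:
\begin{itemize}
\item If $\{a>0,\,v=0\}$ has positive measure, choose $\theta=k\one_F$ with $F\subseteq\{a\ge\delta,\,v=0\}$ of positive measure. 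The integral is then at least $2\sqrt{\delta k}\,|F|\to\infty$, matching the left side, which is infinite.
\item Otherwise this set is null and contributes nothing, since $\{a=0\}$ contributes zero on both sides.
\end{itemize}

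The main obstacle is purely bookkeeping. It consists of the infinite-value conventions, and of making sure each truncation keeps $\theta$ bounded while the integrand stays nonnegative, so that monotone convergence applies. Since $\theta_n$ is chosen so that the integrand is exactly $(a/v)\one_{E_n}\ge0$, this causes no difficulty.
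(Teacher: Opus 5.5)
Your proof is correct and follows the same route as the paper. The pointwise identity $a/v=\sup_{\theta\ge0}(2\sqrt{a\theta}-\theta v)$ gives one inequality, and monotone convergence along explicit bounded truncations of the maximizer $a/v^2$ gives the other. The differences are cosmetic: you truncate with the indicator of $E_n=\{a\le n,\ v\ge1/n\}$, so the integrand is exactly $(a/v)\one_{E_n}$, and you treat $\{a>0,v=0\}$ separately, whereas the paper uses the single sequence $\theta_n=\min\{n,a/v^2\}$ with $\theta_n=n$ on $\{v=0\}$; you also flag the integrability of $\theta v$, which the paper leaves implicit and which is harmless in its application since $v=Y'\le1$.
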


\begin{proof}
Pointwise,
\[
\frac{a}{v}=\sup_{\theta\ge0}\left(2\sqrt{a\theta}-\theta v\right).
\]
For $v>0$, the maximizer is $\theta=a/v^2$; for $a>0,v=0$, the supremum is infinite. Define
\[
\theta_n=
\begin{cases}
\min\{n,a/v^2\},&v>0,\\
n,&v=0,
\end{cases}
\]
with $\theta_n=0$ on $\{a=v=0\}$. Then the corresponding nonnegative integrands increase pointwise to $a/v$. Monotone convergence proves the reverse inequality in \cref{eq:perspective-dual}; the forward inequality follows from the pointwise bound.
\end{proof}

\begin{lemma}[Fixed-endpoint existence]\label{lem:fixed-existence}
For every $h>0$ and $b>1$, the minimum of $\Acal_{h,b}$ over $\Qcal_b$ is attained.
\end{lemma}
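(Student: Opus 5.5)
We would run the direct method in the state variable $Y(c)=\int_1^c q(s)\dd s$ rather than in $q$. First we note that the infimum is finite. The constant control $q\equiv1$ gives a finite action, and $\Acal_{h,b}\ge0$, so we may take a minimizing sequence $q_n\in\Qcal_b$ with $\Acal_{h,b}(q_n)\downarrow\inf\Acal_{h,b}=:\mathcal{I}<\infty$. Put $Y_n(c)=\int_1^c q_n$. Since $0\le q_n\le1$, the family $(Y_n)$ is $1$-Lipschitz with $Y_n(1)=0$. Arzel\`a--Ascoli and Banach--Alaoglu in $L^\infty(1,b)$ then give a subsequence with $Y_n\to Y$ uniformly and $q_n\rightharpoonup^* q$. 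The limit satisfies $Y'=q$ and $0\le q\le1$ almost everywhere; at this stage $q$ lies only in the closure, where $q=0$ is allowed.

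Next we prove lower semicontinuity through \cref{clm:perspective-integral}, applied with $a_n=N_n(c)/c^2$ and $v_n=q_n$, where $N_n=Y_n+h(c^2-1)$. For a fixed $\theta\in L^\infty_+$ we have
\[
\int_1^b\Bigl(2\sqrt{a_n\theta}-\theta q_n\Bigr)\dd c\longrightarrow\int_1^b\Bigl(2\sqrt{a\theta}-\theta q\Bigr)\dd c .
\]
The first term converges because $a_n\to a$ uniformly, $a_n$ is bounded, and $\sqrt{\cdot}$ is uniformly continuous on bounded sets. The second term converges by weak-star convergence, since $\theta\in L^\infty\subset L^1$. Each such integral is at most $\Acal_{h,b}(q_n)$. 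Taking the limit and then the supremum over $\theta$ gives
\[
\int_1^b\frac{N(c)}{q(c)c^2}\dd c\le\liminf_n\Acal_{h,b}(q_n)=\mathcal{I}.
\]

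The limit must also be admissible, meaning $q>0$ almost everywhere. This is the one point where the reciprocal singularity matters. On $(1,b)$ we have $N(c)\ge h(c^2-1)>0$. By the convention $a/v=+\infty$ on $\{a>0,v=0\}$, finiteness of the left-hand side above forces $\{q=0\}$ to be a null set. Redefining $q$ on that null set to take values in $(0,1]$ changes neither $Y$ nor the action, so $q\in\Qcal_b$ and $\Acal_{h,b}(q)\le\mathcal{I}$. Hence $q$ is a minimizer.

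The main obstacle is the lower-semicontinuity step, because the integrand $a/v$ is only lower semicontinuous and is infinite at $v=0$. Weak-star convergence of $q_n$ alone would not suffice without the joint convexity that the perspective representation supplies. The strong (uniform) convergence of $Y_n$ is what makes the $\sqrt{a\theta}$ term converge, and it is the reason for working with $Y$ rather than with $q$ directly.
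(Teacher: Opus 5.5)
Your proposal is correct and follows the paper's argument essentially step for step. Both run the direct method in $Y$, using Arzel\`a--Ascoli together with weak-star compactness of $Y'$, obtain lower semicontinuity from the integral perspective representation, and use positivity of $N(c)/c^2$ on $(1,b)$ to force $q>0$ almost everywhere. Your extra checks, that the infimum is finite via $q\equiv1$ and that $q$ may be redefined on a null set, are harmless details the paper leaves implicit.
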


\begin{proof}
Introduce the weak-star compact closure
\[
\Ycal_b:=\{Y\in W^{1,\infty}(1,b):Y(1)=0,\ 0\le Y'\le1\text{ a.e.}\}.
\]
For $c>1$, set
\[
a(c,Y):=\frac{Y(c)+h(c^2-1)}{c^2}>0.
\]
The extended action is $\int a(c,Y)/Y'(c)\dd c$, with value $+\infty$ wherever $Y'=0$ and $a>0$.

Let $(Y_n)$ be a minimizing sequence. By the Arzel\`a--Ascoli theorem and weak-star compactness of the unit ball of $L^\infty$, after passing to a subsequence,
\[
Y_n\to Y\ \text{uniformly},\qquad
Y_n'\stackrel{*}{\rightharpoonup}Y'\ \text{in }L^\infty,
\]
for some $Y\in\Ycal_b$. The functions $a(\cdot,Y_n)$ converge uniformly to $a(\cdot,Y)$. For every bounded nonnegative measurable $\theta$, dominated convergence in the square-root term and weak-star convergence in the linear term yield
\begin{align*}
&\int_1^b\left(2\sqrt{a(c,Y_n)\theta(c)}-\theta(c)Y_n'(c)\right)\dd c\\
&\hspace{3cm}\longrightarrow
\int_1^b\left(2\sqrt{a(c,Y)\theta(c)}-\theta(c)Y'(c)\right)\dd c.
\end{align*}
Taking the supremum over bounded $\theta$ and using \cref{clm:perspective-integral} gives
\[
\Acal_{h,b}(Y')\le\liminf_{n\to\infty}\Acal_{h,b}(Y_n').
\]
Thus $Y$ minimizes the extended action. Since $a(c,Y)>0$ for every $c>1$, finite action forces $Y'(c)>0$ almost everywhere; hence $q=Y'$ belongs to $\Qcal_b$.
\end{proof}

\begin{lemma}[Free-endpoint existence]\label{lem:free-existence}
For every $h>0$, the supremum defining $V(h)$ is attained by some pair $(b,q)$ with $b>1$ and $q\in\Qcal_b$.
\end{lemma}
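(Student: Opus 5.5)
We reduce to the fixed-endpoint existence of \cref{lem:fixed-existence} combined with a compactness argument in the endpoint $b$. First note that $V(h)\ge0$ and that $V(h)>0$: for $b$ slightly above one with $q\equiv1$, the running reward $1/c-N(c)/c^2$ equals $1/c-(c-1+h(c^2-1))/c^2$, which is positive near $c=1$. Hence a maximizing sequence $(b_n,q_n)$ with $\K_h(b_n,q_n)\to V(h)>0$ may be assumed to satisfy $\K_h(b_n,q_n)\ge V(h)/2$.

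Next we confine $b_n$ to a compact subinterval of $(1,\infty)$. Since $Y\ge0$ and $q\le1$,
\[
\K_h(b,q)\le\log b-h\int_1^b(1-c^{-2})\dd c=\log b-h(b+b^{-1}-2),
\]
and the right-hand side tends to $-\infty$ as $b\to\infty$, so $b_n\le b_{\max}$. Also $\K_h(b,q)\le\log b$, so $\log b_n\ge V(h)/2$, i.e.\ $b_n\ge b_{\min}:=e^{V(h)/2}>1$. Passing to a subsequence, $b_n\to\bar b\in[b_{\min},b_{\max}]$.

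To pass to the limit with varying endpoints, replace each $q_n$ by the fixed-endpoint minimizer on $(1,b_n)$ from \cref{lem:fixed-existence}, which only increases $\K_h$. We then show that $b\mapsto\mathcal{V}(b):=\sup_{q\in\Qcal_b}\K_h(b,q)=\log b-\min\Acal_{h,b}$ is upper semicontinuous, so that $V(h)=\lim\mathcal{V}(b_n)\le\mathcal{V}(\bar b)$, which is attained at $\bar b$ by \cref{lem:fixed-existence}. For the upper semicontinuity we rescale each problem to $[0,1]$ via $c=1+(b_n-1)\sigma$ and set $\widehat Y_n(\sigma):=Y_n(c)/(b_n-1)$, so that $0\le\widehat Y_n'\le1$ and $Y_n'(c)=\widehat Y_n'(\sigma)$. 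The action then reads
\[
\Acal_{h,b_n}=\int_0^1\frac{(b_n-1)\widehat Y_n(\sigma)+h(c_n(\sigma)^2-1)}{c_n(\sigma)^2\,\widehat Y_n'(\sigma)}(b_n-1)\dd\sigma ,
\]
with $c_n(\sigma)=1+(b_n-1)\sigma$. Arzel\`a--Ascoli and weak-star compactness give $\widehat Y_n\to\widehat Y$ uniformly and $\widehat Y_n'\stackrel{*}{\rightharpoonup}\widehat Y'$. The numerators $a_n(\sigma)$ converge uniformly to the corresponding $\bar a(\sigma)$ built from $\bar b$ and $\widehat Y$. The perspective representation \cref{clm:perspective-integral} then gives lower semicontinuity exactly as in \cref{lem:fixed-existence}: for each bounded $\theta\ge0$ we have $\int(2\sqrt{a_n\theta}-\theta\widehat Y_n')\to\int(2\sqrt{\bar a\theta}-\theta\widehat Y')$, and taking the supremum over $\theta$ yields $\liminf\Acal_{h,b_n}\ge\Acal_{h,\bar b}(\bar q)$, where $\bar q(c):=\widehat Y'((c-1)/(\bar b-1))$. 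Since $\log b_n\to\log\bar b$, we get $\K_h(\bar b,\bar q)\ge\limsup\K_h(b_n,q_n)=V(h)$. Finite action together with $\bar a>0$ on $(0,1]$ forces $\bar q>0$ almost everywhere, so $\bar q\in\Qcal_{\bar b}$, and it is a maximizer.

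The main obstacle is the joint limit in the endpoint and the control. After rescaling, the weight $\bar a$ vanishes at $\sigma=0$, and the convergence of $2\sqrt{a_n\theta}$ must be justified there. Since $a_n$ is uniformly bounded and converges uniformly, dominated convergence suffices. The essential point is to keep $\bar b>1$, which the lower bound $b_{\min}$ guarantees; this is what prevents the degenerate limit in which the rescaling collapses.
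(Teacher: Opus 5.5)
Your proposal is correct and follows essentially the paper's proof: the same coercivity bound $\log b-h(b+b^{-1}-2)$ and the bound $\K_h\le\log b$ confine the endpoints to a compact subinterval of $(1,\infty)$, and then rescaling to $[0,1]$ with the perspective representation gives lower semicontinuity of the action jointly in endpoint and control. Your explicit bound $b_n\ge e^{V(h)/2}$ is a quantitative form of the paper's remark about $b\downarrow1$, and replacing $q_n$ by fixed-endpoint minimizers is harmless but not needed, since your rescaling argument works directly on the original maximizing sequence.
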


\begin{proof}
Since $q\le1$ and $Y\ge0$,
\begin{equation}\label{eq:endpoint-coercive}
\K_h(b,q)\le\log b-h\int_1^b(1-c^{-2})\dd c
=\log b-h(b+b^{-1}-2).
\end{equation}
The right-hand side tends to $-\infty$ as $b\to\infty$. No positive maximizing sequence can have $b\downarrow1$ because $\K_h(b,q)\le\log b$. A positive value does exist: for $q\equiv1$, the running reward
\[
\frac1c-\frac{c-1+h(c^2-1)}{c^2}
\]
tends to one as $c\downarrow1$, and is therefore positive on a sufficiently short interval.

It remains to treat endpoints in a compact subinterval of $(1,\infty)$. Let $b_n\to b>1$, put $d_n=b_n-1$, and reparameterize
\[
Z_n(x):=\frac{Y_n(1+d_nx)}{d_n},\qquad0\le x\le1.
\]
Then $Z_n(0)=0$ and $0\le Z_n'\le1$. After a subsequence, $Z_n\to Z$ uniformly and $Z_n'\stackrel{*}{\rightharpoonup}Z'$ in $L^\infty(0,1)$. The action is
\begin{equation}\label{eq:rescaled-action}
\Acal_{h,b_n}(q_n)
=d_n\int_0^1
\frac{d_nZ_n(x)+h((1+d_nx)^2-1)}{Z_n'(x)(1+d_nx)^2}\dd x.
\end{equation}
The numerator coefficient converges uniformly, while $Z_n'\stackrel{*}{\rightharpoonup}Z'$ in $L^\infty(0,1)$. Applying \cref{clm:perspective-integral} on $[0,1]$ proves lower semicontinuity of \cref{eq:rescaled-action}. Since $\log b_n\to\log b$, the objective is upper semicontinuous along a maximizing sequence, and the limit pair attains $V(h)$.
\end{proof}

\section{Threshold equivalence}\label{app:threshold-equivalence}

\begin{proof}[Proof of \cref{prop:decision-equivalence}]
If $V(h)\le h$, normalize an arbitrary instance with $M>0$ to $M=1$. By \cref{cor:universal-reduction},
\[
I-hm\le1+V(h)\le1+h.
\]
Undoing the scaling and applying \cref{lem:linear-target} gives the left-hand side of \cref{eq:decision-equivalence}. The case $M=0$ has ratio one.

Conversely, let $r=r(h)$ and take the unique optimizing control from \cref{lem:trajectory-reconstruction}. The monotonicity calculation \cref{eq:q-reconstruction-monotone} realizes it as a bounded buyer body and a bounded seller exactly as in \cref{eq:p-c-extremal,eq:B0-survival,eq:Astar}. Denote their seller mean and first-best gain by $m_h$ and $G_h$. The calculation in \cref{eq:Gstar-explicit,eq:mstar-explicit,eq:Gstar-control} is parameter-uniform and gives
\begin{equation}\label{eq:generic-G}
G_h-hm_h=V(h).
\end{equation}
Append probability $\varepsilon$ at buyer value $1/\varepsilon$. The same bounds as \cref{eq:M-epsilon-sandwich,eq:I-epsilon} give $M_{h,\varepsilon}\to1$ and $I_{h,\varepsilon}\to1+G_h$. Hence the limiting welfare ratio is
\begin{equation}\label{eq:generic-ratio}
R_h=\frac{1+m_h}{1+m_h+G_h}
=\frac{1+m_h}{1+V(h)+(1+h)m_h}.
\end{equation}
A direct subtraction yields
\begin{equation}\label{eq:generic-sign}
R_h-\frac1{1+h}
=\frac{h-V(h)}{(1+h)(1+V(h)+(1+h)m_h)}.
\end{equation}
If $V(h)>h$, then $R_h<1/(1+h)$, so admissible members of the limiting family have ratio below $1/(1+h)$ for all sufficiently small $\varepsilon$. The left-hand side of \cref{eq:decision-equivalence} therefore fails.
\end{proof}

\section{Put and call transform characterizations}\label{app:transforms}

\begin{lemma}[Put-transform characterization]\label{lem:put-characterization}
Let $A:[0,\infty)\to[0,\infty)$ be convex with $A(0)=0$. Suppose its right derivative $F=A'_+$ is nondecreasing, right-continuous, takes values in $[0,1]$, and satisfies $F(p)\to1$ as $p\to\infty$. If $\int_0^\infty(1-F(p))\dd p<\infty$, then there is a nonnegative finite-mean random variable $S$ with CDF $F$ such that
\[
A(p)=\E[(p-S)_+]
\quad\text{for all }p\ge0.
\]
Moreover,
\[
\E[S]=\int_0^\infty(1-F(p))\dd p=\lim_{p\to\infty}(p-A(p)).
\]
\end{lemma}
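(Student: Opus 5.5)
The plan is to define $S$ directly through its distribution function and then identify $A$ with the put transform by integrating $F$. Because $F$ is nondecreasing, right-continuous, takes values in $[0,1]$, and tends to $1$ at infinity, it is a CDF on $\R$ once we set $F(p)=0$ for $p<0$. This extension is still nondecreasing and right-continuous, including at $0$, since $F(0)\ge0$. So there is a Borel probability measure $\mu$ with $\mu((-\infty,p])=F(p)$, and it gives no mass to $(-\infty,0)$. Let $S\sim\mu$. Then $S\ge0$ almost surely and $\Pp(S\le p)=F(p)$.

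Next, we identify $A$. A convex function on $[0,\infty)$ is absolutely continuous on compact subintervals, and its right derivative equals its derivative almost everywhere. Using $A(0)=0$, we get $A(p)=\int_0^pF(t)\dd t$ for every $p\ge0$. One point needs care: convexity only gives continuity on the open interval. However, the right derivative at $0$ is finite and equals $F(0)\in[0,1]$, so $A$ is Lipschitz near $0$ and continuous there. On the other side, Tonelli's theorem gives
\[
\E[(p-S)_+]=\E\!\left[\int_0^p\one\{S\le t\}\dd t\right]=\int_0^p\Pp(S\le t)\dd t=\int_0^pF(t)\dd t .
\]
The two expressions agree, so $A(p)=\E[(p-S)_+]$.

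For the moment formula, the layer-cake identity $\E[S]=\int_0^\infty\Pp(S>t)\dd t=\int_0^\infty(1-F(t))\dd t$ is finite by hypothesis. Then
\[
p-A(p)=\int_0^p(1-F(t))\dd t .
\]
The integrand is nonnegative, so monotone convergence gives $p-A(p)\uparrow\E[S]$ as $p\to\infty$.

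None of these steps is a serious obstacle. The only delicate points are the endpoint continuity of $A$ at $0$ and extending $F$ to a full CDF on $\R$; both are handled above.
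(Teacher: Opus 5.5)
Your proposal is correct and follows essentially the same route as the paper: treat $F$ as a CDF, use convexity and $A(0)=0$ to write $A(p)=\int_0^pF(t)\dd t$, identify this with $\E[(p-S)_+]$ by Tonelli, and obtain the mean from the tail-integral formula together with $p-A(p)=\int_0^p(1-F(t))\dd t$. Your extra remarks on continuity of $A$ at $0$ and on extending $F$ by zero to the negative axis only fill in details the paper leaves implicit.
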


\begin{proof}
The assumptions make $F$ a CDF on $[0,\infty)$. Convexity and $A(0)=0$ give $A(p)=\int_0^pF(t)\dd t$. For a random variable with CDF $F$, Tonelli gives
\[
\E[(p-S)_+]=\int_0^p\Pp(S\le t)\dd t=\int_0^pF(t)\dd t.
\]
The mean identities follow from the tail integral formula and $p-A(p)=\int_0^p(1-F(t))\dd t$.
\end{proof}

\begin{lemma}[Call-transform characterization]\label{lem:call-characterization}
Let $C:[0,\infty)\to[0,\infty)$ be convex, nonincreasing, and satisfy $C(p)\to0$. Suppose $-C'_+(p)\in[0,1]$. Then $Q(p):=-C'_+(p)$ is the survival function of a nonnegative integrable random variable $B$, and $C(p)=\E[(B-p)_+]$.
\end{lemma}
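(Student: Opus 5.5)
We want to show: if $C$ is convex, nonincreasing, $C(p)\to0$, and $-C'_+\in[0,1]$, then $Q:=-C'_+$ is the survival function of a nonnegative integrable $B$ with $C(p)=\E[(B-p)_+]$. The plan mirrors the proof of \cref{lem:put-characterization}: identify $Q$ as a survival function, then recover $C$ by integrating $Q$ and applying Tonelli.

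\emph{Step 1: $Q$ is a survival function.} Convexity of $C$ makes the right derivative $C'_+$ nondecreasing and right-continuous. Hence $Q=-C'_+$ is nonincreasing and right-continuous, with values in $[0,1]$ by hypothesis.

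\emph{Step 2: $Q(p)\to0$ as $p\to\infty$.} Since $Q$ is monotone, the limit $\ell:=\lim_{p\to\infty}Q(p)\ge0$ exists. If $\ell>0$, then $C(p)\le C(0)-\ell p$ for every $p$, so $C$ would become negative. This contradicts $C\ge0$, hence $\ell=0$.

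\emph{Step 3: construct $B$.} Set $F(p):=1-Q(p)$ on $[0,\infty)$. By Steps 1–2 this is a right-continuous CDF on $[0,\infty)$, so there is a nonnegative random variable $B$ with $\Pp(B>p)=Q(p)$. Any mass at $0$ equals $F(0)=1-Q(0)$, which is allowed since $B\ge0$.

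\emph{Step 4: identify $C$.} A convex function is absolutely continuous on compact intervals, with derivative equal to $C'_+$ almost everywhere. Therefore
\[
C(p)-C(x)=\int_p^x Q(t)\dd t\qquad(p\le x).
\]
Letting $x\to\infty$ and using $C(x)\to0$ with monotone convergence gives
\[
C(p)=\int_p^\infty Q(t)\dd t .
\]
By Tonelli, $\E[(B-p)_+]=\int_p^\infty\Pp(B>t)\dd t$, which equals $C(p)$. Taking $p=0$ gives $\E[B]=C(0)<\infty$, so $B$ is integrable.

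The only step needing care is Step 2, since the hypotheses do not directly say $Q\to0$; the argument combines nonnegativity of $C$ with the slope bound. The remaining steps are standard.
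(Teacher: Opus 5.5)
Your proof is correct and follows essentially the same route as the paper's: $Q$ is monotone by convexity, $Q\to0$ by nonnegativity of $C$, and $C(p)=\int_p^\infty Q(t)\dd t$ by absolute continuity together with $C(x)\to0$, which Tonelli identifies with $\E[(B-p)_+]$. You also make explicit several points the paper leaves implicit: right-continuity of $Q$, the construction of $B$ from $1-Q$, and integrability via $\E[B]=C(0)<\infty$.
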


\begin{proof}
Convexity makes $C'_+$ nondecreasing, hence $Q$ nonincreasing. Since $C$ is finite and tends to zero, $Q(p)\to0$. The identity $C(p)=\int_p^\infty Q(t)\dd t$ follows by absolute continuity and the boundary condition at infinity. This is the integrated-survival representation of a call transform.
\end{proof}

\section{Closed-form extremal distributions}\label{app:explicit-extremal}

On the interior arc, \cref{eq:c-x-extremal,eq:qstar-parametric} imply
\begin{equation}\label{eq:dp-dx}
\frac{\dd c}{q_\star(c)}=\frac{\dd x}{h}.
\end{equation}
Let $p_s:=b-s=s(r-1)$ denote the length of the terminal $q=1$ arc. Then for $0\le x\le a$,
\begin{equation}\label{eq:p-x-explicit}
p(c(x))=p_s+\frac{a-x}{h}.
\end{equation}
In particular,
\begin{equation}\label{eq:L-explicit}
L=s(r-1)+\frac ah.
\end{equation}
The bounded buyer survival is
\begin{equation}\label{eq:Q0-explicit}
Q_0(p)=
\begin{cases}
1,&0\le p\le p_s,\\[1mm]
hc(x(p))\dfrac{x(p)}{d(x(p))},&p_s<p<L,\\[2mm]
0,&p\ge L,
\end{cases}
\end{equation}
where $x(p):=a-h(p-p_s)$. The seller CDF is
\begin{equation}\label{eq:Fstar-explicit}
F_\star(p)=\frac1{c(p)}-Q_0(p)\int_{c(p)}^b\frac{\dd u}{q_\star(u)u^2},
\qquad0\le p<L,
\end{equation}
with $F_\star(p)=1$ for $p\ge L$. These formulas define the distributions without any limiting or optimization operation; only the final buyer tail mixture varies with $\varepsilon$. For scale,
\[
b=2.0555924272089955\ldots,
\qquad
L=1.7078760681197155\ldots.
\]

\section{Certified numerical evaluation}\label{app:numerical-certificate}

Let
\begin{equation}\label{eq:gamma}
\gamma(r):=\sqrt{\frac{3-r}{r+1}}.
\end{equation}
Completing the square in \cref{eq:Theta-definition} gives
\begin{equation}\label{eq:Theta-closed}
\Theta(r)=\frac2{\gamma(r)}
\left[
\arctan\left(\frac{r-2}{r\gamma(r)}\right)
+\arctan\left(\frac1{\gamma(r)}\right)
\right].
\end{equation}
The analytic characterization is exact: $r_\star$ is the unique root of \cref{eq:root-equation}, and $\alpha_{\mathrm{FP}}$ is given by \cref{eq:alpha-formula}.

Interval arithmetic applied to \cref{eq:Theta-closed,eq:root-equation} gives
\begin{equation}\label{eq:r-interval}
1.53963634239276<r_\star<1.53963634239278,
\end{equation}
and
\begin{equation}\label{eq:alpha-interval}
0.73802433573449<\alpha_{\mathrm{FP}}<0.73802433573450.
\end{equation}
Thus
\begin{equation}\label{eq:alpha-decimal}
\alpha_{\mathrm{FP}}=0.7380243357344945\ldots.
\end{equation}
The interval bounds above are only a numerical evaluation of the analytic formulas. The theorem itself is the exact root characterization in \cref{eq:root-equation,eq:alpha-formula}, together with the uniqueness proof in \cref{lem:unique-fixed-point}.

\end{document}